\newif\ifincludeappendixx
\definecolor{my-full-blue}{HTML}{1F77B4}
\definecolor{my-full-orange}{HTML}{FF7F0E}
\definecolor{my-full-green}{HTML}{2CA02C}
\definecolor{my-full-red}{HTML}{d62728}
\definecolor{my-full-purple}{HTML}{9467bd}
\definecolor{my-full-brown}{HTML}{8c564b}
\definecolor{my-full-pink}{HTML}{e377c2}
\definecolor{my-full-gray}{HTML}{7f7f7f}
\definecolor{my-full-olive}{HTML}{bcbd22}
\definecolor{my-full-cyan}{HTML}{17becf}
\colorlet{my-blue}{my-full-blue!30}
\colorlet{my-orange}{my-full-orange!30}
\colorlet{my-green}{my-full-green!30}
\colorlet{my-red}{my-full-red!30}
\colorlet{my-purple}{my-full-purple!30}
\colorlet{my-brown}{my-full-brown!30}
\colorlet{my-pink}{my-full-pink!30}
\colorlet{my-gray}{my-full-gray!30}
\colorlet{my-olive}{my-full-olive!30}
\colorlet{my-cyan}{my-full-cyan!30}
\definecolor{blue-node}{HTML}{1a237e}
\colorlet{blue-node-text}{white}
\definecolor{red-node}{HTML}{b71c1c}
\colorlet{red-node-text}{white}
\colorlet{green-node}{my-full-green}
\colorlet{green-node-text}{black}
\definecolor{ckeyword}{HTML}{7F0055}
\definecolor{ccomment}{HTML}{3F7F5F}
\definecolor{cstring}{HTML}{2A0099}
\lstdefinestyle{numbers}{
	numbers=left,
	%
	framexleftmargin=20pt,
	%
	numberstyle=\tiny,
	%
	firstnumber=auto,
	%
	numbersep=1em,
	%
	xleftmargin=2em
}
\lstdefinestyle{layout}{
	frame=none,
	%
	captionpos=b,
}
\lstdefinestyle{comment-style}{
	morecomment=[l]//,
	%
	morecomment=[s]{/*}{*/},
	%
	commentstyle={\color{ccomment}\itshape},
}
\lstdefinestyle{string-style}{
	%
	morestring=[b]",%
	%
	morestring=[b]',%
	%
	stringstyle={\color{cstring}},
	%
	showstringspaces=false,%
}
\lstdefinestyle{keyword-style}{
	%
	keywordstyle={\ttfamily\bfseries},
	%
	morekeywords={
		function,
		constructor,
		int,
		bool,
		return,
		returns,
		uint
	},
	%
	morekeywords = [2]{},
	keywordstyle = [2]{\text},
	%
	%
	sensitive=true,
}
\lstdefinestyle{input-encoding}{
	inputencoding=utf8,
	%
	%
	extendedchars=true,
	%
	%
	literate=
	{ℝ}{$\reals$}1%
	{→}{$\rightarrow$}1%
	{α}{$\alpha$}1%
	{β}{$\beta$}1%
	{λ}{$\lambda$}1%
	{θ}{$\theta$}1%
	{ϕ}{$\phi$}1%
}
\lstdefinestyle{escaping}{
	%
	moredelim={**[is][\color{blue}]{\%}{\%}},
	%
	%
	escapechar=|,
	%
	%
	mathescape=true
}
\lstdefinestyle{default-style}{
	%
	basicstyle=\fontencoding{T1}\ttfamily\footnotesize,
	style=numbers,
	style=layout,
	style=comment-style,
	style=string-style,
	style=keyword-style,
	style=input-encoding,
	style=escaping,
	%
	%
	%
	tabsize=2,
	%
	upquote=true
}
\lstdefinelanguage{BASIC}{
	language=C++,
	style=default-style
}[keywords,comments,strings]%
\tikzset{
    big arrow fill/.code={\definecolor{big arrow fill}{named}{#1}},
    big arrow fill={white},
    big arrow/.style 2 args={
        line width=#1, 
        -{Triangle[length={.5*(#1+#2)},width={#1+#2}]},
        postaction={
            -{Triangle[length={.5*(#1+#2)-sqrt(2)*1pt},width={#1+#2-sqrt(2)*2pt}]},
            draw=big arrow fill,
            line width={#1-1pt},
            shorten >={sqrt(2)*.5pt},
            shorten <=.5pt
        },
        every node/.style={
            text width=#1,
            align=center
        }
    }
}
\tikzstyle{gate} = [
\tikzstyle{tightgate} = [
\tikzstyle{ttightgate} = [
\tikzstyle{tightqballoc} = [
\tikzstyle{qballoc} = [
\tikzstyle{gtightgate} = [
\tikzstyle{copygate} = [
\tikzstyle{init} = [
\tikzstyle{tightinit} = [
\tikzstyle{component} = [
\tikzstyle{tightcomponent} = [
\tikzstyle{edge-label} = [
\tikzstyle{edge} = [line width=0.15mm]
\tikzstyle{ctrl} = [edge, Circle-{Latex[length=1mm,width=1mm]}]
\tikzstyle{targ} = [edge, -{Latex[length=1mm,width=1mm]}, line width=0.15mm]
\tikzstyle{avail} = [edge, -{Latex[length=1mm,width=1mm]}, dashed, draw=gray,
\tikzstyle{several} = [edge, -{Latex[length=1mm,width=1mm]}, densely dotted, line width=0.2mm]
\tikzstyle{gedge} = [color=gray, line width=0.15mm]
\tikzstyle{gctrl} = [gedge, Circle-{Latex[length=1mm,width=1mm]}]
\tikzstyle{gtarg} = [gedge, -{Latex[length=1mm,width=1mm]}, line width=0.15mm]
\tikzstyle{cc-edge} = [edge, -{Latex[open,length=1.5mm,width=1.5mm]}, densely dotted, line width=0.3mm]
\tikzstyle{steps} = [edge, -{Latex[length=1mm,width=1mm]}, dashed]
\tikzstyle{split} = [
\tikzset{split/.default=0.8cm}
\tikzstyle{arrow-shape} = [
\tikzstyle{qubit} = [
\tikzstyle{circuit-gate} = [
\tikzstyle{circuit-noop} = [
\tikzstyle{circuit-ctrl} = [
\tikzset{meter/.append style={
    rectangle,
	anchor=center,
    fill=gray!30,
    font=\vphantom{A},
    minimum size=0.4cm,
    line width=.7,
    path picture={
        \draw[black]
            ([shift={(.06,.2)}]path picture bounding box.south west)
            to[bend left=80]
            ([shift={(-.06,.2)}]path picture bounding box.south east);
        \draw[black,-{Latex[length=0.8mm,width=1.0mm]}]
            ([shift={(0,.18)}]path picture bounding box.south)
            --
            ([shift={(.14,-.11)}]path picture bounding box.north);
    }
}}
\tikzstyle{square-box}= [
\def\tikzRowSepDefault{0.5mm}
\def\tikzColumnSepDefault{0.5mm}
\def\tikzWireLeftDefault{0.6cm}
\def\tikzWireRightDefault{0.35cm}
\def\tikzWireLabelScaleDefault{0.9}
\def\tikzWireLabelShiftXDefault{-0.1}
\def\tikzWireLabelShiftYDefault{0.17}
\newcommand{\silent}[1]{{\color{black!40}#1}}
\DeclareSIUnit[number-unit-product={}]{\percent}{\%}
\newcommand{\bools}{\{0, 1\}}
\newcommand{\fme}[1]{\llbracket #1 \rrbracket} 
\newcommand{\depgraph}{G}
\newcommand{\ctrlEdge}{\mathbin{\hbox{$\bullet$}\kern-1.5pt\hbox{$\to$}}}
\newcommand{\targEdge}{\to}
\newcommand{\availEdge}{\dashrightarrow} 
\newcommand{\inputNode}{\hbox{$\square$}\kern-1.5pt\hbox{$\to$}}
\newcommand{\outputNode}{\hbox{$\--$}\kern-1.5pt\hbox{$\square$}}
\newcommand{\unqomp}{Unqomp\xspace}
\newcommand{\reqomp}{Reqomp\xspace}
\newcommand{\squarerelwork}{\textsc{Square}\xspace}
\newcommand{\extended}[1]{E(#1)}
\newcommand{\legendboxbigwtxt}[2]{%
\raisebox{0.2em}{%
\fcolorbox{white!0}{#1}{#2}%
}%
}
\newcommand{\gatebox}[1]{\legendboxbigwtxt{gray!20}{#1}}
\algrenewcommand\textproc{\text}
\newcounter{ALG@line@recordedLineNumber}
\newcommand{\recordLineNumber}[0]{\setcounter{ALG@line@recordedLineNumber}{\value{ALG@line}}}
\newcommand{\continueLineNumber}[0]{\setcounter{ALG@line}{\value{ALG@line@recordedLineNumber}}}
\DeclareMathOperator*{\argmin}{arg\,min}
\renewcommand{\i}[0]{{\rm{i}}}
\newcommand{\graph}[0]{{\ensuremath{G}}}
\newcommand{\un}[1]{\ensuremath{\overline{#1}}}
\newcommand{\graphu}[0]{{\ensuremath{\un{G}}}}
\newcommand{\graphval}[0]{{\ensuremath{g^{\mathit{val}}}}}
\newcommand{\hil}[1]{\mathcal{H}_{#1}}
\newcommand{\ganc}[0]{\tvar{G_\text{anc}}}
\newcommand{\gadeps}[0]{\tvar{G_\text{ancdeps}}}
\newcommand{\na}[0]{\tvar{n_\text{anc}}}
\newcommand{\nq}[0]{\tvar{n_\text{qbits}}}
\newcommand{\true}[0]{\textbf{true}}
\newcommand{\false}[0]{\textbf{false}}
\newcommand{\complex}[0]{\mathbb{C}}
\newcommand{\copyqb}[1]{{\underline{#1}}}
\newcommand{\copysem}[1]{\llparenthesis #1 \rrparenthesis}
\algrenewcommand\algorithmicfunction{\textbf{func}}
\newcommand{\tvar}[1]{\ensuremath{\mathit{#1}}}
\newcommand{\tfield}[1]{\ensuremath{\text{#1}}}
\newcommand{\tfunc}[1]{\textproc{#1}}
\newcommand{\tprim}[1]{\text{#1}}
\newcommand{\tstring}[1]{\text{#1}}
\newcommand{\graphPrim}[3]{\ensuremath{\tprim{#1}_{#2}(#3)}}
\renewcommand\theHALG@line{\thefigure.\arabic{ALG@line}}
\newcommand{\secref}[1]{\S#1}
\newcommand{\crefrangeconjunction}{--}
\crefname{listing}{Lst.}{listings}
\crefname{line}{Lin.}{Lin.}
\crefname{appendix}{App.}{App.}
\crefname{mydef}{Def.}{Def.}
\crefname{mylem}{Lem.}{Lem.}
\newcommand{\tabref}[1]{Tab.~#1}
\newcommand{\appref}[1]{%
	\ifbool{includeappendix}{\cref{#1}}{the appendix}%
}
\newcommand{\Appref}[1]{%
	\ifbool{includeappendix}{\cref{#1}}{The appendix}%
}
\begin{document}

\title{Reqomp: Space-constrained Uncomputation for Quantum Circuits}

\author{Anouk Paradis}
\affiliation{ETH Zurich, Switzerland}
\email{anouk.paradis@inf.ethz.ch}

\author{Benjamin Bichsel}
\affiliation{ETH Zurich, Switzerland}
\email{benjamin.bichsel@inf.ethz.ch}

\author{Martin Vechev}
\affiliation{ETH Zurich, Switzerland}
\email{martin.vechev@inf.ethz.ch}

\maketitle

\begin{abstract}
Quantum circuits must run on quantum computers with tight limits on qubit and gate counts.
To generate circuits respecting both limits, a promising opportunity is
exploiting \emph{uncomputation} to trade qubits for gates.

We present \reqomp, a method to automatically synthesize correct and efficient uncomputation of ancillae while respecting hardware constraints.
For a given circuit, \reqomp can offer a wide range of trade-offs between tightly constraining qubit count or gate count.

Our evaluation demonstrates that \reqomp can significantly reduce the number of required ancilla qubits by up to 96\%. On 80\% of our benchmarks, the ancilla qubits required can be reduced by at least 25\% while never incurring a gate count increase beyond 28\%.

\end{abstract}



\section{Introduction}
\label{sec:intro}
Quantum computers will remain tightly resource-constrained for the foreseeable
future, both in terms of available qubits and number of operations applicable
before an error occurs.
Running quantum programs hence requires compiling them to circuits with a limited qubit and gate count.
A promising opportunity to achieve this goal is to exploit the need for
\emph{uncomputation} as an opening to trade qubits for gates.

\paragraph{What is Uncomputation?}
Just as classical programs, quantum circuits often leverage temporary values, called \emph{ancilla variables}.
Whereas classical programs can discard temporary values whenever convenient,
temporary values in quantum circuits must be carefully managed to avoid
side effects on other values through
entanglement~\cite[\secref{3}]{paradis_unqomp_2021}.
Uncomputation is the process of preventing such side effects by reverting
ancilla variables to state $\ket{0}$ after their last use, thus ensuring that
they are disentangled from the remainder of the state.
For instance, \cref{fig:pb-stmt-big-circ} shows a circuit implementing $CCCCH$: the $H$ gate on qubit $t$ with four control qubits $o$, $p$, $q$, and $r$.
\cref{fig:pb-stmt-big-circ-no} uses three ancillae variables $a$, $b$, $c$,
stored in the respective \emph{ancilla qubits} $u_0$, $u_1$, $u_2$.
%
The first ancilla $a$ holds $o \cdot p$, $b$ holds $o \cdot p \cdot q$, and $c$
holds $o \cdot p \cdot q \cdot r$.
We then use this last ancilla $c$ to control the $H$ gate on $t$, only applying
$H$ if all of $o$, $p$, $q$, and $r$ hold state $\ket{1}$.
In \cref{fig:pb-stmt-big-circ-no}, these ancilla variables are not uncomputed,
and may result in unexpected interactions if this circuit is used as part of a
bigger computation.
They must therefore be uncomputed, as shown in \cref{fig:pb-stmt-big-circ-lazy}:
the operations applied to each of them are reverted at the end of the circuit, ensuring that all ancilla qubits are reset to $\ket{0}$.

\para{Reducing Qubits}
After uncomputing an ancilla variable, its qubit can be reused by another
ancilla variable, therefore reducing the overall number of qubits used by the
circuit.
Sometimes, it is even beneficial to uncompute an ancilla variable (too) early,
allowing its qubit to be reused at the cost of later \emph{recomputing} the
ancilla variable when it is needed again.

\begin{figure}
    \centering

    \begin{subfigure}[t]{0.38\linewidth}
        \centering
        \begin{tikzpicture}
            \input{figures/pb_stmt_circ.tex}
            \node(last-vert) at ($(v0-X-v0-0.west)!1.17!(v0-3.east)$) {};
            \node(txt-vert) at ($(v0-X-v0-0.west)!1!(v0-3.east)$) {};

            \begin{scope}[on background layer]
                \draw [split=0.2cm, color= my-full-red, opacity=0.2]
                (v0-X-v0-0.west) -- 
                (last-vert|- v0-3) -- 
                ($(v0-3.west)!0.5!(v0-3.east)$)--
                (v0-X-v0-0) -- 
                cycle;

                \node (ahoriz) at ($(v0-X-v0-0.west)!0.5!(v0-3.east)$) {};
                \node (avert) at ($(v0-3)!0.5!(s-3)$) {};
                \node[color=my-full-red!70] (a) at (txt-vert |- avert) {$a$};

                \draw [split=0.2cm, color= my-full-green, opacity=0.2]
                (v1-X-v1-0.west) -- 
                (last-vert|- v1-3) -- 
                ($(v1-3.west)!0.5!(v1-3.east)$)--
                (v1-X-v1-0) -- 
                cycle;

                \node (bhoriz) at ($(v1-X-v1-0.west)!0.5!(v1-3.east)$) {};
                \node (bvert) at ($(v1-3)!0.5!(v0-3)$) {};
                \node[color=my-full-green] (b) at (txt-vert |- bvert) {$b$};

                \draw [split=0.2cm, color= my-full-blue, opacity=0.2]
                (v2-X-v2-0.west) -- 
                (last-vert|- v2-H-r-0) -- 
                ($(v2-H-r-0.west)!0.5!(v2-H-r-0.east)$)--
                (v2-X-v2-0) -- 
                cycle;

                \node (choriz) at ($(v2-X-v2-0.west)!0.5!(v2-H-r-0.east)$) {};
                \node (cvert) at ($(v2-H-r-0)!0.5!(v1-3)$) {};
                \node[color=my-full-blue!70] (c) at (txt-vert |- cvert) {$c$};
            \end{scope}
        \end{tikzpicture}
        \caption{No uncomputation.}
        \label{fig:pb-stmt-big-circ-no}
    \end{subfigure}%
    ~%
    \begin{subfigure}[t]{0.6\linewidth}
        \centering
        \begin{tikzpicture}
            \input{figures/pb_stmt_lazy_circ.tex}
            
            \begin{scope}[on background layer]
                \draw [split=0.2cm, color= my-full-red, opacity=0.2]
                (v0-X-v0-0.west) -- 
                (v0-X-v0-1) -- 
                (v0-X-v0-1) -- 
                ($(v0-X-v0-1.west)!0.5!(v0-X-v0-1.east)$)--
                (v0-X-v0-0) -- 
                cycle;

                \node (ahoriz) at ($(v0-X-v0-0.west)!0.5!(v0-X-v0-1.east)$) {};
                \node (avert) at ($(v0-3)!0.5!(s-3)$) {};
                \node[color=my-full-red!70] (a) at (ahoriz |- avert) {$a$};

                \draw [split=0.2cm, color= my-full-green, opacity=0.2]
                (v1-X-v1-0.west) -- 
                (v1-X-v1-1) -- 
                (v1-X-v1-1) -- 
                ($(v1-X-v1-1.west)!0.5!(v1-X-v1-1.east)$)--
                (v1-X-v1-0) -- 
                cycle;

                \node (bhoriz) at ($(v1-X-v1-0.west)!0.5!(v1-X-v1-1.east)$) {};
                \node (bvert) at ($(v1-3)!0.5!(v0-3)$) {};
                \node[color=my-full-green] (b) at (bhoriz |- bvert) {$b$};

                \draw [split=0.2cm, color= my-full-blue, opacity=0.2]
                (v2-X-v2-0.west) -- 
                (v2-X-v2-1) -- 
                (v2-X-v2-1) -- 
                ($(v2-X-v2-1.west)!0.5!(v2-X-v2-1.east)$)--
                (v2-X-v2-0) -- 
                cycle;

                \node (choriz) at ($(v2-X-v2-0.west)!0.5!(v2-X-v2-1.east)$) {};
                \node (cvert) at ($(v2-H-r-0)!0.5!(v1-3)$) {};
                \node[color=my-full-blue!70] (c) at (choriz |- cvert) {$c$};
            \end{scope}
        \end{tikzpicture}
        \caption{3-qubit uncomputation.}
        \label{fig:pb-stmt-big-circ-lazy}
    \end{subfigure}

    \begin{subfigure}[b]{0.75\linewidth}
        \centering
        \begin{tikzpicture}
            \input{figures/pb_stmt_eager_circ.tex}
            \begin{scope}[on background layer]

                \draw [split=0.2cm, color= my-full-red, opacity=0.2]
                (v0-X-v0-0.west) -- 
                (v0-X-v0-1) -- 
                (v0-X-v0-1) -- 
                ($(v0-X-v0-1.west)!0.5!(v0-X-v0-1.east)$)--
                (v0-X-v0-0) -- 
                cycle;

                \node (ahoriz) at ($(v0-X-v0-0.west)!0.40!(v0-X-v0-1.east)$) {};
                \node (avert) at ($(v0-3.north)!0.5!(s-3.south)$) {};
                \node[color=my-full-red!70] (a) at (ahoriz |- avert) {$a$};

                \draw [split=0.2cm, color= my-full-red, opacity=0.2]
                (v0-X-v0-4.west) -- 
                (v0-X-v0-5) -- 
                (v0-X-v0-5) -- 
                ($(v0-X-v0-5.west)!0.5!(v0-X-v0-5.east)$)--
                (v0-X-v0-4) -- 
                cycle;
                
                \node (ahoriz2) at ($(v0-X-v0-4.west)!0.40!(v0-X-v0-5.east)$) {};
                \node (avert2) at ($(v0-3.north)!0.5!(s-3.south)$) {};
                \node[color=my-full-red!70] (a2) at (ahoriz2 |- avert2) {$a$};

                \draw [split=0.2cm, color= my-full-green, opacity=0.2]
                (v1-X-v1-0.west) -- 
                (v1-X-v1-1) -- 
                (v1-X-v1-1) -- 
                ($(v1-X-v1-1.west)!0.5!(v1-X-v1-1.east)$)--
                (v1-X-v1-0) -- 
                cycle;

                \node (bhoriz) at ($(v1-X-v1-0.west)!0.25!(v1-X-v1-1.east)$) {};
                \node (bvert) at ($(v1-3.north)!0.5!(r-H-r-0.south)$) {};
                \node[color=my-full-green] (b) at (bhoriz |- bvert) {$b$};

                \draw [split=0.2cm, color= my-full-blue, opacity=0.2]
                (v0-X-v0-2.west) -- 
                (v0-X-v0-3) -- 
                (v0-X-v0-3) -- 
                ($(v0-X-v0-2.west)!0.5!(v0-X-v0-3.east)$)--
                (v0-X-v0-2) -- 
                cycle;
                
                \node (choriz) at ($(v0-X-v0-2.west)!0.5!(v0-X-v0-3.east)$) {};
                \node (cvert) at ($(v0-3.north)!0.5!(s-3.south)$) {};
                \node[color=my-full-blue!70] (c) at (choriz |- cvert) {$c$};

            \end{scope}
        \end{tikzpicture}
        \caption{2-qubit uncomputation.}
        \label{fig:pb-stmt-big-circ-eager}
    \end{subfigure}~%
    \begin{subfigure}[b]{0.24\linewidth}
        \footnotesize
        \setlength{\tabcolsep}{1.5pt}
        \centering
        \begin{tabular}{@{}ccc@{}}
            Fig. & \#qb & \#gt \\\hline
            \ref{fig:pb-stmt-big-circ-no} & \textcolor{gray}{8} & \textcolor{gray}{4} \\
            \ref{fig:pb-stmt-big-circ-lazy} & 8 & 7 \\
            \ref{fig:pb-stmt-big-circ-eager} & 7 & 9
        \end{tabular}
        \vspace{1.045em}
        \caption{Resources.}
        \label{fig:pb-stmt-resources}
    \end{subfigure}

    \caption{Two uncomputation strategies for $CCCCH$.}
    \label{fig:pb-stmt-big-circ}
\end{figure}


\cref{fig:pb-stmt-big-circ-lazy} simply uncomputes ancilla variables in the
reverse order of their computation, namely $c$--$b$--$a$.
As no ancilla qubit can be reused, \cref{fig:pb-stmt-big-circ-lazy} requires $8$
qubits and $7$ gates overall (see \cref{fig:pb-stmt-resources}).
\cref{fig:pb-stmt-big-circ-eager} shows an alternative implementation of $CCCCH$
leveraging \emph{recomputation}.
It uncomputes ancilla variable $a$ early, making its qubit $u_0$ free for the
computation of ancilla variable $c$. However, uncomputing $b$ requires $a$
again, forcing us to \emph{recompute} it and subsequently uncompute it for a
second time, at the cost of $2$ additional gates.
Overall, \cref{fig:pb-stmt-big-circ-eager} thus trades qubits for
gates compared to \cref{fig:pb-stmt-big-circ-lazy}, as summarized in \cref{fig:pb-stmt-resources}.

\para{Correctness}
Clearly, uncomputation is only useful if it correctly resets ancillae to
$\ket{0}$ without modifying the remainder of the state.
However, this is difficult to achieve, as uncomputing an ancilla may require
some preprocessing on its controls, to ensure that they are in the right state
(for details, see \cref{sec:square}).
Synthesizing the right gates to achieve uncomputation is thus a fundamental
challenge, as evidenced by correctness issues in
\squarerelwork~\cite{ding_square_2020} which attempts to automate the placement
of programmer-defined computation and uncomputation blocks (see
\cref{sec:square}).

\para{Our Work}
We present \reqomp, a method to automatically synthesize and place correct yet
efficient uncomputation while respecting hardware constraints.
\reqomp takes as inputs a quantum circuit $C$ without uncomputation (such as
\cref{fig:pb-stmt-big-circ-no}), its ancilla variables and a space constraint
specifying the number of available ancilla qubits.
If possible, \reqomp extends $C$ to a circuit $\un{C}$ which uncomputes all
ancilla variables, using only the number of ancilla qubits specified.

To ensure \reqomp is both correct (whenever a circuit $\un{C}$ is returned, it
is a correct uncomputation of $C$) and practical (for most input circuits $C$
and size constraints, it finds a circuit $\un{C}$), we build it out of two
distinct components. First, to ensure correctness, we introduce
\emph{well-valued circuit graphs}, which extend circuit
graphs~\cite{paradis_unqomp_2021} (a graph representation of quantum circuits)
by the new concept of \emph{value indices} tracking the state of qubits. We
further formalize \tfunc{evolveVertex}, a method to safely introduce computation
or uncomputation in these graphs. Second, to ensure practicality, we use multiple
heuristics picking which steps of computation and uncomputation should be
inserted (through calls to \tfunc{evolveVertex}) and in which order.

\para{Evaluation}
Our experimental evaluation shows that \reqomp can significantly reduce the
number of required ancilla qubits by up to 96\% compared to the most relevant
previous work \unqomp~\cite{paradis_unqomp_2021}.
Many algorithms are amenable to a significant reduction: for $16$ of $20$
benchmarks, \reqomp can reduce the number of ancilla qubits by $25\%$ compared
to \unqomp, without incurring a gate count increase beyond $28\%$.
For the remaining $4$ examples, \reqomp strictly outperforms \unqomp, albeit by a smaller margin.
In some cases, \reqomp achieves an impressive ancilla qubit reduction at very
low cost: for one example, by $75\%$ at the cost of increasing gate count
by~$17.6\%$.

Note that \unqomp already showed that manual uncomputation is both
error-prone and less efficient than automatically synthesized
uncomputation~\cite[\secref{7}]{paradis_unqomp_2021}.

\pagebreak

\para{Main Contributions}
Our main contributions are:
\begin{itemize}
    \item Well-valued circuit graphs, a graph representation of circuits
    allowing for accurate value tracking, and a method evolveVertex to
    modify them~(\cref{sec:circ-graphs});
    \item \reqomp, a method using well-valued circuit graphs to synthesize and place uncomputation in circuits
    under space constraints~(\cref{sec:heuristics});
    \item A correctness proof for \reqomp~(\cref{sec:correctness});
    \item An implementation\footnote{\reqomp is publicly available at \url{https://github.com/eth-sri/Reqomp}.}
    and evaluation of \reqomp demonstrating it outperforms
    previous work~(\cref{sec:eval}).
\end{itemize}

\section{Background}
\label{sec:background}

We now introduce the necessary background on quantum computation.

\para{Quantum States}
We write the quantum state $\varphi$ of a system with qubits $p$ and $q$ as:
\begin{align}\label{eq:quantum-state}
    \sum_{j=0}^1 \sum_{k=0}^1 \gamma_{j,k} \ket{j}_p \otimes \ket{k}_q
    = \sum_{l \in \{0,1\}^2} \gamma_{l} \ket{l}_{pq} \in \hil{2},
\end{align}
where $\gamma_{j, k}, \gamma_{l} \in \complex$ and $\otimes$ is the Kronecker product.
If $\varphi$ factorizes into
$
    \left(\sum_{j} \gamma_j' \ket{j}_p \right)
    \otimes
    \left(\sum_{k} \gamma_k'' \ket{k}_q \right),
$
$p$ and $q$ are \emph{unentangled}, otherwise they are \emph{entangled}.
Whenever convenient, we omit $\otimes$ and write $\ket{j}$ instead of
$\ket{j}_{p}$.
We use latin letters $\ket{j}$ to denote computational basis states from the
canonical basis $\{\ket{0}, \ket{1}\}$ and greek letters~$\varphi$ to denote
arbitrary states.

\para{Gates}
A gate applies a unitary operation to a quantum state.
Here, we only consider gates with a single target qubit in state $\varphi$ and
potentially multiple control qubits $C = \{c_1, ...\}$ in state $\ket{j}$ for $j
\in \{0,1\}^m$, mapping
$
    \ket{j}_C
    \otimes
    \varphi
    \text{ to }
    \ket{j}_C
    \otimes
    \phi,
$
where the mapping from $\varphi$ to $\phi$ may depend on the control $j$. Specifically, only the value of the target qubit may be changed, while control qubits are preserved.
Note that this mapping can be naturally extended to superpositions (i.e., linear combinations as in \cref{eq:quantum-state}) by linearity.
Further, because any circuit can be decomposed into single-target gates, not
considering multi-target gates is not a fundamental restriction.

A gate is \emph{qfree} if its mapping can be fully described by operations on
computational basis states, i.e., if for control qubits $C$ and target qubit $t$ it is of the form
\begin{align*}
    \ket{j}_C \ket{k}_t
    \mapsto
    \ket{j}_C \ket{F(j,k)}_t,
\end{align*}
for $F \colon \{0,1\}^m \times \{0,1\} \to \{0,1\}$.
For example, the NOT gate $X$, the controlled NOT gate $CX$, and the Toffoli
gate $CCX$ are qfree, while the Hadamard gate $H$ and the controlled Hadamard
gate $CH$ are not qfree.
Qfree gates are known to be critical for synthesizing
uncomputation~\cite{paradis_unqomp_2021,bichsel_silq_2020,rand_reqwire_2019}.

\para{Uncomputation}
The task of uncomputation is to revert all ancilla variables in a circuit to
their initial state $\ket{0}$, while preserving the circuit effect on the other
variables. Formally, given a circuit $C$, we want to synthesize $\un{C}$ which
resets ancillae variables to $\ket{0}$ without affecting the remainder of the
state:
\begin{restatable}[Correct Uncomputation, \cite{paradis_unqomp_2021,bichsel_silq_2020}]{mydef}{defcorect}
	\label{def:correct}
    $\un{C}$ correctly uncomputes the ancillae $A$ in $C$ if whenever
    \begin{align*}
        \ket{0 \cdots 0}_A \otimes \varphi &\xmapsto{\llbracket C \rrbracket} \sum_{j \in \{0,1\}^{|A|}} \gamma_j \ket{j}_A \otimes \phi_j, \text{then}\\
        \ket{0 \cdots 0}_A \otimes \varphi &\xmapsto{\llbracket \un{C} \rrbracket} \sum_{j \in \{0,1\}^{|A|}} \gamma_j \ket{0 \cdots 0}_A \otimes \phi_j.
    \end{align*}
\end{restatable}
Here, $\llbracket C \rrbracket$ denotes the semantics of circuit $C$ acting on a
given input state. We refer to \cite{paradis_unqomp_2021} for a more thorough
introduction to uncomputation.

\section{Circuit Graphs}
\label{sec:circ-graphs}
As discussed in \cref{sec:intro}, \reqomp does not work directly on circuits, but instead relies on \emph{well-valued circuit graphs}. 
In this section, we first intuitively introduce these graphs
(\cref{sec:circ-graphs:circ-g}) and the method \tfunc{evolveVertex} to
manipulate them (\cref{sec:circ-graphs:uncomp}). Finally, we formalize the
definition of well-valued circuit graphs and show how \tfunc{evolveVertex} preserves their well-formedness
(\cref{sec:circ-graph:formal}).

Circuit graphs were introduced and formalized in \unqomp~\cite{paradis_unqomp_2021}. We here extend them to precisely track qubits values. We discuss the differences between well-valued circuit graphs and \unqomp circuit graphs in \cref{sec:circ-graph:formal}.

\subsection{Circuit Graph Intuition}
\label{sec:circ-graphs:circ-g}
\begin{figure}
    \begin{tikzpicture}
		\def\tikzTitleColor{gray!30}
		\def\tikzArrowHeadLen{11pt}
        \def\tikzArrowWidth{9pt}
		\node[fill=gray!12,anchor=north west] (circ) at (0, 3){
			\begin{tikzpicture}
				\node[anchor=north west] (input_circ_circ){
                    \begin{tikzpicture}
                        \input{figures/circGraphsIntro/circuitSimple.tex}
                    \end{tikzpicture}
                };
                \node[fill=\tikzTitleColor, anchor=north] (input_circ_title) at (input_circ_circ.south) {a) Circuit $C$};
			\end{tikzpicture}
			
		};

		\node[fill=gray!12,anchor=north east] (value-graph) at ($ (circ.south) + (-0.1cm, -0.4cm) $){
			\begin{tikzpicture}
				\node[anchor=north west] (input_circ_circ){
                    \begin{tikzpicture}
                        \input{figures/circGraphsIntro/val_graph_simple.tex}
                    \end{tikzpicture}
                };
                \node[fill=\tikzTitleColor, anchor=north] (input_circ_title) at (input_circ_circ.south) {b) Value graph $\graphval$};
			\end{tikzpicture}
		};

        \node[fill=gray!12,anchor=north west] (circ-graph) at ($ (circ.south) + (0.1cm, -0.4cm) $) {
			\begin{tikzpicture}
				\node[anchor=north west] (input_circ_circ){
                    \begin{tikzpicture}
                        \input{figures/circGraphsIntro/circgraph-simple.tex}
                    \end{tikzpicture}
                };
                \node[fill=\tikzTitleColor, anchor=north] (input_circ_title) at (input_circ_circ.south) {c) Circuit graph $\graph$};
			\end{tikzpicture}
		};

		\draw[big arrow={\tikzArrowWidth}{\tikzArrowHeadLen}, big arrow fill= gray, draw=none] (circ.west) -- node[midway] {}  (value-graph.north |- circ.west) -- (value-graph.north);

		\draw[big arrow={\tikzArrowWidth}{\tikzArrowHeadLen}, big arrow fill= gray, draw=none] (circ.east) -- node[midway] {}  (circ-graph.north |- circ.east) -- (circ-graph.north);

        \end{tikzpicture}
		\begin{subfigure}[t]{0pt}
            \phantomcaption\label{fig:intro-cg-circ}
        \end{subfigure}
		\begin{subfigure}[t]{0pt}
            \phantomcaption\label{fig:intro-cg-vg}
        \end{subfigure}
        \begin{subfigure}[t]{0pt}
            \phantomcaption\label{fig:intro-cg-cg}
        \end{subfigure}
        \caption{A Circuit $C$ and the corresponding value graph $\graphval$ and circuit graph $G$}
	\label{fig:intro-circuit-graphs}
\end{figure}

As an example, we consider the circuit depicted in \cref{fig:intro-cg-circ}.
This circuit uses an ancilla qubit $a$ to compute some output on qubit $t$,
based on the value of qubit $s$. The circuit graph $G$ corresponding to this
circuit is shown in \cref{fig:intro-cg-cg}.

\para{Vertices and Edges}
We first focus on the structure of the circuit graph $G$. $G$ contains one \emph{init vertex} per qubit (e.g.,
\gatebox{$s_{0.0}$} for qubit $s$), and one \emph{gate vertex} per gate (e.g.,
\gatebox{$s_{1.0}$} for the first $X$ gate on $s$).
It also connects consecutive vertices on the same qubit by a \emph{target edge},
e.g., \gatebox{$s_{0.0}$} $\targEdge$ \gatebox{$s_{1.0}$}. 
Further, as $a_{1.0}$ represents a $CX$ gate controlled by qubit $s$,
the circuit graph $G$ also contains a \emph{control edge} between the
corresponding vertices on $s$ and $a$: $\gatebox{$s_{1.0}$} \ctrlEdge \gatebox{$a_{1.0}$}$.
Finally, the circuit graph $G$ also contains \emph{anti-dependency edges} to
enforce correct ordering between otherwise unordered vertices. For example,
\gatebox{$a_{1.0}$} $\availEdge$ \gatebox{$s_{0.1}$} ensures that the second $X$
gate on $s$ (represented by $s_{0.1}$) can only be applied after the $CX$ gate
targeting $a$ (represented by $a_{1.0}$). 
Anti-dependency edges can be reconstructed from the target and control edges:
for any three vertices $n, c, d$ such that $c \targEdge d$ and $c
\ctrlEdge n$, there is an edge $n \availEdge d$ ensuring that $n$ must be applied before $d$.

\para{Valid Circuit Graphs}
A circuit graph is \emph{valid} iff it corresponds to a valid circuit. Most
importantly, all valid circuit graphs must be acyclic\footnote{We recall the definition of valid circuit graphs from \cite{paradis_unqomp_2021} in
\cref{def:valid}}.
Any valid circuit graph $G$ can be converted into a circuit. The resulting circuit has one qubit per init vertex in $G$. We then pick any total order on the gate vertices of $G$ that is consistent with the partial order induced by its edges, and add gates to the circuit following this total order. \cite{paradis_unqomp_2021} showed that any of the circuits $G$ can be converted to (depending on the choice of total order) have equivalent semantics. We hence define the semantics of a valid circuit graph $G$, denoted $\fme{G}$, as
the semantics of any circuit it can be converted to.

\para{Tracking Values}
While the above construction follows \unqomp~\cite{paradis_unqomp_2021}, we
additionally introduce a new vertex naming convention to track qubit values.
Specifically, each vertex (e.g., $s_{1.0}$) is identified by its qubit (here
$s$), its \emph{value index} (here $1$) and its \emph{instance index} (here
$0$).
The value index is chosen such that intuitively, two vertices with the same
qubit and value index hold the same value, even in the presence of
entanglement.
The instance index is used to ensure uniqueness of vertex names.
For example, as $X$ is self-inverse, the value on qubit $s$ is the same in the
very beginning of the circuit (\gatebox{$s_{0.0}$}) as after applying the two
$X$ gates to $s$ (\gatebox{$s_{0.1}$}).
More precisely, if the input state to the circuit is $\ket{0}_s \otimes
\varphi$, after the two $X$ gates on $s$ have been applied, the final state is
$\ket{0}_s \otimes \varphi'$ for some $\varphi'$. Reflecting this in the circuit
graph, vertices $s_{0.0}$ and $s_{0.1}$ share the same value index $0$.

\para{Value Graph $\graphval$}
To track value indices during circuit graph construction and later during
uncomputation, we rely on the value graph $\graphval$, shown in
\cref{fig:intro-cg-vg}. It records for each qubit and value
index the possible value transitions.
$\graphval$ contains one init vertex per qubit but without an instance index, for example $s_0$ for qubit $s$.
When encountering a new gate, for example the first $X$ gate on qubit $s$,
we pick a fresh value index for this qubit and
extend $\graphval$. The value graph $\graphval$ also records which operations can be safely uncomputed. For instance, as $X$ is a qfree gate, the $X$ on $s_0$ can be uncomputed: applying $X$ on $s_1$ yields $s_0$ (note that
$X$ is self-inverse). We materialize this with the reverse
edge $s_1 \targEdge s_0$, giving:
\begin{center}
	\begin{tikzpicture}
	
		\matrix[row sep=2mm,column sep=0mm, inner sep=0mm, nodes={inner sep=1mm}] {
			\node (s0)  {$s_0$}; &\node (r0p)  {\phantom{llll}}; & \node (s1)  {$s_1$}; \\
		};
	
		\path[<->]          (s0)  edge  node[above] {\scriptsize $X$} (s1);
	\end{tikzpicture}
\end{center}
Similarly, the $CX$ gate from $a_{0}$ with control $s_1$
yields $a_1$ (see \cref{fig:intro-cg-vg}). Note that we do not specify the instance index of $s_1$: as any vertex on qubit $s$ with value index $1$ carries the same value, any of them can be used as a control. As $CX$ is qfree, we also record in $\graphval$ the reverse edge from $a_1$ to $a_0$, with gate $CX$ and control $s_1$. In contrast, the $CCH$ gate on qubit $r$ cannot be safely uncomputed as it is not qfree~\cite{bichsel_silq_2020}. We therefore only record the forward edge from $t_0$ to $t_1$.

\begin{figure*}[t]
    \footnotesize
    \begin{subfigure}[b]{.37\linewidth}
            \centering
            \input{algorithms/evolveVertexShortened.tex}
            \caption{Function \tfunc{evolveVertex}}
	\label{alg:evolve-vertex-simpl}
    \end{subfigure}
	\begin{subfigure}[b]{.63\linewidth}
        \centering
            \begin{tikzpicture}
        \def\tikzMatrRowSep{6mm}
        \def\tikzMatrColSep{3mm}
        \def\tikzXOffset{0.4cm}
        \def\tikzYOffset{0.6cm}
        \def\tikzMinWidth{3.05cm} 
        \def\tikzArrowHeadLen{8pt}
        \def\tikzArrowWidth{7pt}

        \node[fill=gray!12,anchor=north west] (cg0) at (0, 0){
			\input{figures/circgraphuncomp/c6_0.tex}
		};

		\node[fill=gray!12,anchor=north west] (cg1) at ($ (cg0.north east) + (\tikzXOffset, 0) $){
			\input{figures/circgraphuncomp/c6_1.tex}
		};

        \node[fill=gray!12,anchor=north west] (cg2) at ($ (cg1.north east) + (\tikzXOffset, 0) $){
			\input{figures/circgraphuncomp/c6_2.tex}
		};

        \node (y-pos) at ($(cg1.north)!0.85!(cg1.south)$){};
        \draw[big arrow={\tikzArrowWidth}{\tikzArrowHeadLen}, big arrow fill= gray, draw=none] (cg0.south) -- node[midway] {}  (cg0.south |- y-pos) -- (cg1.west |- y-pos);
        \draw[big arrow={\tikzArrowWidth}{\tikzArrowHeadLen}, big arrow fill= gray, draw=none] (cg1.east |- y-pos) -- (cg2.west |- y-pos);

        \end{tikzpicture}
        \caption{Using \tfunc{evolveVertex} to uncompute $a_{1.0}$ in $\un{G}$}
        \label{fig:evolveVertex-showcase}
\end{subfigure}

    \caption{\tfunc{evolveVertex} algorithm and demonstration.}
    \label{fig:demo-evolveVertex}
\end{figure*}
\subsection{Modifying a Circuit Graph}
\label{sec:circ-graphs:uncomp}

We now discuss the core operation of \reqomp, the safe extension of a circuit graph in a stepwise manner through function $\tfunc{evolveVertex}$.

\para{evolveVertex}
We show the algorithm for \tfunc{evolveVertex} in \cref{alg:evolve-vertex-simpl}. As its name suggests, it is used to evolve vertices, i.e., to bring qubits from one value index to another.
It uses the value graph $\graphval$ as a guide, and iteratively modifies a circuit graph $\graphu$. In \cref{fig:demo-evolveVertex}, $\un{G}$ is a copy of the circuit graph $G$ from \cref{fig:intro-cg-cg}, and its value graph $\graphval$ is shown in \cref{fig:intro-cg-vg}. Note that \reqomp, and hence also \tfunc{evolveVertex}, never modifies the circuit graph $G$ corresponding to its input circuit. Instead, they both work on a new circuit graph $\un{G}$, built by following $G$, as we will explain in \cref{sec:heuristics}.

\para{Calling \tfunc{evolveVertex}}\tfunc{evolveVertex} takes as input three arguments. First, $\tvar{qb}$ is the qubit on which we will insert the uncomputation. Second, $\tvar{nVId}$ is the value index we will evolve this qubit to. The last argument is $I$, a set of qubits on which vertices are currently being added---this argument is needed to avoid
infinite recursion (see also \cref{lin:evolveVertex-assertacyclic}, discussed
later). In \cref{fig:evolveVertex-showcase}, we demonstrate the example call \tfunc{evolveVertex}($a$, $0$, $\emptyset$), which uncomputes a
single gate on qubit $a$: it will bring qubit $a$ from its current value index
$1$ to $0$. The argument $\emptyset$ indicates that no vertices on any other
qubit are in the process of being added to $\un{G}$.

\para{Building the New Vertex}
\tfunc{evolveVertex} proceeds as follows.
\cref{lin:evolveVertex-last} gets the last vertex \tvar{\un{last}} on qubit $a$, that is the lowest one following target edges. \cref{lin:evolveVertex-vidx} stores its value index in variable $\tvar{oVId}$. Here \tvar{\un{last}} is $a_{1.0}$ and $\tvar{oVId}$ is $1$.
\cref{lin:evolveVertex-onestepawayassert} then checks that $nVId$, the value index we want to add to the graph, here $0$, can be reached
in just one gate step. This is the case as $a_0$ is just one $CX$ gate away
from $a_1$, as evidenced by the edge $a_{0} \xrightarrow{CX, b_0} a_{1}$ in $\graphval$ (see \cref{fig:intro-cg-vg}). \cref{lin:evolveVertex-v} then inserts a new vertex in $\un{G}$ on qubit $a$ with value index $0$ and gate $CX$. As there is already a vertex $a_{0.0}$ in $\un{G}$, it picks a new instance index, resulting in vertex $a_{0.1}$. \cref{lin:evolveVertex-e} finally links it to its predecessor with a target edge and adds the resulting anti-dependency edge $t_{1.0} \availEdge a_{0.1}$. This results in the second graph in \cref{fig:evolveVertex-showcase} (ignoring the red edges $s_{1.0} \targEdge a_{0.1} \availEdge s_{0.1}$).

\para{Adding Control Edges}
To ensure $a_{0.1}$ indeed uncomputes $a_{1.0}$, we must control $a_{0.1}$ with qubits holding the same values as were used to control $a_{1.0}$. More precisely, $a_{0.1}$ should be controlled by vertices with the same qubit and value index as those controlling $a_{1.0}$. As the set \tvar{ctrls} contains exactly those qubits and value indices (see \cref{lin:evolveVertex-onestepaway-def}), \cref{lin:evolveVertex-for-controls} simply iterates over all controls \tvar{c} in $\tvar{ctrls}$.
Through a call to the auxiliary function \tfunc{getAvailCtrl} it then gets a (potentially new) vertex \tvar{\un{c}}
(\cref{lin:callgetavailablectrl}), which should have the same value index and
qubit as \tvar{c} and be available as a
control for \tvar{\un{v}} (that is adding the control edge $\un{c} \ctrlEdge \un{v}$, and resulting anti-dependency edges does not create a cycle in $\un{G}$). Many implementations of \tfunc{getAvailCtrl} are possible, each following different heuristics. The only restriction is that any modification to $\un{G}$ must be done through a call to \tfunc{evolveVertex}. This and the assertion in \cref{lin:evolveVertex-assert-ctrls} are enough to ensure correctness of \tfunc{evolveVertex} as we will discuss in \cref{sec:circ-graph:formal}.

\para{Choosing a Control}
Let us manually follow the implementation of \tfunc{getAvailCtrl}\footnote{We describe this function in \cref{sec:heuristics:compl-graph} and show it in \cref{alg:reqomp-convenience}.}. The only control in $\tvar{ctrls}$ is $s_1$. We first check if an existing vertex in $\un{G}$ with qubit $s$ and value index $1$ could be used. This is not the case, as using $s_{1.0}$ as control for $a_{0.1}$ would result in a cycle, as shown in the second graph in \cref{fig:evolveVertex-showcase}. We must hence compute a new vertex on qubit $s$ with value index $1$. We do so by calling $\tfunc{evolveVertex}(s, 1, \{a\})$. Note how the set of qubits under construction contains $a$, as this is a recursive call within the computation of $a_{0.1}$. We can finally link $s_{1.1}$ to $a_{0.1}$ with a control edge, concluding the computation and yielding the third graph in \cref{fig:evolveVertex-showcase}.

\para{Avoiding Infinite Recursion}
The assertion in \cref{lin:evolveVertex-assertacyclic} ensures that we never
call \tfunc{evolveVertex} recursively on the same qubit.
This avoids infinite recursion where two qubits keep triggering recomputation of
the other.
To this end, we propagate the set \tvar{I} of qubits currently under
construction through \tfunc{getAvailCtrl} to potential recursive calls into
\tfunc{evolveVertex} (see \cref{lin:callgetavailablectrl}).

\para{Modified Control}
Here the uncomputation of ancilla $a$ (introducing $a_{0.1}$ in $\un{G}$) resulted in the modification of its control qubit $s$ (introducing $s_{1.1}$). The circuit $\un{C}$ corresponding\footnote{This is a slight simplification. Multiple circuits $\un{C}$ may correspond to the circuit graph $\un{G}$. They all have the same semantics, and will all be incorrect uncomputations of $C$.} to $\un{G}$ is hence not a correct uncomputation of $C$. While ancilla $a$ has been correctly brought back to its initial value $0$, the value of qubit $s$ has been modified and does not hold the same value as it would in $C$.  
We will show in \cref{sec:heuristics:compl-graph} how \reqomp notices and fixes such a value mismatch to ensure correct uncomputation.

\subsection{Formalizing Value Indices}
\label{sec:circ-graph:formal}
So far, we relied on an intuitive understanding of well-valued circuit graphs, and used it to build uncomputation for a circuit. Let us now formalize this intuition in \cref{def:well-valued}.

\begin{restatable}[Well-valued Circuit Graph]{mydef}{defvalid}
	\label{def:well-valued}
    We say a valid circuit graph is well valued iff:
	\begin{enumerate}[label=(\roman*)]
		\item all vertex names are of the form $q_{s.i}$ where $q$ is the name
		of the vertex qubit, $s$ and $i$ are natural numbers
		\item there are no duplicate vertices
		\item the init vertex on each qubit is named $q_{0.0}$ and for any
		$q_{s.i}$ in $G$, $q_{s.0}$ is in $G$
		\item \label{def:well-val-4}any gate vertex $q_{s.i}$ with $s>0$ satisfies one of the following:
		\subitem \textbf{(fwd)}
		$\tfunc{valIdx}(\tfunc{pred}(q_{s.i})) =
		\tfunc{valIdx}(\tfunc{pred}(q_{s.0}))$ and $q_{s.i}$ and $q_{s.0}$ have
		the same gate and same control vertices (up to their instance indices)
		\subitem \textbf{(bwd)} if we denote $s' = \tfunc{valIdx}(\tfunc{pred}(q_{s.i}))$,
		we have that  (i)~$\tfunc{valIdx}(\tfunc{pred}(q_{s'.0})) = s$, (ii)~$q_{s.i}.gate$ is qfree and equal to $q_{s'.0}.gate^\dagger$, and (iii)~both $q_{s.i}$ and $q_{s'.0}$ have the same controls (up to instance indices).
\end{enumerate}
\end{restatable}
Here, $\tfunc{pred}(v)$ is the unique $v'$ such that $v' \targEdge v$ and $\tfunc{valIdx}(v)$ is the value index of $v$.
We now give some intuition of the last condition (iv).
Case \textbf{(fwd)} corresponds to a (forward) computation, for instance
$s_{1.0}$ and $s_{1.1}$ in the last graph in \cref{fig:evolveVertex-showcase}. Here, case (fwd) ensures that $s_{0.0} \targEdge s_{1.0}$ and $s_{0.1} \targEdge s_{1.1}$ apply the same
operation to the same starting state, i.e. in both cases qubit $s$ holds the same value before the operation is applied. This is the case as both $s_{1.0}$ and
$s_{1.1}$ have the same gate $X$, and their predecessors ($s_{0.0}$ and
$s_{0.1}$) have the same value index $0$.
Case \textbf{(bwd)} corresponds to a (backward) uncomputation, for instance
$a_{1.0}$ and $a_{0.1}$. Here, case (bwd) ensures
that the operations $a_{0.0} \targEdge a_{1.0}$ and $a_{1.0} \targEdge
a_{0.1}$ are exact inverses of each other.
Specifically, it ensures that (i)~$a_{0.1}$ and the predecessor of $a_{1.0}$
(here $a_{0.0}$) have the same value index~$0$, (ii)~the gates of $a_{1.0}$ and
$a_{0.1}$ are inverses of each other, and (iii)~their controls ($s_{1.0}$ and
$s_{1.1}$) have the same qubit and value index.

\para{Preserving Values}
Any circuit graph verifying this definition ensures the following: if a qubit is in some basis state at a value index, then if the
qubit reaches the same value index at a later point in time, it will again be in
this same basis state.
Or, more formally:
for every pair of vertices
$q_{s.i}$ and $q_{s.i'}$, applying all gates $\graph'$ between these vertices
should preserve $q$, in the following sense:
\begin{minipage}{\linewidth}
	\footnotesize
	\begin{align}\label{eq:value-index-meaning}
		\begin{array}{l}
			\forall b \in \{0,1\}. \\
			\forall \varphi \in \hil{n-1}.
		\end{array}
		\exists \psi \in \hil{n-1}.
		\ket{b}_q \otimes \varphi \xmapsto{\llbracket \graph' \rrbracket} \ket{b}_q \otimes \psi,
	\end{align}
\end{minipage}
where $\hil{n-1}$ denotes the set of quantum states over $n-1$ qubits.
As we can write any state as a sum of computational basis states, \cref{eq:value-index-meaning} allows us to reason about any state.
We show in \cref{app:proof} that any well-valued circuit graph ensures
\cref{eq:value-index-meaning} (more precisely, \cref{lem:nullprojcoeffs} implies
this).

\para{evolveVertex}
In \cref{sec:circ-graphs:uncomp}, we claimed that \tfunc{evolveVertex} preserves the well-valuedness of a circuit graph. More precisely, if \tfunc{evolveVertex} terminates without error, the resulting circuit graph $\un{G}$ is still well-valued (assuming it was before the call).
This is ensured through the two assertions in \cref{sub@lin:evolve-vtx-assert-well-val} and \cref{lin:evolveVertex-assert-ctrls}. We now specify those assertions.
The first, \textsc{well\_valued\_vertex} requires that one of the following conditions is satisfied:
\begin{enumerate}[itemindent=2em]
	\item[\textbf{(fwd)}]there exists $i$ such that $\tvar{qb_{oVId.i}} \targEdge \tvar{qb_{nVId.0}}$ is in $\un{G}$ and $gt$ and $ctrls$ correspond to the gate and controls of \tvar{qb_{nVId.0}};
	\item[\textbf{(bwd)}] there exists $i$ such that $\tvar{qb_{nVId.i}} \targEdge \tvar{qb_{oVId.0}}$ is in $\un{G}$, $gt$ is qfree and equal to $\tvar{qb_{oVId.i}}.gate^\dagger$ and $ctrls$ correspond to the controls of \tvar{qb_{oVId.0}};
\end{enumerate}
The second, \textsc{correct\_control} requires that $\un{c}$ has the same qubit and value index as $c$, and that adding the control edge $\un{c} \ctrlEdge \un{v}$ (and all resulting anti-dependency edges) does not create a cycle in $\un{G}$.
Taken together, these assertions correspond exactly to the definition of a well-valued graph, and ensuring that it stays acyclic. Further, neither of those assertions refer to the value graph $\graphval$ and both avoid reliance on the function \tfunc{getAvailCtrl}. This allows for a self-contained definition of well-valued graphs and simpler correctness proof, which we discuss in \cref{sec:correctness}. 


\para{Contributions to Circuit Graphs}
In the following, we briefly elaborate on the main differences between the circuit graphs introduced in \unqomp and our new notion of a well-valued circuit graph.
\unqomp does not use value indices. Instead, uncomputation is built by adding to the circuit graph built from the input circuit an uncomputation vertex for each vertex on an ancilla. Correctness of the uncomputation then relies on this one computation vertex to one uncomputation vertex correspondence in the final circuit graph. This one to one correspondence fundamentally does not allow for recomputation, where three or more vertices may correspond to the same value. In contrast, we introduce the notion of value indices, and prove formally (see \cref{sec:correctness}) that they accurately track values in qubits. We further introduce the notion of a value graph and the function \tfunc{evolveVertex}, which leverages value indices to build correct computation and uncomputation in a circuit graph.

\section{Reqomp}
\label{sec:heuristics}
\begin{figure*}
    \begin{tikzpicture}
        \def\tikzTitleColor{gray!30}
        \def\tikzArrowHeadLen{11pt}
        \def\tikzArrowWidth{9pt}
		\node[fill=gray!12,anchor=north west] (input_circ) at (0, 5){
            \begin{tikzpicture}
                \node[anchor=north west] (input_circ_circ){
                    \begin{tikzpicture}
                        \input{figures/reqomp-fig/circuit.tex}
                    \end{tikzpicture}
                };
                \node[fill=\tikzTitleColor, anchor=north] (input_circ_title) at (input_circ_circ.south) {a) Input Circuit $C$};
            \end{tikzpicture}
		};

        \node[fill=gray!12,anchor=north west] (circ-graph) at ($ (input_circ.north east) + (1cm, 0) $) {
            \begin{tikzpicture}
                \node[anchor=south west] (circ_graph_graph){
                    \input{figures/reqomp-fig/circgraph.tex}
                };
                \node[fill=\tikzTitleColor, anchor=north] (circ_graph_title) at (circ_graph_graph.south |- input_circ.south) {b) Circuit Graph $G$ and value graph $g^{val}$ (omitted)};
            \end{tikzpicture}
			
		};

        \node[fill=gray!12,anchor=north west] (n_qbs) at ($ (input_circ.south west) + (0, -0.7cm) $) {
            \begin{tikzpicture}
            \node[anchor=north west] (n_qbs_n){ 2 };

            \node[fill=\tikzTitleColor, anchor=north, align=center] (n_qbs_title) at (n_qbs_n.south) {\# Ancilla\\ Qubits};
        \end{tikzpicture}
		};

        \node[fill=gray!12,anchor=north west] (dep_graph) at ($ (n_qbs.north east) + (1cm, 0) $) {
            \begin{tikzpicture}
            \node[anchor=north west] (dep_graph_graph){
			\begin{tikzpicture}[node distance=0.3cm]
                \node[tightinit, fill=my-full-red!40] (a) {$a$};
                \node[tightinit,right= of a, fill=my-full-red!40] (b) {$b$};
                \node[tightinit,right= of b, fill=my-full-red!40] (c) {$c$};
                \node[right= of c] (phantom) {};
                \node[tightinit,right= of phantom,fill=my-full-red!40] (d) {$d$};
                \node[tightinit,right= of d, fill=my-full-red!40] (e) {$e$};

                \draw[-,line width=0.2cm, color = my-full-blue!60] (a) -- (b);
                \draw[-,line width=0.2cm, color = my-full-blue!60] (b) -- (c);
                \draw[-,line width=0.2cm, color = my-full-blue!60] (d) -- (e);
        
                \draw[style=targ] (a) -- (b);
                \draw[style=targ] (b) -- (c);
                \draw[style=targ] (d) -- (e);

                \begin{scope}[on background layer]
                    \node[tightcomponent, anchor = center, fill=gray!30] (comp1) at ($(a)!0.5!(b)$) {};
                    \node[tightcomponent, anchor = center,fill = gray!30] (comp1b) at ($(b)!0.5!(c)$) {};
                    \node[tightcomponent, anchor = center, fill = gray!30] (comp2) at ($(d)!0.5!(e)$) {};
                    \draw[style=cc-edge] (comp1b) -- (comp2);
                \end{scope}
            \end{tikzpicture}
            };

            \node[fill=\tikzTitleColor, anchor=north] (dep_graph_title) at (dep_graph_graph.south) {c) Ancilla Dependencies};
        \end{tikzpicture}
		};

        \node[fill=gray!12,anchor=north west] (uncomp_strat) at ($ (n_qbs.south west) + (0, -0.7cm) $) {
            \begin{tikzpicture}
            \node[anchor=north west] (strat){$a, b, a^\dagger, c, c^\dagger, a, b^\dagger, a^\dagger, d, e, e^\dagger, d^\dagger$};

            \node[fill=\tikzTitleColor, anchor=north] (strat_title) at (strat.south) {d) Uncomputation Strategy};
        \end{tikzpicture}
		};

        \node[fill=gray!12,anchor=north west] (circ_g_uncomp) at ($ (uncomp_strat.north east) + (0.5cm, 0) $) {
            \begin{tikzpicture}
            \node[anchor=north west] (cg_uncomp){(omitted)\phantom{$^\dagger$}};

            \node[fill=\tikzTitleColor, anchor=north] (cg_uncomp_title) at (cg_uncomp.south) {e) Circuit Graph $\un{G}$};
        \end{tikzpicture}
		};

		\node[fill=gray!12,anchor=north east] (unqomp_circ) at (dep_graph.north -| circ-graph.east){
            \begin{tikzpicture}
                \node[anchor=south west] (uncomp_circ_circ){
                    \begin{tikzpicture}
                        \input{figures/reqomp-fig/circuit_w_unqomp_highlights.tex}
                    \end{tikzpicture}
                };
                \node[fill=\tikzTitleColor, anchor = north] (uncomp_circ_title_title) at ($ (uncomp_circ_circ.south) + (0, -0.2cm) $) {f) Circuit With Uncomputation $\un{C}$};
            \end{tikzpicture}
		};

        \draw[big arrow={\tikzArrowWidth}{\tikzArrowHeadLen}, big arrow fill= gray, draw=none] (input_circ.east) -- (circ-graph.west);

        \node (arrow1-x-start) at ($(circ-graph.west)!0.1!(circ-graph.east)$) {};
        \draw[big arrow={\tikzArrowWidth}{\tikzArrowHeadLen}, big arrow fill= gray, draw=none] (arrow1-x-start |- circ-graph.south) -- (arrow1-x-start |- dep_graph.north);

        \node (arrow2-x-start) at ($(dep_graph.west)!0.2!(dep_graph.east)$) {};
        \draw[big arrow={\tikzArrowWidth}{\tikzArrowHeadLen}, big arrow fill= gray, draw=none] (n_qbs.south) -- (n_qbs.south |- uncomp_strat.north);

        \draw[big arrow={\tikzArrowWidth}{\tikzArrowHeadLen}, big arrow fill= gray, draw=none] (arrow2-x-start |- dep_graph.south) -- (arrow2-x-start |- uncomp_strat.north);

        \draw[big arrow={\tikzArrowWidth}{\tikzArrowHeadLen}, big arrow fill= gray, draw=none] (uncomp_strat.east) -- (circ_g_uncomp.west |- uncomp_strat.east);

        \draw[big arrow={\tikzArrowWidth}{\tikzArrowHeadLen}, big arrow fill= gray, draw=none] (circ_g_uncomp.east |- uncomp_strat.east) -- (unqomp_circ.west |- uncomp_strat.east);


        \end{tikzpicture}
        \begin{subfigure}[t]{0pt}
            \phantomcaption\label{fig:reqomp-circ}
        \end{subfigure}
        \begin{subfigure}[t]{0pt}
            \phantomcaption\label{fig:reqomp-cg}
        \end{subfigure}
        \begin{subfigure}[t]{0pt}
            \phantomcaption\label{fig:reqomp-anc-deps}
        \end{subfigure}
        \begin{subfigure}[t]{0pt}
            \phantomcaption\label{fig:reqomp-strat}
        \end{subfigure}
        \begin{subfigure}[t]{0pt}
            \phantomcaption\label{fig:reqomp-cgu}
        \end{subfigure}
        \begin{subfigure}[t]{0pt}
            \phantomcaption\label{fig:reqomp-circu}
        \end{subfigure}
        \caption{Overview of Reqomp}
	\label{fig:reqomp}
\end{figure*}

The previous section presented our notion of well-valued circuit graphs, and how they can be used to insert computation or uncomputation on any qubit. We now take a step back and present the complete \reqomp procedure, and how it leverages well-valued circuit graphs and the \tfunc{evolveVertex} function to tackle the problem of ancilla variables uncomputation under space constraints. As mentioned in \cref{sec:intro}, \reqomp takes as input a quantum circuit and a number of available ancilla qubits.
If successful, it returns a quantum circuit where all ancilla variables from the original circuit are uncomputed and all other variables are preserved, using only the number of available ancilla qubits.

\para{Example Circuit}
\cref{fig:reqomp} gives an overview of Reqomp and applies it to an example circuit with five ancilla variables, $a, b, c, d$, and $e$, and two non ancilla variables $r$ and $t$. We note that while this circuit does not implement a relevant algorithm, it allows showcasing the key features of Reqomp on a
simple example.
%

\para{\reqomp Workflow}
\cref{fig:reqomp} shows the steps performed by \reqomp, which we detail below.

First, \reqomp converts the circuit $C$ into a circuit graph $G$ and a value graph $\graphval$ (see \cref{fig:reqomp-cg}).
Using this representation, \reqomp identifies the dependencies among ancilla
variables in the circuit (see \cref{sec:reqomp-id-anc} and \cref{fig:reqomp-anc-deps}), and uses them to derive an uncomputation strategy
respecting the number of available ancilla qubits (see \cref{sec:reqomp-derive-strategy} and \cref{fig:reqomp-strat}).
\reqomp then applies this strategy to build a new circuit graph $\graphu$
containing uncomputation (see \cref{sec:heuristics:compl-graph} and \cref{fig:reqomp-cgu}). 
Finally, \reqomp converts the resulting circuit graph into a circuit $\un{C}$ (see \cref{sec:reqomp-conv-to-circ} and \cref{fig:reqomp-circu}).

\subsection{Identifying Ancilla Variables Dependencies}
\label{sec:reqomp-id-anc}
The first step of \reqomp is converting the input circuit into a circuit graph $G$, as we described in \cref{sec:circ-graphs}.
Using this circuit graph $G$, \reqomp then identifies ancilla dependencies.

On the circuit graph $G$ from \cref{fig:reqomp-cg}, \reqomp identifies all
ancilla variables vertices (highlighted in red) and their dependencies (highlighted in
blue), and extracts the ancilla dependencies shown in \cref{fig:reqomp-anc-deps}.
There, each vertex corresponds to an ancilla variable
and each solid edge corresponds to a control edge among gate vertices between these
respective ancilla variables. We will discuss the dotted edge shortly.

\subsection{Deriving the Uncomputation Strategy}
\label{sec:reqomp-derive-strategy}
Based on the ancilla dependencies derived above, \reqomp will derive an \emph{uncomputation strategy}.

\paragraph{What is an Uncomputation Strategy?}
The uncomputation strategy describes in which order the ancillae in the circuit should be computed and uncomputed, to satisfy the space constraints that were given as input, while minimizing the number of gates in the circuit. For instance \cref{fig:pb-stmt-big-circ} showcases two different such strategies. The first one, shown in \cref{fig:pb-stmt-big-circ-lazy}, is to compute ancilla $a$, then $b$, then $c$, then uncompute $c$, then $b$, then finally $a$. We typically write such a strategy as $a, b, c, c^\dagger, b^\dagger, a^\dagger$, where we write $a$ to denote "computing ancilla $a$", and $a^\dagger$ to denote "uncomputing ancilla $a$". The second strategy, shown in \cref{fig:pb-stmt-big-circ-eager} is $a, b, a^\dagger, c, c^\dagger, a, b^\dagger, a^\dagger$. 

\para{Partitioning Ancillae}
The first step in deriving such an uncomputation strategy from the ancilla dependencies is to distinguish groups of ancillae variables that depend on each other; in other words, partitioning the ancillae according to their dependencies.
More precisely, \reqomp identifies ancilla variables that do not interact with each
other (i.e., lie in different connected components of the ancilla dependency
graph), and can therefore be computed and uncomputed independently. For instance in \cref{fig:reqomp-anc-deps}, ancillae variables $a, b$ and $c$ belong to the same connected component highlighted in dark gray, while ancillae variables $d$ and $e$ are in another component.
%
%

\paragraph{Why Partition Ancillae?}
\reqomp aims at balancing ancilla qubits and gates.
For two ancillae that do not interact, such a trade-off is easy: we should
always uncompute the first ancilla early, making its qubit available for the
latter one.
As the ancillae are independent, the latter one does not need the earlier one,
so the early uncomputation will not induce extra gates, i.e., no recomputation is necessary.
For instance, in \cref{fig:reqomp-anc-deps}, ancillae $\{a, b, c\}$ and $\{d, e\}$
are independent. Therefore, it is strictly better to uncompute $a, b$ and $c$ before
computing $d$ and $e$, thereby reusing the physical ancilla qubits initially
holding $a, b$ and $c$ for $d$ and $e$.
This is in contrast to ancillae that are part of the same partition. For
instance, in \cref{fig:pb-stmt-big-circ}, we saw that for the 3 linked ancillae $a, b, c$, uncomputing early yields a different trade-off than uncomputing late.


\paragraph{Strategy for a Connected Component}
Now that we have split the ancillae variables in two components, let us derive the uncomputation strategy for each of them. We first note that within each of the components in \cref{fig:reqomp-anc-deps}, the ancilla variables exhibit a linear
dependency. Formally, we say ancillae $a^{1}, \dots, a^{n}$ are linearly dependent if all gates
targeting $a^{i}$ for $i>1$ are only controlled by $a^{i-1}$ and non-ancillae. This corresponds to a component that forms a simple path. In this case, we can derive an optimal uncomputation strategy (in terms of number of computation/uncomputation steps) using dynamic programming~\cite{knill_analysis_1995}~\footnote{We show our implementation of this method in \cref{alg:linear-steps}.}. For the first component ($\{a, b, c\}$) on at most two ancilla qubits, the following optimal strategy is found:
\begin{align} \label{eq:uncomputation-strategy}
    a, b, a^\dagger, c, c^\dagger, a, b^\dagger, a^\dagger.
\end{align}
For the second component ($\{d, e\}$), also on at most two ancilla qubits, the following optimal strategy is found:
\begin{align} \label{eq:uncomputation-strategy}
    d, e, e^\dagger,d^\dagger.
\end{align}

If within a component the ancilla variables are not linearly dependent, we abort the current procedure, and fall back on an alternative one, Reqomp-Lazy, which we describe in \cref{sec:reqomp-lazy}. We note that we avoid solving the general problem of finding an uncomputation
strategy for any ancilla dependency, as it is P-SPACE
complete~\cite{pspace-compl}.

\paragraph{Combining Strategies}
%
%
%

\newcommand{\inlineinit}[1]{%
	\scalebox{0.6}{\begin{tikzpicture}\node[init] {#1};\end{tikzpicture}}%
}
\newcommand{\inlinetargetedge}[0]{%
	\begin{tikzpicture}
		\node[] (a) at (0,0) {};
		\node[] (b) at (0.7,0) {};
		\draw[style=targ] (a) -- (b);
	\end{tikzpicture}%
}
\newcommand{\inlineccedge}[0]{%
	\begin{tikzpicture}
		\node[] (a) at (0,0) {};
		\node[] (b) at (0.7,0) {};
		\draw[style=cc-edge] (a) -- (b);
	\end{tikzpicture}%
}

After determining the optimal uncomputation strategy for each connected component, we must combine those strategies to yield our complete strategy. To this end, we determine in which order the components should be processed, ensuring that if an ancilla $d$ transitively depends
on ancilla $c$, $c$'s component is processed before $d$'s component (captured by
edges \inlineccedge in \cref{fig:reqomp-anc-deps}). In our case, we must process the component with ancilla variables $a, b, c$ before the one with ancilla variables $d, e$, as $d$ depends on $c$ through the qubit $r$. Combining the respective strategies in this order, we finally get the complete uncomputation strategy shown in \cref{fig:reqomp-strat}. 
If ordering the connected
components is impossible due to cycles, we again fall back to Reqomp-Lazy.

\subsection{Applying the Uncomputation Strategy}
\label{sec:heuristics:compl-graph}
\begin{figure}
    \footnotesize
	\begin{algorithmic}[1]
		\continueLineNumber
		\Function{ApplyingStrategy}{\tvar{stages}}
			\State $\un{G} \gets \tfunc{initGraph}(G.\tvar{qbs})$ \label{lin:apply-strat-initg}
            \For{$\tvar{anc},\tvar{fwd} \in \tvar{stages}$}\label{lin:reqomp-steps-start}
					\If{\tvar{fwd}}
						\State $\tvar{ancQb} \gets \tfunc{getFreeAncillaQubit}()$ \label{lin:reqomp-getphysQb}
						\State $i \gets \graphPrim{freshInstanceId}{\graphu}{\tvar{anc}, 0}$\label{lin:reqomp-getfreshinstance}
						\State $\graphPrim{addVertex}{\graphu}{\tvar{anc}_{0. i}, \tvar{ancQb}}$\label{lin:reqomp-initnodeaddedtograph}
						\State $\graphPrim{addEdge}{\graphu}{\graphPrim{getLastOnQb}{\graphu}{\tvar{ancQb}} \targEdge \tvar{anc}_{0, i}}$\label{lin:reqomp-initnodelinkedtolast}
                    \EndIf
                    \If{\tvar{fwd}}\label{lin:reqomp-detailed-steps-beg}
					\State \tfunc{evolveVtxUntil}($\tvar{anc}, \tfunc{max}\{s \mid anc_s \in \graphval\}$)
                    \Else
                    \State \tfunc{evolveVtxUntil}($\tvar{anc}, 0$)\label{lin:reqomp-detailed-steps-end}
					\EndIf
                    
                    \If{\tvar{s} is last fwd stage in \tvar{stages} acting on $\tvar{anc}$}\label{lin:reqomp-nonAncillas-start}
					\For{$t_{s}$ such that $\exists t_{s'} \xrightarrow{\tvar{gate}, .., \tvar{anc}_{i}} t_s$ in \graphval}\label{lin:reqomp:non-anc-beg}
					\If{$t$ is not an ancilla and $t_{s.0} \notin \un{G}$}
                    \State \tfunc{evolveVtxUntil}($\tvar{t}, s$)\label{lin:reqomp:non-anc-end}
					\EndIf
                    \EndFor
					\EndIf
					\If{not \tvar{fwd}}
					\State $\tfunc{freeAncillaQubit}(\tvar{ancQb})$\label{lin:reqomp-freephysqb}
					\EndIf
			\EndFor
			\For{$v_{s.i}$ final vertex in $G$}\label{lin:reqomp:non-anc2-beg} 
			\If{$v$ not an ancilla}
			\State \tfunc{evolveVtxUntil}($v$, $s$)\label{lin:reqomp:non-anc2-end}
			\EndIf
			\EndFor
			\State \tfunc{assertFullyEvolved}()\label{lin:reqomp-assertFullyEvolved}
		\EndFunction

		\State

		\Function{evolveVtxUntil}{$\tvar{var}$, \tvar{to}}\label{lin:evolvevtxUntil}
		\State $\tvar{from} \gets \graphPrim{lastOnQb}{\graphu}{\tvar{var}}.\tvar{valIdx}$\label{lin:evolvevtxUntil:getfrom}
		\State $\tvar{steps} \gets \tfunc{getPath}(\tvar{var}_\tvar{from}, \tvar{var}_{\tvar{to}}, \graphval)$\label{lin:evolvevtxUntil:getpath}
		\For{$\tvar{step} \in \tvar{steps}$}\label{lin:evolvevtxUntil:applypath-beg}
                    \State $\tfunc{evolveVertex}(\tvar{anc}, \tvar{step}, \emptyset)$\label{lin:evolvevtxUntil:applypath-end}
        \EndFor
		\EndFunction
		\recordLineNumber
	\end{algorithmic}
    \caption{Applying the uncomputation strategy. We assume $G, \un{G}$, and $\graphval$ are globally available.}
	\label{alg:apply-start}
\end{figure}

We showed in the previous section how \reqomp derives the uncomputation strategy for a given circuit. Further, we have shown in \cref{sec:circ-graphs:uncomp} how we could use \tfunc{evolveVertex} to insert computation or uncomputation in a circuit graph. 
However, there is a gap between the uncomputation strategy and \tfunc{evolveVertex}: the former does not mention any non-ancilla variables, nor any value indices, which are a required argument of \tfunc{evolveVertex}. 
\cref{alg:apply-start} bridges this gap by showing how \reqomp translates the uncomputation strategy into a series of calls to \tfunc{evolveVertex}, which will build a new circuit graph $\un{G}$ with uncomputation.

\para{A New Circuit Graph}
\reqomp does not insert the uncomputation directly in $G$, the circuit graph built from the input circuit $C$. Instead, it builds a new graph $\un{G}$ from scratch, adding computation and uncomputation on all qubits step by step. $\un{G}$ is initialized in \cref{lin:apply-strat-initg}. Initially, it contains one init vertex for each non ancilla qubit in $G$. For the circuit graph $G$ shown in \cref{fig:reqomp-cg}, this results in the following graph $\un{G}$:

\begin{center}
    \begin{tikzpicture}
	
		\matrix[row sep=0mm,column sep=1mm] {
			\node (s00) [style=tightgate] {$s_{0.0}$}; &
			\node (t00) [style=tightgate] {$r_{0.0}$}; \\
		};
	
	\end{tikzpicture}
	
\end{center}

\para{Qubits Allocation}
The strategy consists of a sequence of stages, where each stage fully computes an ancilla or fully uncomputes it. Each stage is described by the ancilla it concerns, and a boolean $\tvar{fwd}$ that is true for a computation step, and false for an uncomputation one. For instance the first stage in the strategy shown in \cref{fig:reqomp-strat} is ($a$, True), that is computing ancilla $a$. For a computation stage, the first step is to allocate a qubit (\cref{lin:reqomp-getphysQb}) and create a new vertex on this qubit (\Crefrange{lin:reqomp-getfreshinstance}{lin:reqomp-initnodeaddedtograph}). This new vertex is then linked to the last vertex on the same qubit, if it exists, with a target edge in \cref{lin:reqomp-initnodelinkedtolast}. This is typically the case if the qubit was previously used to compute and uncompute another ancilla. Further, at the very end of each stage, if it was an uncomputation stage, the qubit is marked as freed and therefore can be reused for later stages, see \cref{lin:reqomp-freephysqb}.

\para{Detailed Steps for Ancilla (Un)computation}
Now that the ancilla has been allocated a qubit if necessary, \reqomp computes the detailed computation or uncomputation steps the current stage requires, in \crefrange{lin:reqomp-detailed-steps-beg}{lin:reqomp-detailed-steps-end}. 
First, \reqomp decides on what is the objective value index for the current stage. If this is an uncomputation, it is 0, as the ancilla should be uncomputed back to its initial value. If this is a computation step, the ancilla should be computed until its maximum value index in $G$. For instance, for ancilla $a$, this maximum value is $1$.

\para{EvolveVtxUntil}
To compute an ancilla to the objective value index chosen above, \reqomp relies on the function \tfunc{evolveVtxUntil}, shown in \cref{lin:evolvevtxUntil}. This function first determines what the current value index of the variable is in $\un{G}$, that is to say what is the value index of the last vertex with qubit $\tvar{var}$. Using $\graphval$, the function then determines the intermediate computation steps required to bring $\tvar{var}$ from its current value index $\tvar{from}$ to the objective one $\tvar{to}$. This is simply the shortest path in $\graphval$ from $\tvar{var}_\tvar{from}$ to $\tvar{var}_\tvar{to}$. Those steps can then be applied in order, using the function $\tfunc{evolveVertex}$, which we discussed in \cref{sec:circ-graphs:uncomp}.

\para{Non Ancilla Variables}
We explained above how the uncomputation strategy can be detailed for non ancilla variables. Let us now describe how non ancilla variables are computed. This is done in two places. First, at the end of the last forward stage on an ancilla $\tvar{anc}$, anything controlled by this ancilla is computed. More precisely, we want to compute all $t_s$, where $t$ is a non ancilla variable and $s$ a value index such that the computation of $t_s$ is controlled by some $\tvar{anc}_i$, that is to say that there exists some edge $t_{s'} \xrightarrow{\tvar{gate}, .., \tvar{anc}_{i}} t_s$ in $\graphval$. This is shown in \crefrange{lin:reqomp:non-anc-beg}{lin:reqomp:non-anc-end}, and the required computation steps are again computed and applied through the function \tfunc{evolveVtxUntil}\footnote{As detailed in
\cref{app:reqomp-convenience-methods} (\cref{alg:reqomp-convenience}), we may
force extra steps in the computation to ensure some values are computed at least once for non ancilla variables.}. Non ancilla variables are also computed at the very end of the strategy, in \crefrange{lin:reqomp:non-anc2-beg}{lin:reqomp:non-anc2-end}. Here any non ancilla variable whose final value index in $\un{G}$ is not the same as in $G$ is computed to its final value index in $G$.

\para{\tfunc{getAvailCtrl}}
When we introduced \tfunc{evolveVertex} in \cref{sec:circ-graphs:uncomp}, we mentioned that it relies on an auxiliary function \tfunc{getAvailCtrl} to get the controls required for a vertex. Any heuristic can be used for this function, as long as it only modifies $\un{G}$ through \tfunc{evolveVertex}. \reqomp uses the following heuristic\footnote{\cref{alg:reqomp-convenience} shows this implementation of \tfunc{getAvailCtrl}.}. Suppose we need a control $c_s$ (that is value index $s$ on qubit $c$) to be used to control some vertex $\un{v}$. We first find the latest (that is the lowest following target edges) vertex with this qubit and value index in $\un{G}$. If this vertex is available for $\un{v}$, that is adding a control edge from this vertex to $\un{v}$ does not create a cycle in $\un{G}$, we return it. If this vertex is not available, or if no such vertex exists, we recursively call \tfunc{evolveVertex} to build a new vertex on qubit $c$ with value index $s$ from the latest vertex on qubit $c$.

\para{Asserting Uncomputation is Complete}
After the uncomputation strategy has been applied as described above, \cref{lin:reqomp-assertFullyEvolved} performs a final check~\footnote{\cref{alg:reqomp-convenience} shows the implementation of \tfunc{assertFullyEvolved}.}. It asserts that all variables are fully evolved, either back to their initial state index $0$ (for ancilla variables), or to their final state index in $\graph$ (for non-ancillae). If not, \reqomp falls back to the alternative Reqomp-Lazy strategy (see \cref{sec:reqomp-lazy}).

\subsection{Obtaining the Final Circuit}
\label{sec:reqomp-conv-to-circ}
If the above check succeeded, the final step of the algorithm converts the circuit graph $\un{G}$ to a circuit $\un{C}$.
During this step, we perform a generic post-processing optimization, previously
discussed in \cite{paradis_unqomp_2021}.
Specifically it replaces in $\un{G}$ all CCX gates which are later uncomputed by RCCX gates.
While RCCX gates introduce an additional phase change, replacing pairs of CCX
gates ensures that this phase change is also reverted.

As RCCX gates can be implemented more efficiently than CCX gates (the latter
require more T gates), this can lead to a substantial efficiency improvement.
This is particularly appealing in our setting, were we encounter many CCX gates,
and most of them are uncomputed.

We note that \unqomp could only apply this optimization to gates it had itself
uncomputed, whereas \reqomp can also identify uncomputation that is already in
place in the original circuit, by leveraging value indices. 

The updated circuit graph $\un{G}$ is then converted back to a circuit, as described in \cref{sec:circ-graphs:circ-g}. For our example, this results in the circuit $\un{C}$ in
\cref{fig:reqomp-circu}.
Importantly, the resulting circuit uses the same physical ancilla qubit to hold
both $a$ and $c$, saving one qubit at the cost of an extra uncomputation and
recomputation of qubit~$a$. The same physical qubit is also used to hold $d$, at no extra recomputation cost, as $d$ does not depend directly on $a$.

\subsection{Fallback procedure: Reqomp-Lazy}
\label{sec:reqomp-lazy}
\begin{figure}
    \footnotesize
	\begin{algorithmic}[1]
		\continueLineNumber
		\Function{Reqomp-Lazy}{\null}
			\State $\graphu \gets \tfunc{DeepCopy}(\graph)$ \label{lin:unqomp-graphu}
			\State $U \gets \left\{q_{s.0} \in \un{G} \mid q \tfunc{ ancilla in } \un{G}.qbs \text{ and } s > 0\right\}$\label{lin:unqomp-u}
			\For{$\tvar{v} \in \tfunc{reverse}(\tfunc{topologicalSort}(U))$} \label{lin:unqomp-evolve-start}
			\State $l \gets \graphPrim{last}{\un{G}}{v.\tfunc{qubit}}$
			\State \textbf{assert} $l.\tfunc{valIdx} == v.\tfunc{valIdx}$\label{lin:unqomp-assert-last-good}
			\State $\tfunc{evolveVertex}(v.\tfunc{qubit}, v.\tfunc{valIdx} - 1, \emptyset)$\label{lin:unqomp-evolve}
			\EndFor \label{lin:unqomp-evolve-end}
			\State $\tfunc{assertFullyEvolved}()$\label{lin:unqomp-assert}
			\State $\tfunc{reuseAncillaRegisters}()$\label{lin:unqomp-reuse}
		\EndFunction
		\recordLineNumber
	\end{algorithmic}
    \caption{Fallback algorithm closely following
    Unqomp~\protect{\cite{paradis_unqomp_2021}}.}
    \label{alg:unqomp2}
\end{figure}
In general, uncomputation according to \cref{def:correct} is not always physically possible~\cite[\secref{6.2}]{paradis_unqomp_2021}.
Because we cannot always achieve uncomputation, \reqomp applies heuristics
to succeed as frequently as possible. However, we must accept that they may fail
in some cases. 
First, the ancillae within a partition may not be linearly dependent, or the ancillae partitions may have cyclic dependencies. In such cases, \reqomp falls back to the heuristic Reqomp-Lazy, which we will describe shortly. Second, assertions in \tfunc{evolveVertex} or \tfunc{assertFullyEvolved} may fail, both in \reqomp and Reqomp-Lazy. In such cases, \reqomp returns an error.
%
%
When this happens, it may indicate that no approach can achieve
uncomputation, hinting at a possible implementation mistake or misconception by
the programmer.
If uncomputation is possible, but no available approach can synthesize it
automatically, a programmer can always uncompute manually instead.

\para{Overview}
Reqomp-Lazy is inspired by \unqomp~\cite{paradis_unqomp_2021}, but leverages
the augmented circuit graphs and \tfunc{evolveVertex}.
In particular, \reqomp can uncompute and recompute controls for a vertex when
they are not directly available.
In contrast, \unqomp would have returned an error anytime this happens. We
provide an example in~\cref{fig:extra_gate_unqomp} (\cref{sec:eval}).

\cref{alg:unqomp2} shows the algorithm Reqomp-Lazy.
\cref{lin:unqomp-graphu} initializes $\graphu$ with a copy of $\graph$, to be
extended by adding vertices that perform uncomputation.
\cref{lin:unqomp-u} defines $U$ as the set of all vertices to be
uncomputed: it contains the first instance of each value index on each
ancilla qubit.
Then, \crefrange{lin:unqomp-evolve-start}{lin:unqomp-evolve-end} step through
$U$ in reverse topological order and revert all operations on ancillae one step at a time  by
calling \tfunc{evolveVertex} (\cref{lin:unqomp-evolve}).
%
%
Then, analogously to \reqomp,
 \cref{lin:unqomp-assert} asserts all qubits
are fully evolved.
Finally, as specified in the original version of
\unqomp~\cite[\secref{5.4}]{paradis_unqomp_2021}, \cref{lin:unqomp-reuse}
allocates ancillae to the same physical qubits if their lifetimes do not
overlap. The resulting circuit graph is then finally converted to a circuit, using the procedure detailed in \cref{sec:reqomp-conv-to-circ}.

\para{Custom Control Strategy}
While Reqomp-Lazy reuses $\tfunc{evolveVertex}$, it uses a different implementation of \tfunc{getAvailCtrl}.
This new implementation aims at using controls that are as early (in terms of target edges) as possible, therefore keeping later controls available for later uncomputations.
Specifically, to find a control $c_s$ for a vertex $\un{v}$, it finds the earliest vertex in $\un{G}$ on qubit $c$ and value index $s$ that is available for $\un{v}$. Recall that in contrast, we used the latest such vertex when using \tfunc{evolveVertex} to apply an uncomputation strategy (see \cref{sec:heuristics:compl-graph}). If no such control vertex can be found, we recursively call \tfunc{evolveVertex} to evolve the last vertex on qubit $c$  until it has the state index $s$, just as we did in \cref{sec:heuristics:compl-graph}.

\colorlet{extended}{black!30}

\begin{figure}
    \begin{tikzpicture}
        \matrix[row sep=\tikzRowSepDefault,column sep=\tikzColumnSepDefault] (grid) {
            \node (q-init) [style=circuit-ctrl,extended] {}; &
            \node (q-H) [style=circuit-gate] {$H$}; &
            \node (q-H-ctrl) [style=circuit-ctrl,extended] {}; &
            \node [style=circuit-noop] {}; &
            \node (q-cx-1) [style=circuit-ctrl] {}; &
            \node [style=circuit-noop] {}; &
            \node (q-cx-2) [style=circuit-ctrl] {}; &
            \node (q-last) [style=circuit-noop] {}; \\

            \node (a-init) [style=circuit-noop] {}; &
            \node [style=circuit-noop] {}; &
            \node [style=circuit-noop] {}; &
            \node (a-init-ctrl) [style=circuit-ctrl,extended] {}; &
            \node (a-cx-1) [style=circuit-gate] {$X$}; &
            \node (a-cx-1-ctrl) [style=circuit-ctrl,extended] {}; &
            \node (a-cx-2) [style=circuit-gate] {$X$}; &
            \node (a-cx-2-ctrl) [style=circuit-ctrl,extended] {}; \\

            \node (dummy) [style=circuit-noop] {};
            \\

            \node (q-init-copy-first) [style=circuit-gate] {$X$};
            \node [style=circuit-noop] {}; &
            \node [style=circuit-noop] {}; &
            \node [style=circuit-noop] {}; &
            \node [style=circuit-noop] {}; &
            \node [style=circuit-noop] {}; &
            \node [style=circuit-noop] {}; &
            \node [style=circuit-noop] {}; &
            \node (q-init-copy-last) [style=circuit-noop] {}; & \\

            \node (q-H-copy-first) [style=circuit-noop] {}; &
            \node [style=circuit-noop] {}; &
            \node (q-H-copy) [style=circuit-gate] {$X$}; &
            \node [style=circuit-noop] {}; &
            \node [style=circuit-noop] {}; &
            \node [style=circuit-noop] {}; &
            \node [style=circuit-noop] {}; &
            \node (q-H-copy-last) [style=circuit-noop] {}; \\

            \node (a-init-copy-first) [style=circuit-noop] {}; &
            \node [style=circuit-noop] {}; &
            \node [style=circuit-noop] {}; &
            \node (a-init-copy) [style=circuit-gate] {$X$}; &
            \node [style=circuit-noop] {}; &
            \node [style=circuit-noop] {}; &
            \node [style=circuit-noop] {}; &
            \node (a-init-copy-last) [style=circuit-noop] {}; \\

            \node (a-cx-1-copy-first) [style=circuit-noop] {}; &
            \node [style=circuit-noop] {}; &
            \node [style=circuit-noop] {}; &
            \node [style=circuit-noop] {}; &
            \node [style=circuit-noop] {}; &
            \node (a-cx-1-copy) [style=circuit-gate] {$X$}; &
            \node [style=circuit-noop] {}; &
            \node (a-cx-1-copy-last) [style=circuit-noop] {}; \\

            \node (a-cx-2-copy-first) [style=circuit-noop] {}; &
            \node [style=circuit-noop] {}; &
            \node [style=circuit-noop] {}; &
            \node [style=circuit-noop] {}; &
            \node [style=circuit-noop] {}; &
            \node [style=circuit-noop] {}; &
            \node [style=circuit-noop] {}; &
            \node (a-cx-2-copy) [style=circuit-gate] {$X$}; \\
        };


        \coordinate[left=1cm of q-init.center] (q-left);
        \coordinate[right=0.5cm of q-last.center] (q-right);

        \coordinate[left=1cm of a-init.center] (a-left);
        \coordinate[right=0.5cm of a-cx-2-ctrl.center] (a-right);

        \coordinate[left=1cm of q-init-copy-first.center] (q-init-copy-left);
        \coordinate[right=0.5cm of q-init-copy-last.center] (q-init-copy-right);

        \coordinate[left=1cm of q-H-copy-first.center] (q-H-copy-left);
        \coordinate[right=0.5cm of q-H-copy-last.center] (q-H-copy-right);

        \coordinate[left=1cm of a-init-copy-first.center] (a-init-copy-left);
        \coordinate[right=0.5cm of a-init-copy-last.center] (a-init-copy-right);

        \coordinate[left=1cm of a-cx-1-copy-first.center] (a-cx-1-copy-left);
        \coordinate[right=0.5cm of a-cx-1-copy-last.center] (a-cx-1-copy-right);

        \coordinate[left=1cm of a-cx-2-copy-first.center] (a-cx-2-copy-left);
        \coordinate[right=0.5cm of a-cx-2-copy.center] (a-cx-2-copy-right);

        \node[anchor=east] at (q-left) {$\ket{0}$};
        \node[anchor=east] at (a-left) {$\ket{0}$};
        \node[anchor=east] at (q-init-copy-left) {$\ket{0}$};
        \node[anchor=east] at (q-H-copy-left) {$\ket{0}$};
        \node[anchor=east] at (a-init-copy-left) {$\ket{0}$};
        \node[anchor=east] at (a-cx-1-copy-left) {$\ket{0}$};
        \node[anchor=east] at (a-cx-2-copy-left) {$\ket{0}$};

        \node[anchor=base west,scale=0.9] at ($(q-left) - (0.1,-0.17)$) {$q$};
        \node[anchor=base west,scale=0.9] at ($(a-left) - (0.1,-0.17)$) {$a$};
        \node[anchor=base west,scale=0.9] at ($(q-init-copy-left) - (0.1,-0.17)$)
        {$\copyqb{q_{0.0}}$};
        \node[anchor=base west,scale=0.9] at ($(q-H-copy-left) - (0.1,-0.17)$) {$\copyqb{q_{1.0}}$};
        \node[anchor=base west,scale=0.9] at ($(a-init-copy-left) - (0.1,-0.17)$)
        {$\copyqb{a_{0.0}}$};
        \node[anchor=base west,scale=0.9] at ($(a-cx-1-copy-left) - (0.1,-0.17)$)
        {$\copyqb{a_{1.0}}$};
        \node[anchor=base west,scale=0.9] at ($(a-cx-2-copy-left) - (0.1,-0.17)$) {$\copyqb{a_{0.1}}$};

        \begin{scope}[on background layer]
            \node[rectangle, fill=black!4, minimum width=7cm,
            minimum height = 1.9cm,anchor=north west] (circuit) at ($(q-left.north west) +
            (-0.7,1.1)$) {};
            \node[anchor=north] at (circuit.north) {\footnotesize(a)
            Circuit};

            \node[rectangle, fill=black!4, minimum width=7cm,
            minimum height = 3.05cm,anchor=south west] (extended) at ($(a-cx-2-copy-left.south west) +
            (-0.7,-0.82)$) {};
            \node[anchor=south] at (extended.south) {\footnotesize(b)
            Extension};
        \end{scope}

        \begin{scope}[on background layer]
            \draw[-] (q-left) -- (q-right);
            \draw[-] (a-left) -- (a-right);
            \draw[-] (q-init-copy-left) -- (q-init-copy-right);
            \draw[-] (q-H-copy-left) -- (q-H-copy-right);
            \draw[-] (a-init-copy-left) -- (a-init-copy-right);
            \draw[-] (a-cx-1-copy-left) -- (a-cx-1-copy-right);
            \draw[-] (a-cx-2-copy-left) -- (a-cx-2-copy-right);

            \draw[-,extended] (q-init) -- (q-init-copy-first);
            \draw[-,extended] (q-H-ctrl) -- (q-H-copy);
            \draw[-,extended] (a-init-ctrl) -- (a-init-copy);
            \draw[-] (q-cx-1) -- (a-cx-1);
            \draw[-,extended] (a-cx-1-ctrl) -- (a-cx-1-copy);
            \draw[-] (q-cx-2) -- (a-cx-2);
            \draw[-,extended] (a-cx-2-ctrl) -- (a-cx-2-copy);
        \end{scope}

        \node[anchor=west] (ket-left) at (q-right) {$\ket{0}$};
        \node[anchor=west] at (a-right) {$\ket{0}$};
        \node[anchor=west] at (q-init-copy-right) {$\ket{0}$};
        \node[anchor=west] at (q-H-copy-right) {$\ket{0}$};
        \node[anchor=west] at (a-init-copy-right) {$\ket{0}$};
        \node[anchor=west] at (a-cx-1-copy-right) {$\ket{0}$};
        \node[anchor=west] at (a-cx-2-copy-right) {$\ket{0}$};

        \node[anchor=west] (ket-right) at ($(q-right) + (0.6,0)$) {$\ket{1}$};
        \node[anchor=west] at ($(a-right) + (0.6,0)$) {$\ket{0}$};
        \node[anchor=west] at ($(q-init-copy-right) + (0.6,0)$) {$\ket{0}$};
        \node[anchor=west] at ($(q-H-copy-right) + (0.6,0)$) {$\ket{1}$};
        \node[anchor=west] at ($(a-init-copy-right) + (0.6,0)$) {$\ket{0}$};
        \node[anchor=west] at ($(a-cx-1-copy-right) + (0.6,0)$) {$\ket{1}$};
        \node[anchor=west] at ($(a-cx-2-copy-right) + (0.6,0)$) {$\ket{0}$};

        \node[] at ($(ket-left.center) + (0,0.7)$) {$\tfrac{1}{\sqrt{2}}$};
        \node[] at ($(ket-right.center) + (0,0.7)$) {$\tfrac{1}{\sqrt{2}}$};
        \node[scale=0.8] at ($(ket-left.center)!0.5!(ket-right.center) + (0,0.73)$) {$+$};

        \draw[-,line width=0.2mm,color=my-full-red]
            ($(a-init-copy-right) + (1.2,-0.05)$) --
            ($(a-init-copy-right) + (1.3,-0.05)$) --
            ($(a-cx-2-copy-right) + (1.3,0)$) --
            ($(a-cx-2-copy-right) + (1.2,0)$);
        
        \draw[-,line width=0.2mm,color=my-full-brown]
            ($(a-right) + (1.2,0)$) --
            ($(a-right) + (1.3,0)$) --
            ($(a-init-copy-right) + (1.3,+0.05)$) --
            ($(a-init-copy-right) + (1.2,+0.05)$);

        \draw[-,line width=0.2mm,color=my-full-blue]
            ($(q-right) + (1.2,0)$) --
            ($(q-right) + (1.25,0)$) --
            ($(q-H-copy-right) + (1.25,+0.05)$) --
            ($(q-H-copy-right) + (1.2,+0.05)$);
    \end{tikzpicture}
    
	\begin{subfigure}[t]{0pt}
		\phantomcaption\label{fig:correctness-circuit}
	\end{subfigure}
	\begin{subfigure}[t]{0pt}
		\phantomcaption\label{fig:correctness-extended}
	\end{subfigure}

    \caption{Intuition on the correctness of Reqomp.}
    \label{fig:correctness}
\end{figure}

\section{Correctness}
\label{sec:correctness}

We prove in \cref{app:proof} that Reqomp synthesizes correct uncomputation. In
this section, we provide an intuition of this proof.

\para{Value Index Assertions}
The correctness of Reqomp relies on value indices.
At the end of the algorithm (\cref{lin:reqomp-assertFullyEvolved} when \tfunc{ApplyingStrategy} succeeded, \cref{lin:unqomp-assert} when \reqomp falls back to Reqomp-Lazy), we assert
that the last vertex on all ancilla qubits has value index 0, and that for any
non-ancilla qubit, the value indices of the last vertex are the same for the
original graph and the synthesized graph.
Intuitively, this ensures that ancillae are reset to $\ket{0}$, while other
qubits are preserved.

\pagebreak
Correctness hence relies on the precise formal interpretation of value indices. 
Intuitively, we claim that two vertices on the same qubit with the same value
index \emph{hold the same value}.

\para{Extended Circuits}
To formally define this notion, we introduce the notion of an extended circuit.
We conceptually \emph{extend} a given circuit to allow us to
compare the value of all vertices occurring in the circuit.

\cref{fig:correctness} exemplifies this by extending the example circuit in
\cref{fig:correctness-circuit}, which applies an $H$ gate and two controlled $X$
gates to qubits $q$ and $a$. Overall, the circuit in
\cref{fig:correctness-circuit} yields state
$$
\tfrac{1}{\sqrt{2}} \ket{0}_q\ket{0}_a + \tfrac{1}{\sqrt{2}} \ket{1}_q\ket{0}_a,
$$
which we write in a column-by-column format in \cref{fig:correctness-circuit}
(right).

\cref{fig:correctness-extended} shows our extension of
\cref{fig:correctness-circuit}, copying\footnote{Note that copying using a
controlled $X$ gate does not violate the no-cloning theorem.} the value of each
vertex from the corresponding circuit graph to a fresh qubit.
The name of these \emph{copy qubits} is the same as their corresponding vertex
but underlined, e.g., $\copyqb{q_{0.0}}$ holds the initial state of $q$,
corresponding to vertex $q_{0.0}$.

\para{Value Index}
Intuitively, copy qubits with the same value index and qubit hold the same
value. More precisely, if we write the state produced by the extended circuit as
a sum of computational basis states, in each summand (with a non-null
coefficient), copy qubits with the same value index and qubit hold the same
value.
For example, in every summand (i.e., column) of the
final state in \cref{fig:correctness-extended}, $\copyqb{a_{0.0}}$ and
$\copyqb{a_{0.1}}$ hold value $\ket{0}$ (see red bracket in
\cref{fig:correctness}).

Similarly, each qubit holds the same value as its last copy qubit. For example,
in every summand (i.e., column) of the final state in
\cref{fig:correctness-extended}, $q$ and $\copyqb{q_{1.0}}$ both hold either
value $\ket{0}$ or $\ket{1}$ (see blue bracket in
\cref{fig:correctness}).

In \cref{lem:nullprojcoeffs}~(\cref{app:proof}) we formally prove that these two facts hold for any well-valued circuit graph, as defined in \cref{def:well-valued}.

We further show in \cref{app:proof} that any circuit graph built with
\tfunc{evolveVertex}~(\cref{alg:evolve-vertex-simpl}) is well-valued.

\para{Final Values in the Extended Graph}
The assertion in \cref{lin:reqomp-assertFullyEvolved} (resp. \cref{lin:unqomp-assert}) ensures that in the circuit graph $\un{G}$ built by applying the uncomputation strategy (resp. the circuit graph $\un{G}$ built by Reqomp-Lazy), the last vertex on all ancilla qubits has
value index 0.
Hence, those qubits hold the same value as the initial value of that qubit, i.e.,
$\ket{0}$.
More precisely, consider a circuit graph $G$ with ancilla qubits $A$ and non
ancilla qubits $Q$, and denote $\un{G}$ the circuit graph after uncomputation.
We then have that any summand in the final state after applying the extended
version of $\un{G}$ is of the form $\ket{0...0}_A \otimes \ket{i}_Q \otimes
\ket{...}_{\un{V}}$, where we use $\un{V}$ to denote all the copy qubits in
$E(G)$.

The assertions in \cref{lin:reqomp-assertFullyEvolved} and \cref{lin:unqomp-assert} further check that the
value indices of non-ancilla qubits match their respective last vertices in
$\graph$.
As we show more formally in \cref{app:proof}, this means that if the effect of
the extended version $E(\graph)$ of $\graph$ on some initial state can be
written as
\begin{minipage}{\linewidth}
\footnotesize
\begin{align*}
    \ket{0 \cdots 0}_A \otimes \varphi &\xmapsto{\llbracket E(\graph) \rrbracket} \sum_{\substack{j \in \{0,1\}^{|A|}\\ k \in \{0,1\}^{|Q|}}} \gamma_{jk} \ket{j}_A \otimes \ket{k}_Q \otimes \ket{...}_{\copyqb{V}},
\end{align*}
\end{minipage}
then the effect of the extended version $E(\un{G})$ of $\un{G}$ on the same state can be written as:
\begin{minipage}{\linewidth}
\footnotesize
\begin{align*}
    \ket{0 \cdots 0}_A \otimes \varphi &\xmapsto{\llbracket E(\graphu) \rrbracket} \sum_{\substack{j \in \{0,1\}^{|A|}\\ k \in \{0,1\}^{|Q|}}} \gamma_{jk} \ket{0 \cdots 0}_A \otimes \ket{k}_Q \otimes \ket{...}_{\copyqb{V'}},
\end{align*}
\end{minipage}
where we denote $\un{V'}$ the set of copy qubits in $E(\un{G})$.

\para{Circuit Graph Semantics}
Importantly, the semantics of the unextended circuit follows straight-forwardly
from the semantics of the extended circuit. In \cref{fig:correctness}, simply
ignoring the rows from \cref{fig:correctness-extended} yields the correct final
state. If we similarly ignore the values of the copy qubits $\copyqb{V}$ and
$\copyqb{V'}$ in the two equations above, we recover the correct uncomputation
theorem, for circuits $C$ and $\un{C}$:
\defcorect*

\para{Multiple Graphs}
Note that here we assumed that both $G$ and $\un{G}$ have the same effect, as
they apply the same gates for the same value indices. Proving this formally
requires extra work, done in \cref{lem:valprojcoeffs}~(\cref{app:proof}).

\begin{table*}[!htb]
    \caption{%
        \reqomp results when targeting a specific ancilla qubit reduction compared to
        \unqomp (e.g., $-66.7$ indicates a reduction by $66.7\%$).
        Gate counts are reported as compared to \unqomp (e.g., $70.5$ indicates an increase by $70.5\%$).
        Columns \textbf{Max} and \textbf{Min} report the results for the most
        aggressive settings, respectively optimizing only for number of qubits
        and optimizing only for number of gates.
        Columns \textbf{-75\%}, \textbf{-50\%}, and \textbf{-25\%} report the
        gate counts when achieving the respective ancilla qubit reductions.
        Entries "x" indicate that a given ancilla qubit reduction was not achieved. 
        }
    \label{tab:trade-offsrange}

    \centering
    \footnotesize
    \newcolumntype{U}{S[table-format=2.1,table-column-width=0.4cm, group-digits = false]}
    \begin{tabular}{l|rr|r|r|r|rr|}%
    &    \multicolumn{7}{c|}{\textbf{Ancilla Reduction}} \\
    &    \multicolumn{2}{c}{\textbf{Max}} & \multicolumn{1}{c}{\textbf{-75\%}} & \multicolumn{1}{c}{\textbf{-50\%}} & \multicolumn{1}{c}{\textbf{-25\%}}& \multicolumn{2}{c|}{\textbf{Min}}\\
	\textbf{Algorithm}  &  \multicolumn{1}{c}{{anc (as \%)}}
	&\multicolumn{1}{c}{{gates (as \%)}} & \multicolumn{1}{c}{gates (as \%)}  &
	\multicolumn{1}{c}{gates (as \%)}  & \multicolumn{1}{c}{gates (as \%)}
	&\multicolumn{1}{c}{anc (as \%)} & \multicolumn{1}{c|}{gates (as \%)}  \\
	\hline
    \textbf{Small} & & & & & & & \\
        \input{figures/tradeoffssmallclean.csv}
        \textbf{Big} & & & & & & & \\
        \input{figures/tradeoffsbigclean.csv}
        \end{tabular}
        \vspace{-0.2cm}
\end{table*}

\section{Evaluation}\label{sec:eval}
We have evaluated \reqomp on an existing benchmark to answer the following
research questions:
\begin{enumerate}[label={\bf Q\arabic*},leftmargin=2em]   
    \item Circuit Efficiency: Can \reqomp create efficient circuits in terms of
    number of qubits and gates, while allowing to trade one for the other?
    \label{q:efficiency}
    \item Usability: Is \reqomp fast and directly applicable to a wide range of
    circuits? \label{q:usability}
\end{enumerate}

\para{Implementation}
We implemented \reqomp as a language extension of Qiskit, using Qiskit's
built-in \texttt{AncillaRegister} type to mark ancilla variables in the circuit. As
Qiskit, our extension is implemented in Python.

\begin{figure*}[!htb]
    \centering
    \begin{subfigure}{0.49\linewidth}
        \includegraphics[width = \linewidth]{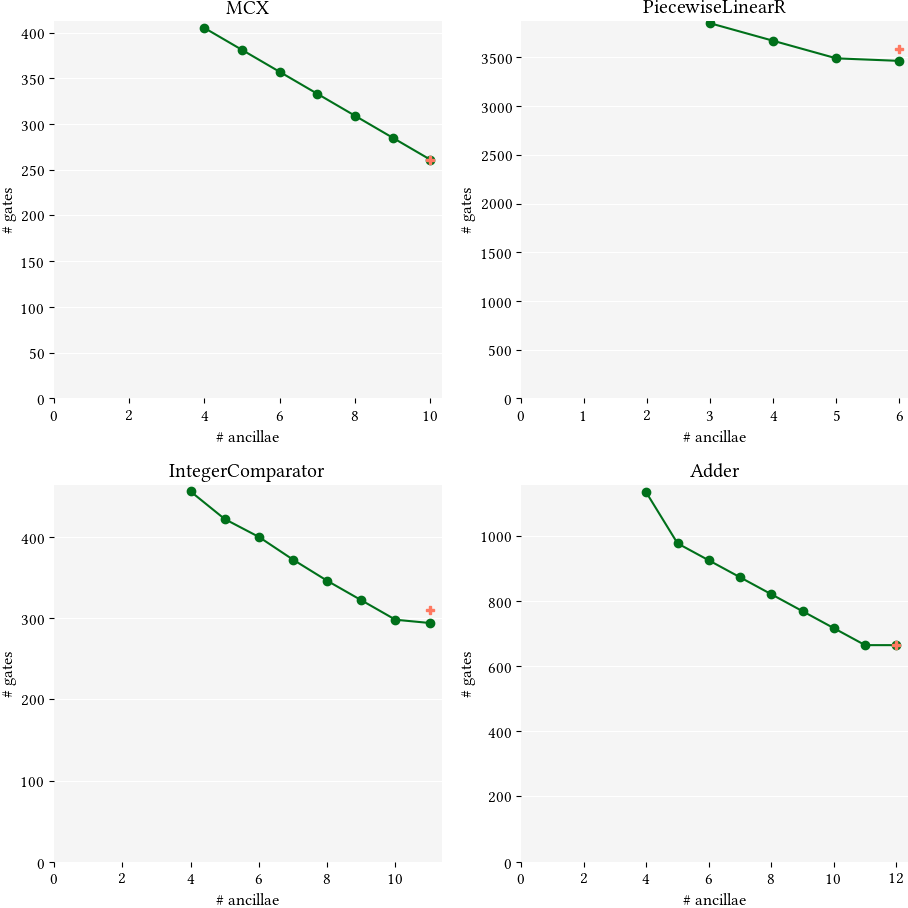}
        \caption{Gate counts for selected small circuits.}
        \label{fig:ancillae-gates-small}
    \end{subfigure}~
    \begin{subfigure}{0.49\linewidth}
        \includegraphics[width = \linewidth]{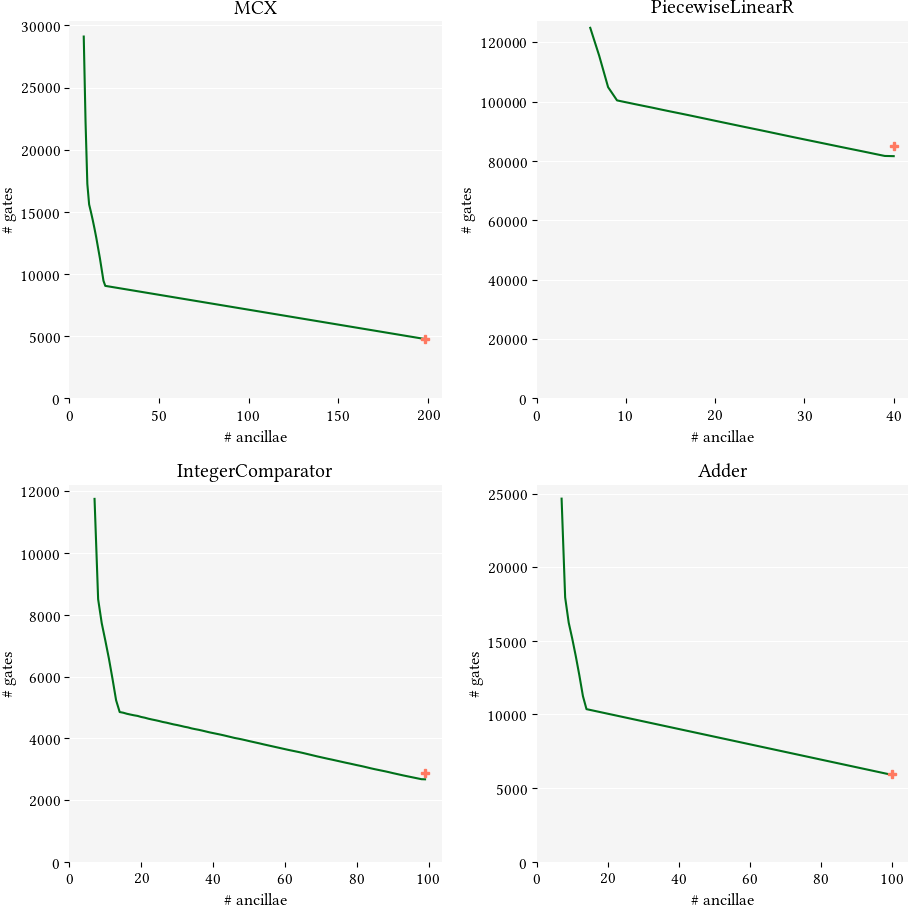}
        \caption{Gate counts for selected big circuits.}
        \label{fig:ancillae-gates-big}
    \end{subfigure}%
    \\%
    \begin{subfigure}{0.49\linewidth}
        \includegraphics[width = \linewidth]{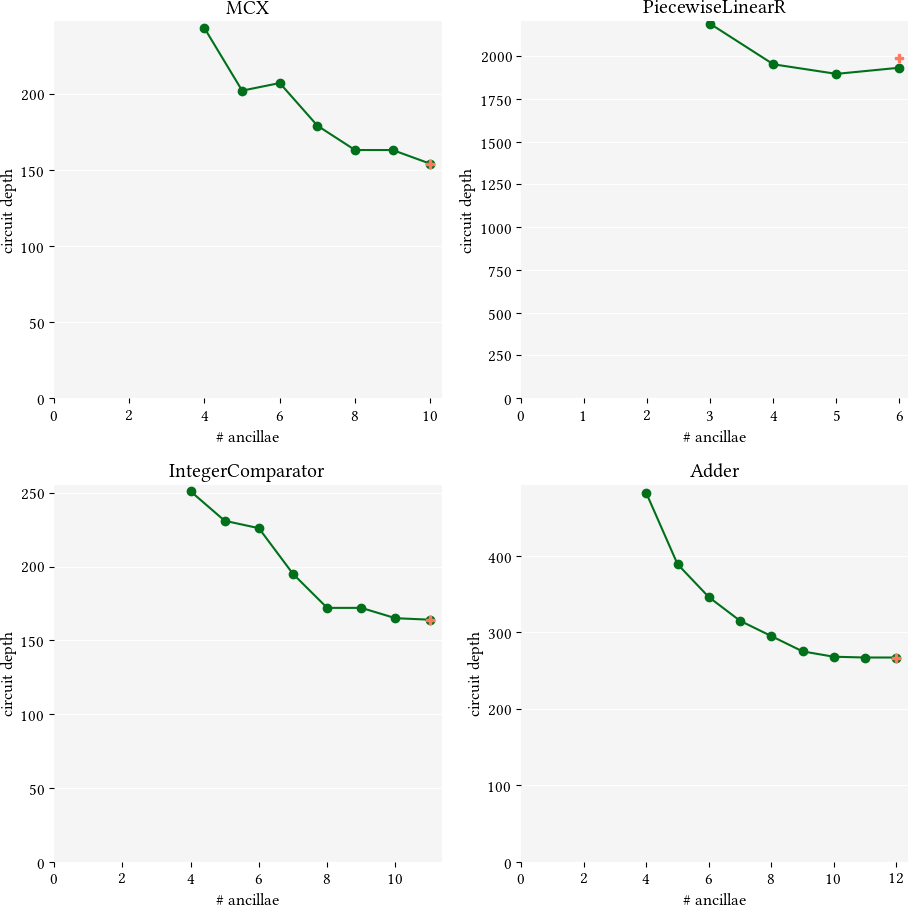}
        \caption{Circuit depth for selected small circuits.}
        \label{fig:ancillae-depth-small}
    \end{subfigure}~
    \begin{subfigure}{0.49\linewidth}
        \includegraphics[width = \linewidth]{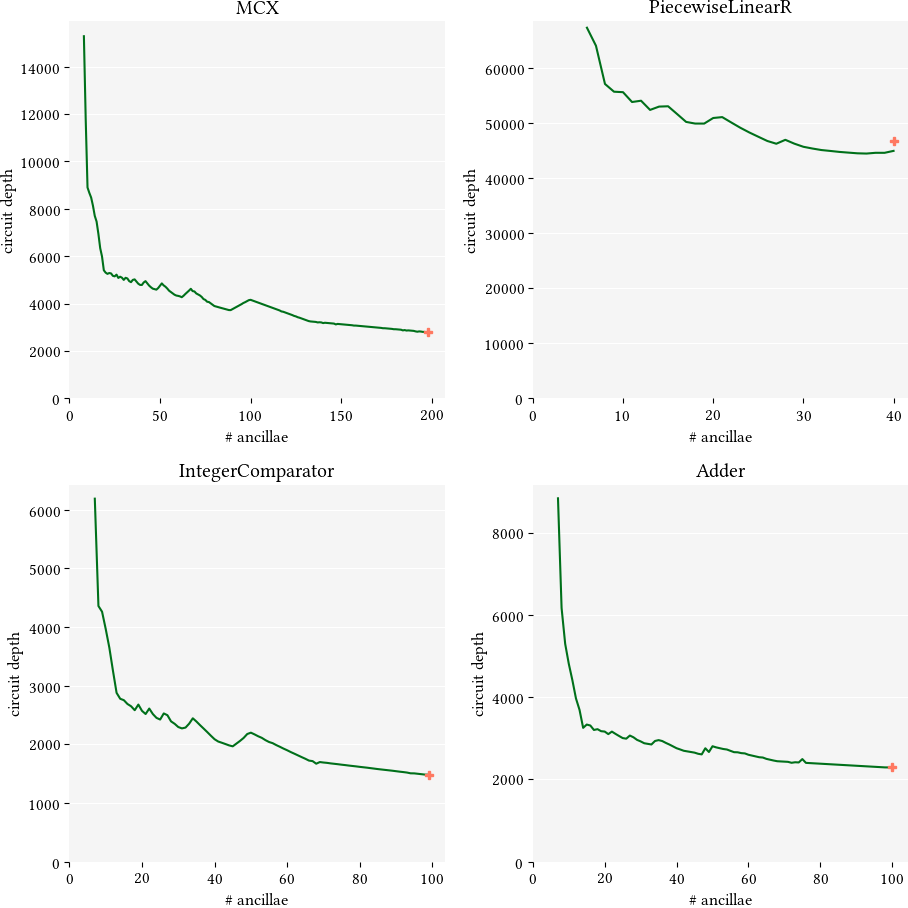}
        \caption{Circuit depths for selected big circuits.}
        \label{fig:ancillae-depth-big}
    \end{subfigure}
    \caption{Gate counts
    (\subref{fig:ancillae-gates-small}--\subref{fig:ancillae-gates-big}) and
    circuit depths
    (\subref{fig:ancillae-depth-small}--\subref{fig:ancillae-depth-big}) for
    given numbers of ancillae, using
    \reqomp~(\raisebox{-0.1em}{\protect\includegraphics[height=0.6em]{./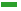}})
    and
    \unqomp~(\raisebox{-0.1em}{\protect\includegraphics[height=0.8em]{./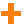}}).
    %
    %
    }
    \label{fig:trade-off}
\end{figure*}
\pagebreak
\subsection{Benchmarks and Baseline}
To evaluate \reqomp, we used the benchmark from
\unqomp~\cite{paradis_unqomp_2021}.
The first column in \cref{tab:trade-offsrange} summarizes the circuits in our
benchmark, separated into "small" and "big" circuits.
While the "small" circuits were taken directly from \unqomp, we have generated
the "big" circuits by re-parametrizing the original circuits to yield bigger
circuits. This allows us to demonstrate the \reqomp also performs well on larger
circuits.

For completeness, we provide the exact parameters for each circuit in
\cref{app:eval}, including the resulting circuit sizes.

\para{Circuits}
To provide an intuition on our benchmark, we explain selected circuits (see
\cite[\secref{7.1}]{paradis_unqomp_2021} for details).

IntegerComparator takes a constant parameter $n$ and multiple input qubits
encoding a value $v$, and flips its output qubit if and only if $v \geq n$.
MCX flips its output qubit if and only if all its input qubits are one.
MCRY applies a rotation to its output qubit if and only if all its control
qubits are one.
PiecewiseLinearR applies a rotation $f(x)$ to its output qubit, where $x$ is the
value on its input qubits and $f$ is piecewise linear.
PolynomialPauliR works analogously, but for polynomial $f$.
WeightedAdder takes as parameters a list of weights $\lambda_0, ... \lambda_n$
and outputs $\sum \lambda_i q_i$ where the $q_i$ are the input values.

\para{Selecting a Baseline}
We provide a thorough overview of related work in \cref{sec:rel-work}. 
Of the many works discussed there, only four can take circuits as input: 
\squarerelwork~\cite{ding_square_2020}, Quipper~\cite{green_quipper_2013}, 
ReQWire~\cite{rand_reqwire_2019} and Unqomp~\cite{paradis_unqomp_2021}. 
Of these, ReQWire can only verify uncomputation in circuits and not synthesize it\footnote{It can synthesize circuits with uncomputation from a boolean formula but we focus here on its possibilities when working directly on circuits.}, and we show in \cref{sec:square} that due to various shortcomings \squarerelwork is not a viable option for uncomputation. This leaves only Quipper and \unqomp. As \cite{paradis_unqomp_2021} showed that \unqomp generally outperforms Quipper, we choose \unqomp as our baseline.

Other works take as input boolean formulas (to be compiled to circuits)~\cite{parent_reversible_2015,parent_revs_2017,bhattacharjee_reversible_2019}, focus on building uncomputation strategies without explaining how to apply them~\cite{meuli_reversible_2019,bennett_timespace_1989}, or do not compile to circuits~\cite{qsharp}. 

\subsection{\ref*{q:efficiency}: Circuit Efficiency}
We now discuss the efficiency of circuits produced by \reqomp in terms of qubits
and gates and compare them to circuits produced by
\unqomp\cite{paradis_unqomp_2021}.

\para{Approach}
For each circuit, we ran \reqomp targeting all possible number of ancilla qubits
\texttt{nAncillaQubits}. We then recorded, for all calls
that terminated without error, the number of qubits and gates of the resulting
circuit (with uncomputation).

We note that \reqomp had to fall back to Reqomp-Lazy for
circuits Multiplier and WeightedAdder, as the ancilla dependencies of these
circuits are not linear. While Reqomp-Lazy succeeds on these circuits and even
outperforms \unqomp, it cannot offer multiple space-time trade-offs.

\para{Results}
\cref{tab:trade-offsrange} summarizes our results.
Note that gate counts are expressed as a percentage of \unqomp gate counts.
For all examples, using the maximum number of ancilla qubits (column \textbf{Min} as
this is the minimal reduction) yields better results than \unqomp for
$10$ circuits, and equivalent results for the remaining $10$ circuits.
For example, \reqomp saves 5.2\% of gates on circuit IntegerComparator, without
requiring additional qubits. This is because \reqomp can identify uncomputation
\emph{already present in the original circuit}, allowing it to avoid unnecessary
operations when uncomputing or recomputing an ancilla or even a control.
Analogous effects occur for PiecewiseLinearR, WeightedAdder, and
Multiplier, where the last two are handled by Reqomp-Lazy.

More importantly, \cref{tab:trade-offsrange} demonstrates that \reqomp can
significantly reduce the number of ancilla qubits compared to \unqomp: by up to
$96\%$ for two examples, and by at least $25\%$ for $16$ out of $20$ circuits.
Importantly, this reduction comes at only a moderate cost in gate count,
below $28\%$ for qubit reductions of $25\%$.
As most quantum computers are more limited in terms of qubits than gates, these
trade-offs are highly favorable.
Further, for some examples the reduction in qubits comes at almost no cost in
gates: for PiecewiseLinearR, reducing by 75\% the number of ancilla qubits only increases the number of gates by~$17.6\%$.

\para{Trade-Offs}
To further demonstrate the gate count cost incurred by these reductions,
\crefrange{fig:ancillae-gates-small}{fig:ancillae-gates-big} show a more fine-grained visualization of the trade-offs between ancilla qubits and gate count.

Overall, we immediately observe that on all circuits, reducing the number of
available ancilla qubits can only increase (and never decrease) the gate count of the
resulting circuit. However, the rate of this increase varies among the different
circuits, as discussed next.

For some benchmarks such as PiecewiseLinearR
(\crefrange{fig:ancillae-gates-small}{fig:ancillae-gates-big}) and
PolynomialPauliR (\cref{tab:trade-offsrange}), \reqomp can drastically reduce
the number of ancillae at almost no cost in terms of gates.

For other benchmarks such as MCX
(\crefrange{fig:ancillae-gates-small}{fig:ancillae-gates-big}) and MCRY
(\cref{tab:trade-offsrange}), \reqomp can still reduce the number of ancillae
substantially, but at a significant cost in terms of gates. In such cases, the
appropriate ancilla reduction depends on the available hardware---a programmer
with access to \reqomp can then systematically select the right trade-off.

Other circuits fall somewhere between these two categories
(\crefrange{fig:ancillae-gates-small}{fig:ancillae-gates-big} and
\cref{tab:trade-offsrange}): \reqomp can reduce the number of ancilla qubits, at a
non-negligible cost in terms of gates.

\para{Very Small Number of Qubits}
\cref{fig:trade-off} further demonstrates that enforcing a very small number of
ancillae typically increases the number of applied gates significantly.
For instance, MCX with 200 controls can be implemented with only 8 ancilla
qubits, but this requires a staggering \num{21831} gates, compared to only
\num{3579} when 200 ancillae are used.
%

Overall, we conclude that enforcing very small number of ancilla qubits is
typically not a good approach.

\para{Depth}
For completeness, \crefrange{fig:ancillae-depth-small}{fig:ancillae-depth-big}
shows the trade-off between ancilla qubits reduction and circuit depth.
As we do not optimize for circuit depth, reducing the number of ancillae
sometimes yields \emph{shorter} circuits. Still, overall, circuit depth behaves
analogously to gate count, generally increasing for reduced ancilla qubits counts, at
different rates depending on the circuit.

Interestingly, in some cases, we can reduce the ancilla count at almost no cost
in circuit depth, even though there is a cost in gate count. For example,
reducing ancillae from $99$ to $25$ on Adder only increases depth by $31\%$, even
though it increases the gate count by $64\%$.

\subsection{\ref*{q:usability}: \reqomp Usability}
\label{sec:eval-expressivity}
We also investigated the usability of \reqomp, showing that it is both fast and
directly applicable to many quantum circuits.

\para{\reqomp Runtime}
Our evaluation indicated that \reqomp is fast: it synthesized uncomputation for
all circuits in \cref{tab:trade-offsrange} within five seconds.

Furthermore, running \reqomp typically takes as much time as decomposing the
resulting circuit to basic gates using Qiskit's built-in \texttt{decompose()}
function.
We hence believe that \reqomp can be integrated into the programmer's workflow
without incurring a significant slowdown.

\para{Applicability}
Recall that even for a circuit where uncomputation is possible in principle,
\reqomp may raise an error.
We therefore investigated how frequently \reqomp succeeds in practice, comparing
it to other tools:

\begin{center}
\begin{tabular}{@{ }l|@{ }c@{ }c@{ }c@{ }}
    & Qfree only & \unqomp & \reqomp \\ \toprule

    \vrule width 0pt height 2.2ex 
    \% examples covered & $\leq$ 50\% & 60\% & 100\% \\ 
\end{tabular}
\end{center}

We find that Reqomp (with the fallback strategy Reqomp-lazy) finds a circuit
with uncomputation for all input circuits. In contrast, \unqomp can only cover
60\% of those circuits directly.
We will explain shortly how we tweaked \unqomp to also cover the remaining 40\%.
Furthermore, only 50\% of the circuits in our benchmark are purely classical,
hence any tool that exclusively supports qfree gates can at most be used on 50\%
of the examples.

\begin{figure}
    \centering
    \begin{tikzpicture}
        \input{figures/eval_unqomp_cannot.tex}

        \node (top-line) at ($(p-X-p-0)!-1.2!(a-0)$) {};
        \node (bot-line) at ($(p-X-p-0) + (0, 0.1cm)$) {};

        \node (p00-x) at ($(p-X-p-0)!-1.2!(p-X-a-0)$) {};
        \node[color=my-full-blue] (p00) at (p00-x |- top-line) {\footnotesize$p_0$};
        \path[->, color=my-full-blue] (p00)  edge [] (p00-x |- bot-line);

        \node (p10-x) at ($(p-X-p-0)!0.9!(p-X-a-0)$) {};
        \node[color=my-full-orange] (p10) at (p10-x |- top-line) {\footnotesize$p_1$};
        \path[->, color=my-full-orange] (p10)  edge [] (p10-x |- bot-line);

        \node (p01-x) at ($(p-X-p-1)!0.3!(p-X-p-2)$) {};
        \node[color=my-full-blue] (p01) at (p01-x |- top-line) {\footnotesize$p_0$};
        \path[->, color=my-full-blue] (p01)  edge [] (p01-x |- bot-line);

        \node (p11-x) at ($(p-X-p-2)!0.4!(p-X-p-3)$) {};
        \node[color=my-full-orange] (p11) at (p11-x |- top-line) {\footnotesize$p_1$};
        \path[->, color=my-full-orange] (p11)  edge [] (p11-x |- bot-line);

        \node(last-vert) at ($(p-X-p-3.west)!2!(p-X-p-3.east)$) {};
        \node (p02-x) at ($(p-X-p-3)!0.7!(last-vert)$) {};
        \node[color=my-full-blue] (p02) at (p02-x |- top-line) {\footnotesize$p_0$};
        \path[->, color=my-full-blue] (p02)  edge [] (p02-x |- bot-line);

        \begin{scope}[on background layer]
            \draw [split=0.2cm, color= my-full-blue, opacity=0.2]
                (p-X-p-0.west) -- 
                (p-label |- p-X-p-0.west) --
                (p-X-p-0.west) -- 
                (p-label |- p-X-p-0.west) --
                cycle;
            \draw [split=0.2cm, color= my-full-orange, opacity=0.2]
                (p-X-p-0.east) -- 
                (p-X-p-1.west) --
                (p-X-p-0.east) -- 
                (p-X-p-1.west) --
                cycle;
            \draw [split=0.2cm, color= my-full-blue, opacity=0.2]
                (p-X-p-1.east) -- 
                (p-X-p-2.west) --
                (p-X-p-1.east) -- 
                (p-X-p-2.west) --
                cycle;
            \draw [split=0.2cm, color= my-full-orange, opacity=0.2]
                (p-X-p-2.east) -- 
                (p-X-p-3.west) --
                (p-X-p-2.east) -- 
                (p-X-p-3.west) --
                cycle;

            \draw [split=0.2cm, color= my-full-blue, opacity=0.2]
                (last-vert) --
                (p-X-p-3.east) -- 
                (last-vert) --
                (p-X-p-3.east) -- 
                cycle;
        \end{scope}
        
                    \node (TopBoxComp) at ($(p-X-p-0)!-0.55!(a-0)$){\phantom{t}};
                    \node (BotBoxComp) at ($(a-X-a-0)!-0.55!(p-X-a-0)$){\phantom{b}};
                    \node (LeftBoxComp) at ($(p-X-p-0)!-0.7!(p-X-a-0)$){\phantom{l}};
                    \node (RightBoxComp) at ($(p-X-p-1)!-0.7!(p-X-a-0)$){\phantom{r}};

                    \path [style=square-box,  color = my-full-red]
                    (TopBoxComp -| LeftBoxComp) -- 
                    (TopBoxComp -| RightBoxComp) -- 
                    (BotBoxComp -| RightBoxComp) -- 
                    (BotBoxComp -| LeftBoxComp) -- 
                    cycle;
                    
                    \node (LeftBoxComp02) at ($(p-X-p-2)!-0.7!(p-X-a-1)$){\phantom{l}};
                    \node (RightBoxComp02) at ($(p-X-p-3)!-0.7!(p-X-a-1)$){\phantom{r}};

                    \path [style=square-box, color=my-full-green]
                    (TopBoxComp -| LeftBoxComp02) -- 
                    (TopBoxComp -| RightBoxComp02) -- 
                    (BotBoxComp -| RightBoxComp02) -- 
                    (BotBoxComp -| LeftBoxComp02) -- 
                    cycle;
    \end{tikzpicture}
	\caption{Gate requiring definition in Unqomp.}
	\label{fig:extra_gate_unqomp}
\end{figure}

\para{\unqomp Limitations}
\unqomp can only handle 60\% of the circuits in our evaluation directly, because
it cannot accurately handle uncomputation that already occurs in the input circuit.
\cref{fig:extra_gate_unqomp} illustrates this on a circuit applying a
$\overline{C}X$ gate (see red box on the left), where the bar over $C$ indicates
that the control is inverted.
To invert the controls, the circuit applies an $X$ gate to invert the control,
and another $X$ gate to restore the value of the control.
To uncompute $a$ it is hence necessary to track that after two $X$ gates, $p$ is
back to its original value shown as $p_0$ in \cref{fig:extra_gate_unqomp}, and
therefore applying a third $X$ gate will bring its value to $p_1$ again,
allowing to uncompute $a$.
Value indices allow \reqomp to precisely track those value changes, and insert
the uncomputation gates (in the green box on the right).
In contrast, \unqomp fundamentally cannot allow for recomputation, as its correctness relies on each operation being computed and uncomputed exactly once. It further does not recognize uncomputation or recomputation already present in the original circuit. Therefore, in \cref{fig:extra_gate_unqomp}, \unqomp cannot recognize that the second $X$ gate recovers the original value of $p$.
Even if it did, it could not recompute $p_1$ to uncompute $a$. 
%

In our evaluation (\cref{tab:trade-offsrange}), we bypassed this type of issue
by defining the red block on the left as a custom gate controlled by $p$.
\unqomp then never decomposes this new gate, assumes it keeps $p$ constant, and
places it to uncompute $a$.
Unfortunately, this approach makes \unqomp harder to use, and in some cases
makes the resulting circuit less efficient.

\section{Related Work}
\label{sec:rel-work}
We now discuss works related to \reqomp.

\subsection{\squarerelwork}
\label{sec:square}
\input{figures/loc-dep-uncomp.tex}
Even though it cannot synthesize uncomputation code,
\squarerelwork~\cite{ding_square_2020} looks very closely related to \reqomp at
first sight.
Specifically, it presents "a compiler that automatically [places uncomputation]
in order to manage the trade-offs in qubit savings and gate
costs"~\cite[\secref{1}]{ding_square_2020}.
Unfortunately, \squarerelwork suffers from various shortcomings that prevent a
meaningful comparison to \reqomp.

\para{Square Problem Statement}
\squarerelwork takes as input a program defining a qfree circuit (non qfree gates are not supported).
In this program, each function consists of the three blocks \emph{Compute}
(indicating forward computation), \emph{Store} (indicating computation of
outputs), and \emph{Uncompute} (indicating uncomputation).
\squarerelwork then compiles this program to a circuit by arranging these
blocks, possibly repeating blocks when recomputation is helpful.

\squarerelwork defines three different strategies for interleaving the
blocks. Lazy (uncompute as late as possible), Eager (uncompute as early as
possible),  and finally \squarerelwork itself, using a custom heuristic. For the example $CCCH$ in \cref{fig:pb-stmt-big-circ}, Lazy would correspond to the 3-qubit strategy shown in \cref{fig:pb-stmt-big-circ-lazy} and Eager to the 2-qubit strategy shown in \cref{fig:pb-stmt-big-circ-eager}. We now present the main shortcomings of \squarerelwork. 

\para{Constant Compute/Uncompute Blocks}
As mentioned in \cref{sec:intro}, the gates needed to uncompute an ancilla
variable may depend on where this uncomputation occurs in the circuit.
It is hence impossible to define fixed \emph{Compute} and \emph{Uncompute}
blocks to be applied anywhere.

For instance, consider the circuit in \cref{fig:loc-dep-lazy}.
It uses three ancilla variables $a$, $b$, and $c$ to compute the output variable $r$ from the input $i$.
\cref{fig:loc-dep-lazy} highlights the Compute and Uncompute blocks \squarerelwork would consider, namely blocks $a$, $b$ and, $c$ for computation and blocks $a^\dagger$, $b^\dagger$, and $c^\dagger$ for uncomputation.
Note how the value of qubit $i$ is changed by block $b$, and restored later by block $b^\dagger$, ensuring that qubit $i$ has the same value for the $CX$ gate in block $a^\dagger$ as it had in block $a$.
Now, if we want to save one ancilla qubit by uncomputing ancilla variable $a$ early, we get the circuit shown in \cref{fig:loc-dep-eager}.
Here, when uncomputing $a$ for the first time, the value of $i$ has been changed in block $b$ and is not yet restored.
To correctly uncompute $a$ in the block $a^\dagger_2$ (different from the block $a^\dagger$), it is hence necessary to restore $i$ using an $X$ gate before using it as a control to uncompute $a$.
Similarly, block $b_2^\dagger$ must change the value of $i$ again.

Not accounting for the above, \squarerelwork assumes that no matter its placement, uncomputation code can be kept unchanged.
In particular, its eager strategy would use the Compute and Uncompute blocks from \cref{fig:loc-dep-lazy}, yielding \cref{fig:loc-dep-square}.
This is clearly incorrect as this circuit has different semantics than the one in \cref{fig:loc-dep-lazy}.
For example, for input $\ket{0}_i\ket{0}_t$, \cref{fig:loc-dep-lazy} produces state $\ket{0}_i\ket{0}_t$ while \cref{fig:loc-dep-square} produces state $\ket{0}_i\ket{1}_t$ (assuming ancillae are in state $\ket{0}$).

We note that \squarerelwork does not exclude such patterns---in fact its
\texttt{little-belle}
benchmark contains an analogous pattern.~\footnote{Benchmark
\texttt{little-belle} is available at
\url{https://github.com/epiqc/Benchmarks/blob/master/bench/square-cirq/synthetic/little_belle.py}.
We note that different uncomputation strategies do not yield different results
on it, as it does not contain gates modifying the output and hence is
semantically equivalent to the identity.}

\para{Incomplete Uncomputation}
Besides only supporting fixed uncomputation code, \squarerelwork may also skip uncomputation of some ancilla variables.
For some examples evaluated in \cite{ding_square_2020}, the implementation of the lazy strategy does not insert any uncomputation code at all, leaving all ancilla variables dirty, while the eager strategy uncomputes all of them.
Specifically, we believe that the reported differences between strategies in the
\squarerelwork publication (\cite[\tabref{III}]{ding_square_2020}) on the benchmarks\footnote{Available at
\url{https://github.com/epiqc/Benchmarks/tree/master/bench/square-cirq/application}.}
RD53, 6SYM, 2OF5, and ADDER4 are only due to leaving some ancillae dirty---as
these benchmarks do not contain nested uncomputation, the order of uncomputation
should not make a difference.

\para{Additional Parameters}
Finally, the
implementation 
of
\squarerelwork is inconsistent with the system described in
\cite{ding_square_2020}. Specifically, using the interface to specify
\emph{Compute} blocks requires providing $7$ parameters, and some benchmarks
evaluated in~\cite{ding_square_2020} also contain \emph{Unrecompute} and
\emph{Recompute} blocks not mentioned in the
publication~\cite{ding_square_2020}. Even though the authors provided us with brief explanations of these parameters on request, we could not confidently derive correct parameters for new benchmarks.

\subsection{Purely Classical Circuits}
Most works synthesizing uncomputation cannot handle non-qfree
gates~\cite{rand_reqwire_2019, green_quipper_2013, parent_reversible_2015, parent_revs_2017, majumdar_verified_2017}.~\footnote{\cite{rand_reqwire_2019} can verify uncomputation for non qfree circuits, but can synthesize it only for qfree ones.} 
It has already been established~\cite{paradis_unqomp_2021} that using such works
on quantum circuits by separating out the qfree subparts typically yields
inefficient circuits, and is sometimes even impossible.

In the following, we discuss works which only support qfree gates, and define a custom strategy allowing to trade qubits for gates. 
We have already discussed \squarerelwork in \cref{sec:square}.

\para{Boolean Functions}
\textsc{Revs}~\cite{parent_reversible_2015, parent_revs_2017} translates irreversible
classical functions to reversible circuits. It focuses on optimization
possibilities during the translation from boolean functions to reversible
circuits, but also offers an uncomputation strategy, however without the option
of trading qubits for gates.

Similarly, \cite{bhattacharjee_reversible_2019} also translates boolean
specifications to reversible circuit. While it introduces another uncomputation
heuristic, it also cannot trade qubits for gates.

We expect that both of those strategies could be incorporated into \reqomp,
possibly yielding more efficient circuits.

\para{Pebble Games}
Multiple works present uncomputation strategies for classical reversible
computation, which can be reduced to solving \emph{pebble
games}~\cite{bennett_timespace_1989}.
Importantly, while pebble games operate on dependency graphs on values, \reqomp
operates on quantum circuits.
In particular, pebble games assume all values can be uncomputed, which is
incorrect for non-qfree gates. Further, a direct translation of circuits to such
graphs would ignore repeated values, leading to issues analogous to
\cref{fig:extra_gate_unqomp}. In contrast, conflating repeated values can lead
to cyclic dependencies, which are not supported by pebble games.

Knill~\cite{knill_analysis_1995} provides an optimal yet efficient solution for
linear dependencies. As most circuits we encounter in practice exhibit linear
dependencies, \reqomp uses the same uncomputation strategy.
Meuli et al.~\cite{meuli_reversible_2019} suggest using a SAT-solver to handle
arbitrary dependencies, which may be a possible extension of \reqomp.

\subsection{Non-Qfree Circuits}
We now discuss works offering uncomputation for non-qfree circuits. 

\para{Language Level}
Quantum languages like Quipper~\cite{green_quipper_2013} and Q\#~\cite{qsharp} offer
convenience functions to automatically insert uncomputation. However, these
functions are often tedious to use, and may insert incorrect uncomputation (see
\cite[\secref{8}]{paradis_unqomp_2021} for details).

Silq~\cite{bichsel_silq_2020} uses a type system to detect which variables can be
safely uncomputed, but does not synthesize this uncomputation.
Overall, none of those works can constrain the number of ancillae used.

\para{Circuit Level}
We are aware of only two works supporting uncomputation for non-qfree circuits.
ReQWire~\cite{rand_reqwire_2019} can only verify user supplied uncomputation (in
the case of non-qfree circuits).
Unqomp~\cite{paradis_unqomp_2021} allows to synthesize uncomputation for quantum
circuits, but cannot trade qubits for gates. Further, as discussed in \cref{sec:eval}, it uses a notion of circuit graphs that does not allow to track qubit values and therefore is unable to uncompute directly many examples that Reqomp can handle.

\section{Conclusion}
We introduced \reqomp, a method to synthesize and place efficient uncomputation for quantum circuits with space constraints.
\reqomp is proven correct and can easily be integrated into circuit based quantum languages such as Qiskit.
We demonstrate in our evaluation that \reqomp is widely applicable and yields wide ranges of trade-offs in space and time, for instance allowing to generate tightly space constrained circuits by using only a few ancilla qubits.

\message{^^JLASTBODYPAGE \thepage^^J}

\bibliography{references}
\bibliographystyle{quantum}

\message{^^JLASTREFERENCESPAGE \thepage^^J}


\ifincludeappendixx
	\appendix
	\begin{table}[t]
	\footnotesize
	\begin{tabular}{lp{4.3cm}}
			\textbf{Symbol} & \textbf{Meaning} \\
			$\i$ & Imaginary unit \\
			$o, p, q, r, t, u, \dots$ & Qubits \\
			$a, a^{(0)}, b, c, d, \dots$ & Ancilla qubits \\
			$n$ & Number of qubits \\
			$C$ & Circuit \\
			$\graph=(V,E)$ & Circuit graph \\
			$\graphval=(V^{\mathit{val}}, E^{\mathit{val}})$ & Value graph \\
			$\graphu=(\un{V},\un{E})$ & Synthesized circuit graph \\
			$v,v',\un{v},w,\dots$ & Vertex \\
			$s$ & State index \\
			$i$ & Instance index \\
			$q_{s.i}, p_{0.1}, r_{1.0}, \dots$ & Vertex with explicit
			$q$, $s$, $i$ \\
			$c, \un{c}$ & Control vertex \\
			$\varphi, \phi, \psi, \sum_{j} \gamma_j \ket{j}$ & Quantum state \\
			$\gamma,\gamma',\un{\gamma}, \lambda,\lambda',\un{\lambda}, \dots$ & Complex coefficient (see above)  \\
			$j, k, l$ & Variables to sum over (see above) \\
			$Q$ & Set of qubits \\
			$A$ & Set of ancilla qubits \\
			$R$ & Set of non-ancillae qubits (rest) \\
			$F \colon \{0,1\}^{n+1} \to \{0,1\}$ & Classical function
			defining qfree gate with $n$ controls \\
			$U$ & (Unitary) gate (e.g., $X$ or $CX$) \\
			$\fme{G}$ & Semantics of a circuit graph, as a function over quantum states\\
			$\extended{G}$ & Extended graph of $G$ \\
			$\copyqb{q_{s.i}}$ & Qubit in $\extended{G}$ holding a copy of $q_{s.i}$\\
			$\copysem{G}_p$ & Coefficient for the projection $p$ of the semantics of $\extended{G}$ 
	   \end{tabular}
	   \caption{Notational conventions used throughout this work.}
	   \label{tab:notation}
\end{table}

\section{Notational Conventions}
\cref{tab:notation} summarizes notational conventions used in this work.


\section{Algorithms}

\begin{figure*}
    \footnotesize
	\begin{algorithmic}[1]
		\continueLineNumber
		\Function{PartitionAncillae}{\null}
			\State $\ganc \gets \tfunc{AncillaDependencies}(\tvar{\graph})$
			\State $\tvar{comps} \gets \tprim{ConnectedComponents}(\ganc)$ \Comment{Subgraphs of $\ganc$}
			\State $\gadeps \gets (\tvar{comps}, \{\})$ \Comment{Each comp is a vertex of $\gadeps$}
			\For{$\tvar{comp} \in \tvar{comps}$}
				\For{$\tvar{comp'} \in \tvar{comps}$}
					\If{
						$\exists $ path from $\tvar{c} \in \tvar{comp}$
						to $\tvar{c'} \in \tvar{comp'}$ in $\graph$
					}
						\State \graphPrim{addEdge}{\gadeps}{\tvar{comp}, \tvar{comp'}}
					\EndIf
				\EndFor
			\EndFor
			\State \textbf{assert} \gadeps{} has no cycles
		\State \textbf{return} $\tvar{comps}$ in topological order according to $\gadeps$
		\EndFunction
		\State
		\Function{AncillaDependencies}{$\graph$}
		\Comment{Dependency graph on ancilla qubits}
		\State $V_a \gets \{\tvar{v}.\tfield{qbit} \mid \tvar{v} \in
		\tvar{\graph}, \tvar{v}.\tfield{isAnc}\}$
			\State $E_a \gets \{(\tvar{v}.\tfield{qbit},
			\tvar{v'}.\tfield{qbit}) \mid \tvar{v} \ctrlEdge \tvar{v'} \in G\} \cap V_a \times V_a$
			\State $G_a \gets (V_a, E_a)$
		\EndFunction
        \recordLineNumber
    \end{algorithmic}
    \caption{Partitioning the uncomputation problem into independent subproblems.}
    \label{alg:partition}
\end{figure*}

\subsection{Partitioning}\label{app:partitioning}

\cref{alg:partition} shows the algorithm for partitioning the input graph.


\begin{figure}[t!]
    \footnotesize
	\begin{algorithmic}[1]
		\continueLineNumber
		\Function{getAvailCtrl}{\tvar{c}: Vertex, \tvar{\un{v}}: Vertex, \tvar{I}:Set[Qb]}\label{lin:reqompControlStrategy-getAvailableControl}
			\State $\tvar{\un{c}} \gets
			\graphPrim{getVertex}{\graphu}{\tvar{c}.\tfield{qbit},
			\tstring{"last"}, \tvar{c}.\tfield{valIdx}}$\label{lin:reqompControlStrategy-default-start}
			\If{\graphPrim{isAvailable}{\graphu}{\tvar{\un{c}},
			\tvar{\un{v}}}}
				\State \textbf{return} $\tvar{\un{c}}$\label{lin:reqompControlStrategy-default-end}
			\Else
				\State $\tvar{\un{c}'} \gets \graphPrim{getVertex}{\graphu}{\tvar{c}.\tfield{qbit}, \tstring{"last"}}$\label{lin:reqompControlStrategy-evolve-start}
				\State $\tvar{\un{c}} \gets \tfunc{evolveVertexUntil}(\tvar{\un{c}'},
				\tvar{c}.\tfield{valIdx}, I)$\label{lin:getAvailCtrl-evolve}
				\State \textbf{return} \tvar{\un{c}}\label{lin:reqompControlStrategy-evolve-end}
			\EndIf
		\EndFunction

		\State

		\Function{assertFullyEvolved}{\null}
		    \State \Comment{Abort if values were evolved incorrectly}
			\For{$\tvar{v} \in$ final vertices in \graph}
				\State $\tvar{\un{v}} \gets \graphPrim{getVertex}{\graphu}{\tvar{v}.\tfield{qbit}, \tstring{"last"}}$
				\If{\tvar{\un{v}}.\tfield{isAnc}}
					\State \textbf{assert} \tvar{\un{v}}.\tfield{valIdx} = 0
				\Else
					\State \textbf{assert} \tvar{\un{v}}.\tfield{valIdx} = \tvar{v}.\tfield{valIdx}
				\EndIf
			\EndFor
		\EndFunction
		\recordLineNumber
	\end{algorithmic}

	\begin{algorithmic}[1]
		\continueLineNumber
		\State
		\Function{getPath}{\tvar{q}: Qubit, \tvar{from}: int, \tvar{to}: int}
			\State $p \gets \tfunc{shortestPathInValueGraph}(q, from, to)$
			\If{$q.\tfield{isAnc}$}
				\State \textbf{return} $p$
			\EndIf
			\State $p' \gets []$
			\State $\tvar{r} \gets {[}from + 1, to{]} \text{ if } from < to \text{ else } {[}to, from - 1{]}$ 
			\State $v \gets from$
			\For{$i$ in $r$}
				\If{$i \notin p$ and $q_{i.0} \notin \un{G}$}
					\State $p' \gets p' + \tfunc{shortestPathInValueGraph}(q, v, i)$
					\State $v \gets i$
				\EndIf
			\EndFor
			\State $p' \gets p' + \tfunc{shortestPathInValueGraph}(q, v, to)$
			\State \textbf{return} $p$
		\EndFunction
		\recordLineNumber
	\end{algorithmic}
    \caption{Convenience functions leveraged by Reqomp.}
    \label{alg:reqomp-convenience}
\end{figure}

\subsection{Reqomp Convenience Methods}\label{app:reqomp-convenience-methods}

\cref{alg:reqomp-convenience} shows convenience functions omitted from Reqomp.

\para{GetPath}
The function $\tfunc{getPath}$ used by $\tfunc{evolveVertexUntil}$ is shown in \cref{alg:reqomp-convenience}. For ancilla variables, it simply returns the shortest path between the two values in the value graph. However for non ancilla variables, it forces the computation of intermediate values that may not have been computed yet. This could happend for a circuit such as:

$$    \Qcircuit @C=0.4em @R=0.1em @!R {
					& \lstick{{q}} & \gate{X} & \gate{X} & \gate{H}
				}$$

Here the value graph is:
\begin{center}
	\begin{tikzpicture}
	
		\matrix[row sep=2mm,column sep=0mm, inner sep=0mm, nodes={inner sep=1mm}] {
			\node (s0)  {$q_1$}; &\node (r0p)  {\phantom{llll}}; & \node (s1)  {$q_0$}; &\node (r1p)  {\phantom{llll}}; & \node (s2)  {$q_2$}; \\
		};
	
		\path[->]          (s0)  edge  [bend left=20] node[above] {\scriptsize $X$} (s1);
		\path[->]          (s1)  edge  [bend left=20] node[below] {\scriptsize $X$} (s0);
		\path[->]          (s1)  edge   node[above] {\scriptsize $H$} (s2);
	\end{tikzpicture}
\end{center}
Therefore, if we want to compute $q_2$ from $q_0$, the shortest path is simply $q_0 \targEdge q_2$. However as $H$ is not qfree, once $q_2$ has been computed, it can never be uncomputed again, and therefore, we can never compute $q_1$, which may be needed for some later computation. To correct this, we introduce $q_1$ (if it has not already been computed in $\un{G}$) in the path, giving:

$$q_0 \targEdge q_1 \targEdge q_0 \targEdge q_2$$

\begin{figure*}[t]
	\footnotesize
	\begin{algorithmic}[1]
		\continueLineNumber
		\Function{getLinearStrat}{$\tvar{cc}$: Qubit, $\nq$: int}
			\State $\tvar{sortedAncillae} \gets \tfunc{topoSort}(\tvar{cc})$
			\State \textbf{return} $ [(\tvar{sortedAncillae}[i], b) \text{ for }(i, b) \in \tfunc{stepsDP}(|\tvar{sortedAncillae}|, \nq,
			\false)]$
		\EndFunction
		\State
		\Function{stepsDP}{$\na$: int, $\nq$: int, $\tvar{uncLast}$: bool}
			\If{return value was computed previously}
				\State \textbf{return} previously computed value\Comment{memoization}
			\EndIf
			\If{$\na = 0$}
				\State \textbf{return} $[]$
			\EndIf
			\If{$\na = 1$}
				\If{$\nq = 0$}
					\State \textbf{return} \textbf{null}
				\EndIf
				\If{\tvar{uncLast}}
					\State \textbf{return} $[(0, \true), (0, \false)]$
				\Else
					\State \textbf{return} $[(0, \true)]$
				\EndIf
			\EndIf
			\For{$\tvar{m} \in \{1, \dots, \na - 1\}$}
				\If{\tvar{uncLast}}
					\State $\tvar{toM} \gets \tfunc{stepsDP}(\tvar{m}, \nq, \false)$
					\Comment{$0 \rightarrow \tvar{m}$}
					\State $\tvar{fromM} \gets [(i + m, b) \text{ for } (i, b) \in \tfunc{stepsDP}(\na - \tvar{m}, \nq - 1,\true)]$
					\Comment{$\tvar{m} \rightleftarrows \tvar{\na}$}
					\State $\tvar{cleanM} \gets [(i, \lnot b) \text{ for } (i, b) \in \tfunc{reverse}(\tfunc{stepsDP}(m,\nq,\false))]$
					\Comment{$0 \leftarrow \tvar{m}$}
				\Else
					\State $\tvar{toM} \gets \tfunc{stepsDP}(\tvar{m}, \nq, \false)$
					\Comment{$0 \rightarrow \tvar{m}$}
					\State $\tvar{fromM} \gets [(i + m, b) \text{ for } (i, b) \in \tprim{stepsDP}(\na - m, \nq - 1,\false)]$
					\Comment{$\tvar{m} \rightleftarrows \tvar{\na}$}
					\State $\tvar{cleanM} \gets [(i, \lnot b) \text{ for } (i, b) \in \tfunc{reverse}(\tfunc{stepsDP}(m,\nq-1,\false))]$
					\Comment{$0 \leftarrow \tvar{m}$}
				\EndIf
				\State $\tvar{steps_m} \gets \tvar{toM} +
				\tvar{fromM} + \tvar{cleanM}$
			\EndFor
			\State \textbf{return} $\argmin_{\tvar{steps_m}} \tprim{cost}(\tvar{steps_m})$
		\EndFunction
		\recordLineNumber
	\end{algorithmic}
	\caption{Optimal uncomputation strategy for linear graphs.}
	\label{alg:linear-steps}
\end{figure*}

\subsection{Linear Steps}

\cref{alg:linear-steps} shows \tfunc{getLinearStrat}. It is adapted from \cite{knill_analysis_1995}: we added the $\tvar{uncLast}$ parameters that allows us to apply it to ancillae only (that is we want all qubits to be computed once then uncomputed whereas the original algorithm did not uncompute the last qubit in the dependency line).
	\section{Formal Correctness Proof}\label{app:proof}

In the following, we provide a formal proof that Reqomp synthesizes correct
uncomputation according to \cref{def:correct}.

\subsection{Definitions and Helper Lemmas}\label{app:definition-lemmas}

We first define what we consider to be a valid circuit graph, following
\cite{paradis_unqomp_2021}:
\begin{restatable}[Valid Circuit Graph]{mydef}{defvalid}
	\label{def:valid}
    A circuit graph is valid iff
	\begin{enumerate}[label=(\roman*)]
		\item its init vertices have no incoming target
    edge while gate vertices have exactly one, 
		\item all its vertices have at
    most one outgoing target edge
		\item its anti-dependency edges can be
    reconstructed from its control and target edges according to the rule
    discussed in \cref{sec:circ-graphs:circ-g}, 
		\item the
    number of incoming control edges of each gate vertex $v$ matches the number
    of controls of the gate of $v$
		\item $G$ is acyclic.
	\end{enumerate}
\end{restatable}

In a valid circuit graph, we can define for any non init vertex $n$ its predecessor $\tfunc{pred}(n)$ as the only vertex $m$ such that $m \targEdge n$ (the target edge from $m$ goes to $n$). We can also define for any qubit $q$ its last vertex $\tfunc{last}(q)$: it is the only vertex on qubit $q$ with no outgoing target edge.

We now recall the well-valued circuit graph definition.
\begin{restatable}[Well-valued Circuit Graph]{mydef}{defvalid}
    We say a valid circuit graph is well valued iff:
	\begin{enumerate}[label=(\roman*)]
		\item all vertex names are of the form $q_{s.i}$ where $q$ is the name
		of the vertex qubit, $s$ and $i$ are natural numbers
		\item there are no duplicate vertices
		\item the init vertex on each qubit is named $q_{0.0}$ and for any
		$q_{s.i}$ in $G$, $q_{s.0}$ is in $G$
		\item any gate vertex $q_{s.i}$ with $s>0$ satisfies one of the following:
		\subitem \textbf{(fwd)}
		$\tfunc{valIdx}(\tfunc{pred}(q_{s.i})) =
		\tfunc{valIdx}(\tfunc{pred}(q_{s.0}))$ and $q_{s.i}$ and $q_{s.0}$ have
		the same gate and same control vertices (up to their instance indices)
		\subitem \textbf{(bwd)} if we denote $s' = \tfunc{valIdx}(\tfunc{pred}(q_{s.i}))$,
		we have that  (i)~$\tfunc{valIdx}(\tfunc{pred}(q_{s'.0})) = s$, (ii)~$q_{s.i}.gate$ is qfree and equal to $q_{s'.0}.gate^\dagger$, and (iii)~both $q_{s.i}$ and $q_{s'.0}$ have the same controls (up to instance indices).
\end{enumerate}
\end{restatable}

Vertices in a well-valued circuit graph are of the shape $q_{s.i}$, where we call $s$ its value index ($\tvar{valIdx}$ in the
algorithms) and $i$ its instance index. $i$ is $0$ for the first occurrence of
$q_s$ in the graph, but otherwise we only use its value to ensure uniqueness of
the vertex names. 

Due to the following lemma, it suffices to only consider valid and well-valued
circuit graphs:

\begin{restatable}[evolveVertex Correctness]{mylem}{lemextended}
	\label{lem:evolveVertexCorrect}
    For a valid and well-valued circuit graph $G$, any number of calls to evolveVertex results in a valid and well-valued circuit graph $\un{G}$ such that (i) $\{q_{s.0} \in \un{G}\} $ is a subset of $ \{q_{s.0} \in G\}$ and (ii) for any $q_{s.0}$ in $G \cap \un{G}$, it has the same gate and control vertices (up to instance index) in both graphs.
\end{restatable}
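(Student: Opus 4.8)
The plan is to prove that the compound invariant---$\un{G}$ is valid (\cref{def:valid}), well-valued (\cref{def:well-valued}), and satisfies (i) and (ii)---is preserved by every call to \tfunc{evolveVertex}, and then conclude by induction on the number of calls. Since a single call may spawn recursive calls through \tfunc{getAvailCtrl} (\cref{lin:callgetavailablectrl}), I would run a strong induction on the size of the (finite) call tree: each recursive call is passed the enlarged set $I \cup \{qb\}$, and the guard $qb \notin I$ on \cref{lin:evolveVertex-assertacyclic} caps the recursion depth by the number of qubits, so every recursive sub-call has a strictly smaller call tree and preserves the invariant by the induction hypothesis. It then remains to analyze the modifications performed directly by the outer call: adding the fresh vertex $\un{v}$ with one incoming target edge (\cref{lin:evolveVertex-v,lin:evolveVertex-e}) and, for each $c \in ctrls$, one control edge $\un{c} \ctrlEdge \un{v}$ (\cref{lin:addctrledge}), together with the anti-dependency edges these induce.

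For validity I would check the five clauses of \cref{def:valid} in turn. The edge-counting clauses are routine: $\un{v}$ receives exactly one incoming target edge and no outgoing one, $\un{last}$ gains its first outgoing target edge, and the number of control edges into $\un{v}$ equals $|ctrls|$, which equals the arity of $gt$ (in the \textbf{(fwd)} case $gt$ and $ctrls$ are copied from $qb_{nVId.0}$, and in the \textbf{(bwd)} case from $qb_{oVId.0}$, a gate and its inverse having equal arity). The only substantive clause is acyclicity. Here I would observe that the new target edge and every anti-dependency edge it induces point \emph{into} the fresh sink $\un{v}$, so they cannot close a cycle; whereas each control edge $\un{c} \ctrlEdge \un{v}$ together with its induced anti-dependency edges (which may point out of $\un{v}$) is exactly what the \textsc{correct\_control} assertion on \cref{lin:evolveVertex-assert-ctrls} verifies to be cycle-free before insertion.

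For well-valuedness, clauses (i)--(iii) follow because \tfunc{addVertex} chooses a fresh instance index (giving the naming form and ruling out duplicates), and $q_{nVId.0}$ is present afterwards---either it already existed in $\un{G}$, or the newly added vertex is itself $q_{nVId.0}$. Clause (iv) is precisely the content of the \textsc{well\_valued\_vertex} assertion on \cref{lin:evolve-vtx-assert-well-val}, whose \textbf{(fwd)}/\textbf{(bwd)} branches are written to coincide with condition (iv) of \cref{def:well-valued} for the vertex being added (for a first computation the \textbf{(fwd)} branch holds trivially by self-comparison). I would fold properties (i) and (ii) into the same induction. A fresh instance-$0$ vertex $q_{nVId.0}$ is created only on a first forward computation, and always via a transition $qb_{oVId} \xrightarrow{gt,\,ctrls} qb_{nVId}$ asserted present in $\graphval$ (\cref{lin:evolveVertex-onestepawayassert}). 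Since $\graphval$ is built from $G$ so that its edges and their attached $(gt, ctrls)$ exactly record the transitions among the instance-$0$ vertices of $G$ (\cref{sec:circ-graphs:circ-g}), this $q_{nVId.0}$ coincides with a vertex of $G$ and inherits its gate, giving (i); and its controls match those in $G$ up to instance index because \textsc{correct\_control} forces each $\un{c}$ to share qubit and value index with the corresponding $c \in ctrls$, giving (ii).

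I expect the main obstacle to be establishing (i) and (ii) rigorously, since these are the only claims relating $\un{G}$ back to $G$ and thus cannot be read off the self-contained assertions alone: they require making precise the correspondence between $\graphval$ and $G$, and then tracking that it is the control \emph{value} indices, not the instance indices, that are preserved along each transition---while simultaneously arguing that the controls fetched by the recursive \tfunc{getAvailCtrl} calls never perturb the instance-$0$ structure already matched to $G$. This last point is exactly where clause (ii) of the induction hypothesis must be invoked for the recursive sub-calls.
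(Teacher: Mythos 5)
Your proposal takes essentially the same route as the paper: the paper's entire proof of \cref{lem:evolveVertexCorrect} is the single line ``by induction on the depth of calls to \tfunc{evolveVertex},'' and your strong induction on the call tree, with the invariant (validity, well-valuedness, (i), (ii)) checked against the assertions in \cref{alg:evolve-vertex-simpl}, is a faithful and considerably more detailed elaboration of exactly that induction. The obstacles you flag (the $\graphval$--$G$ correspondence needed for (i) and (ii), and the trivial self-comparison reading of the \textbf{(fwd)} branch when $q_{nVId.0}$ is first created) are real gaps in the paper's own one-line sketch rather than defects of your argument.
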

\begin{proof}
	By induction on the depth of calls to evolveVertex.
\end{proof}

We then define the extended graph $\extended{G}$ of a circuit graph $G$.
Roughly, we want $\extended{G}$ to keep a copy of every vertex $q_{s.i}$ in $G$,
saved on a fresh qubit $\copyqb{q_{s.i}}$. 	For a graph $G$ with one qubit and
two vertices, we show $\extended{G}$ in \cref{fig:extended-graph}.

\begin{figure}
	\centering
	\begin{tikzpicture}
	
		\matrix[row sep=5mm,column sep=6mm] {
			\node (x00) [style=gate] {$x_{0.0}$}; &
			\node (x00copy) [style=copygate] {$\copyqb{x_{0.0}}_{0.0}$}; &
			\node (x10copy) [style=copygate] {$\copyqb{x_{1.0}}_{0.0}$}; \\

			&
			\node (x00copy1) [style=copygate] {$\copyqb{x_{0.0}}_{0.1} : CX$}; &
			\\

			\node (x10) [style=gate] {$x_{1.0} : H$}; &
			 & \\

			 & &
			\node (x10copy1) [style=copygate] {$\copyqb{x_{1.0}}_{1.0}:  CX$}; \\
		};

	
		\draw[style=targ] (x00) -- (x10);
		\draw[style=targ] (x00copy) -- (x00copy1);
		\draw[style=targ] (x10copy) -- (x10copy1);
	
	
		\draw[style=ctrl] (x00) -- (x00copy1);
		\draw[style=ctrl] (x10) -- (x10copy1);

	
	\end{tikzpicture}

	\caption{Extended graph example, copy vertices are shown in green.}
    \label{fig:extended-graph}
\end{figure}

\begin{restatable}[Extended Graph]{mydef}{defextended}
	\label{def:extended}
    For any circuit graph $G = (V, E)$, we define its extended graph $\extended{G} = (V_e, E_e)$ as follows:
	\begin{align*}
		V_e =& V \cup \left\{\copyqb{q_{s.i}}_{0.0}, \copyqb{q_{s.i}}_{1.0}  \mid q_{s.i} \in V\right\} \\
		E_e =& E \cup \left\{\copyqb{q_{s.i}}_{0.0} \targEdge \copyqb{q_{s.i}}_{1.0} \mid q_{s.i} \in V\right\}  \\
	&\cup \left\{q_{s.i} \ctrlEdge\copyqb{q_{s.i}}_{1.0} \mid q_{s.i} \in V\right\} 
	\end{align*}
\end{restatable}
	For each $q_{s.i}$ in $V$, we have added a new qubit $\copyqb{q_{s.i}}$,
	with one init vertex and one gate vertex $CX$ controlled by $q_{s.i}$. In
	the following we refer to those added qubits as
	$\copyqb{V}$. Note that while $\copyqb{q_{s.i}}_{1.0}$ is a vertex, $\copyqb{q_{s.i}}$ is qubit.

As the extended graph is a valid graph, it corresponds to a circuit and
therefore its semantics $\fme{\extended{G}}$ is well defined. For
a given input state $\varphi$ to $G$, this allows us to define:

\begin{restatable}[Projected Coefficients]{mydef}{defprojcoeff}
	\label{def:projcoeff}
    For a fixed input state $\varphi$ to the circuit graph $G = (V, E)$, we define the projected coefficients of $G$ as the unique complex numbers $\copysem{G}_p$ such that:

	\begin{align*}&\fme{\extended{G}} \varphi \otimes \ket{0 ... 0}_{\copyqb{V}} =\\
		 &\sum_{p : \extended{G}.\tvar{qbs} \rightarrow \{0, 1\}} \copysem{G}_p \ket{p(G.qbs)}_{G.qbs} \otimes \ket{p(\copyqb{V})}_{\copyqb{V}}
	\end{align*}
	where $p(Q) = (p(q^{(1)}), ..., p(q^{(n)}))$ for qubits $Q = \{q^{(1)}...
	q^{(n)}\}$.
\end{restatable}

Using these coefficients, we can prove the following three lemmas. 
First, the semantics of the circuit graph $G$ can be expressed in terms
of its projected coefficients $\copysem{G}_p$:
\begin{restatable}[Projected Coefficients for Graph Semantics]{mylem}{lemextended}
	\label{lem:graphsemascopysem}
    For a circuit graph $G$ we have:
	\begin{align*}&\fme{G} \varphi =
		&\sum_{p : \extended{G}.\tvar{qbs} \rightarrow \{0, 1\}} \copysem{G}_p \ket{p(G.\tvar{qbs})}
   \end{align*}
\end{restatable}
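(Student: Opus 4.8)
The plan is to recognize that the right-hand side of the lemma is exactly what one obtains by \emph{marginalizing the copy qubits} out of $\fme{\extended{G}}(\varphi \otimes \ket{0\cdots0}_{\copyqb{V}})$, and then to show that this marginalization recovers $\fme{G}\varphi$. Concretely, I would marginalize by applying the functional $\bra{0}+\bra{1}$ to each copy qubit, using the elementary fact that $(\bra{0}+\bra{1})\ket{c}=1$ for every $c\in\{0,1\}$. Applying $(\bra{0}+\bra{1})^{\otimes \copyqb{V}}$ to the expansion in \cref{def:projcoeff} immediately collapses $\sum_p \copysem{G}_p \ket{p(G.\tvar{qbs})}_{G.\tvar{qbs}} \otimes \ket{p(\copyqb{V})}_{\copyqb{V}}$ to $\sum_p \copysem{G}_p \ket{p(G.\tvar{qbs})}$, which is precisely the target right-hand side. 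Hence the whole lemma reduces to the structural identity $(\bra{0}+\bra{1})^{\otimes \copyqb{V}} \fme{\extended{G}}(\varphi \otimes \ket{0\cdots0}_{\copyqb{V}}) = \fme{G}\varphi$.

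The heart of the argument is a single-copy computation. By \cref{def:extended}, each copy qubit $\copyqb{q_{s.i}}$ is touched by exactly two vertices, its init and the single gate $q_{s.i} \ctrlEdge \copyqb{q_{s.i}}_{1.0}$ (a $CX$ controlled by $q_{s.i}$); crucially, a copy qubit is never used as a control and has no gate after its $CX$. The key local fact I would establish is
\[
	(\bra{0}+\bra{1})_{\copyqb{q_{s.i}}}\; CX_{q_{s.i}\to\copyqb{q_{s.i}}}\; \ket{\chi}_{q_{s.i}}\ket{0}_{\copyqb{q_{s.i}}} = \ket{\chi}_{q_{s.i}},
\]
i.e. writing a fresh $\ket{0}$ qubit with a $CX$ and then summing its two basis coefficients returns the control qubit unchanged (checked directly on $\ket{\chi}=a\ket{0}+b\ket{1}$, with the remaining qubits carried along by linearity).

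To globalize this, I would pick any total order of $\extended{G}$ consistent with its edges; deleting the copy gates from it yields a valid total order of $G$, so by the order-independence of circuit graph semantics established in \cite{paradis_unqomp_2021} I may compute $\fme{\extended{G}}$ and $\fme{G}$ through these matching linearizations. Because copy qubits are write-once, read-never sinks, the functional $\bra{0}+\bra{1}$ on each $\copyqb{q_{s.i}}$ commutes past every gate occurring after its $CX$ (none act on that copy qubit), and at the moment of the $CX$ the copy qubit is still in $\ket{0}$; the local identity above then erases that copy gate. Peeling off copy gates one at a time (an induction on the number of copy gates) reduces the extended circuit to a linearization of $G$, giving $\fme{G}\varphi$. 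Arbitrary input states $\varphi$ are handled by linearity of all maps involved.

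The main obstacle I anticipate is making the "peeling/commutation" step fully rigorous rather than merely intuitive: I must argue carefully from \cref{def:extended} that copy qubits are genuinely never read (never a control, no outgoing target edge beyond the $CX$), so that the marginalizing projector on a given copy qubit can be transported through all later gates and meets its $CX$ with the qubit provably still in $\ket{0}$. Phrasing this as an operator identity $(\bra{0}+\bra{1})^{\otimes \copyqb{V}}\,\fme{\extended{C}}\,(\mathbb{1}\otimes\ket{0}_{\copyqb{V}}) = \fme{C}$ and inducting on gate count is the cleanest route; the only external inputs needed are the structural form of $\extended{G}$ and the semantics order-independence, with \cref{lem:evolveVertexCorrect} entering only to guarantee we are working with valid, well-valued graphs throughout.
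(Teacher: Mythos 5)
Your proposal is correct, and it is a legitimate (and more explicit) route than the one the paper sketches. The paper disposes of \cref{lem:graphsemascopysem} with a single sentence ("induction on the number of gates in $G$"), which, judging from how the companion lemmas are proved in \cref{sec:lemmas-proofs}, means decomposing $G' = G \cdot q_{s.i}$ and tracking how the coefficients $\copysem{G}_p$ evolve as each gate of $G$ (together with its copy gate) is appended. You instead observe that the right-hand side is exactly the marginalization of $\fme{\extended{G}}(\varphi \otimes \ket{0\cdots0}_{\copyqb{V}})$ over the copy qubits, reduce the lemma to the operator identity $(\bra{0}+\bra{1})^{\otimes \copyqb{V}}\,\fme{\extended{G}}\,(\,\cdot \otimes \ket{0\cdots0}_{\copyqb{V}}) = \fme{G}$, and peel off the copy gates one at a time using the single-copy computation $(\bra{0}+\bra{1})_{u}\, CX_{q\to u}\,\ket{\chi}_q\ket{0}_u = \ket{\chi}_q$. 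The two structural facts you need --- that each copy qubit carries exactly one gate and is never a control of anything --- are immediate from \cref{def:extended}, and the order-independence of circuit-graph semantics from \cite{paradis_unqomp_2021} lets you restrict a linearization of $\extended{G}$ to one of $G$. A pleasant bonus of your formulation is that the local identity holds for \emph{any} state $\ket{\chi}$ of the control at the time the $CX$ fires, so the argument is insensitive to where the copy gates land in the linearization; the paper's coefficient-tracking induction buys instead a uniform template that is reused verbatim for \cref{lem:nullprojcoeffs} and \cref{lem:valprojcoeffs}, where the value actually copied does matter. Your one flagged obstacle (making the commutation of the marginalizing functional past later gates rigorous) is genuinely the only thing to write out carefully, and it goes through because those functionals act on tensor factors disjoint from every subsequent gate.
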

\begin{proof}
	We can prove this by induction on the number of gates in $G$.
\end{proof}

Second, copies have consistent values. Specifically, for a given qubit $q$ and
valIdx $s$, all $\copyqb{q_{s.i}}$ hold the same value as $\copyqb{q_{s.i}}$, and the value of $q$ is the same as the
copy of the last vertex on $q$:
\begin{restatable}[Null Projected Coefficients]{mylem}{lemextended}
	\label{lem:nullprojcoeffs}
    For a valid and well-valued circuit graph $G = (V, E)$ and $p : \extended{G}.\tvar{qbs}
    \rightarrow \{0, 1\}$, we have $\copysem{G}_p = 0$ if 
	\begin{enumerate}[label=(\roman*)]
		\item $p(\copyqb{q_{s.i}}) \neq p(\copyqb{q_{s.0}})$ for some $q_{s.i}$, or
		\item $p(q) \neq p(\copyqb{\tfunc{last}(q)})$ for some qubit $q$.
	\end{enumerate}
\end{restatable}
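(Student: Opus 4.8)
The plan is to prove both claims simultaneously by induction on the number of gate vertices in $G$, processed in an order consistent with the partial order induced by $G$'s edges (so that whenever a gate is added, its target-predecessor and all its controls already appear, and the copy of each of these is taken before the new gate fires). Writing $G_k$ for the subgraph containing the first $k$ gate vertices, I would maintain the joint invariant that $\copysem{G_k}_p = 0$ whenever $p$ violates (i) or (ii). The crucial simplification is that every copy operation in $\extended{G}$ is a $CX$, acting on computational basis states as $\ket{b}_{q_{s.i}}\ket{0}_{\copyqb{q_{s.i}}} \mapsto \ket{b}_{q_{s.i}}\ket{b}_{\copyqb{q_{s.i}}}$; hence the whole argument can be run summand-by-summand on basis states, where in each nonzero summand the copy qubit $\copyqb{q_{s.i}}$ literally records the value held by $q$ at the moment its copy is taken. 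Using \cref{lem:graphsemascopysem}, I would relate the coefficients of $G_k$ to those of $G_{k-1}$ through the unitary of the newly added gate followed by its copy.

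In the base case $G_0$ contains only init vertices $q_{0.0}$; there is a single vertex per qubit/value-index pair, so (i) is vacuous, and (ii) holds because the copy of $q_{0.0}$ is taken before any gate touches $q$, so $q$ and $\copyqb{q_{0.0}}$ carry the same (initial) value in every summand. For the inductive step I add $v = q_{s.i}$ and its copy. Condition (ii) is the easy half: after the copy of $v$ is taken, the only operations involving $q$ in $\extended{G_k}$ use $q$ as a control of a $CX$, which leaves $q$ unchanged, so $q$ and $\copyqb{\tfunc{last}(q)} = \copyqb{v}$ agree in every nonzero summand. By the inductive hypothesis applied to $G_{k-1}$, the predecessor copy satisfies $\copyqb{\tfunc{pred}(v)} = \copyqb{q_{s'.0}}$ with $s' = \tfunc{valIdx}(\tfunc{pred}(v))$, and each control of $v$ agrees, in qubit and value index, with the corresponding control of the reference vertex $q_{s.0}$; this is what lets me compare the two gate applications.

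The substantive half is condition (i): showing $\copyqb{q_{s.i}} = \copyqb{q_{s.0}}$ in every nonzero summand when $i>0$, split along the two cases of \cref{def:well-valued} (its condition on gate vertices). In the \textbf{(bwd)} case the gate of $v$ is qfree and equal to $q_{s'.0}.gate^\dagger$, while $q_{s'.0}$ was itself obtained from value index $s$ by the forward gate; since both gates are qfree with identical control values (inductive hypothesis), their composition acts as the identity on the basis value recorded in $\copyqb{q_{s.0}}$, so the output recorded in $\copyqb{q_{s.i}}$ equals $\copyqb{q_{s.0}}$. In the \textbf{(fwd)} case $v$ and $q_{s.0}$ carry the same gate, the same value-index controls, and predecessors of value index $s'$; feeding the common input value $\copyqb{q_{s'.0}}$ and the inductively equal control values into this gate should yield equal outputs, hence $\copyqb{q_{s.i}} = \copyqb{q_{s.0}}$.

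The main obstacle I anticipate is exactly this \textbf{(fwd)} reuse case: equal inputs force equal outputs only when the gate acts deterministically on basis states, which holds for qfree gates but not in general. I would therefore need to argue that whenever a value index is revisited by a forward step (producing $q_{s.i}$ with $i>0$), the relevant gate is effectively pinned down by the well-valued structure together with \cref{lem:evolveVertexCorrect}; intuitively, bringing $q$ back to value index $s'$ before re-applying the gate already encodes a value-preserving round trip, which is precisely what is being proved. Making this non-circular --- for instance by strengthening the hypothesis so that preservation of value index $s'$ is available before that of $s$, or by ruling out forward reuse of genuinely non-qfree gates --- is the delicate point. The remaining bookkeeping (that adding $v$ creates no new violations on qubits other than $q$, and that acyclicity guarantees each control's copy is taken before it is consumed) is routine given the $CX$-based, summand-by-summand view.
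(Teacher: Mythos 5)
Your proposal follows essentially the same route as the paper's proof: a joint induction on the number of gates (carrying the product formula of \cref{lem:valprojcoeffs} alongside the nullity statement), the coefficient recurrence obtained by appending the new gate and its copying $CX$ to the extended circuit, the easy argument for (ii) from the fact that after its last copy a qubit only ever acts as a control, and the fwd/bwd case split for (i) in which qfree determinism (a qfree gate has $0/1$ coefficients and, for fixed control values, induces a bijection on the target) forces one of the two competing $\gamma$-factors to vanish.

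The step you leave open --- a forward re-computation $q_{s.i}$ with $i>0$ whose gate is not qfree, where ``equal inputs give equal outputs'' fails --- is exactly the step the paper also elides (its written proof treats only the bwd case and declares fwd ``simpler''). It can be closed by a structural observation rather than by strengthening the induction hypothesis, so no circularity arises: the first computations $q_{s.0}$ organise the value indices of a qubit into a tree in which each $s>0$ has the unique parent $\tfunc{valIdx}(\tfunc{pred}(q_{s.0}))$, and \cref{def:well-valued} only lets the value index of $q$ move along tree edges --- away from the root by repeating the original gate (fwd), towards the root only when the gate that first computed the child is qfree (bwd). The edge between $s$ and its parent $s'$ is a bridge in this tree, so the qubit can revisit $s'$ after reaching $s$ only by traversing the bwd edge from $s$ to $s'$, which exists only if $q_{s.0}.\mathit{gate}$ is qfree; hence any fwd vertex $q_{s.i}$ with $i>0$ necessarily carries a qfree gate and your determinism argument applies. (One residual wrinkle, inherited from the paper: \cref{def:well-valued} does not constrain gate vertices $q_{0.j}$ with $j>0$, so the literal definition admits graphs escaping this argument; the graphs produced by \tfunc{evolveVertex} do satisfy the fwd/bwd dichotomy for every gate vertex, which is what the argument actually needs.)
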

\begin{proof}
	We prove \cref{lem:nullprojcoeffs} in \cref{sec:lemmas-proofs}.
\end{proof}

Finally, if $\copysem{G}_p \neq 0$, it depends only on the gates used for the
first computation of each $q_{s.0}$. 

\begin{restatable}[Projected Coefficients Values]{mylem}{lemextended}
	\label{lem:valprojcoeffs}
    For a circuit graph $G$ and $p : \extended{G}.\tvar{qbs} \rightarrow \{0,
    1\}$, we have that if $\copysem{G}_p \neq 0$ then:
	\begin{align*}\copysem{G}_p =&
		\alpha_{p(q^{(0)}_{0.0} ... q^{(n)}_{0.0})} \prod_{\substack{q_{s.0} \in G \\ s \neq 0}} \gamma^{q_{s.0}}
   \end{align*}
\end{restatable}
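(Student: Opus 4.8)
The plan is to prove the claim by induction on the number of gate vertices of $G$, peeling off one topologically maximal gate vertex at a time. In the base case $G$ has only init vertices $q^{(0)}_{0.0}, \dots, q^{(n)}_{0.0}$; here $\extended{G}$ simply copies each initial value onto its fresh copy qubit, so for input $\varphi = \sum_j \alpha_j \ket{j}$ one reads off $\copysem{G}_p = \alpha_{p(q^{(0)}_{0.0} \cdots q^{(n)}_{0.0})}$ on consistent projections (and $0$ otherwise), while the product over $q_{s.0}$ with $s \neq 0$ is empty. This matches the claimed formula and records that the only nontrivial factors can come from genuine gate vertices.

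For the inductive step I would pick a gate vertex $v = q_{s.i}$ that is maximal in the partial order of $G$; such a $v$ is the last vertex on its wire and controls nothing in $G$, so in $\extended{G}$ only its own copy gadget $\copyqb{v}_{0.0} \targEdge \copyqb{v}_{1.0}$, $v \ctrlEdge \copyqb{v}_{1.0}$ comes after it. Writing $G' = G \setminus \{v\}$, the gates of $\extended{G}$ are exactly those of $\extended{G'}$ together with $v$ and its copy, and these can be scheduled so that all of $\extended{G'}$ runs first. Hence $\fme{\extended{G}}$ applied to the input equals $U_{\mathrm{copy}(v)}\, U_v$ applied to $\fme{\extended{G'}}(\cdot) \otimes \ket{0}_{\copyqb{v}}$. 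Expanding $\fme{\extended{G'}}$ via \cref{lem:graphsemascopysem} and invoking \cref{lem:nullprojcoeffs} for $G'$ pins the pre-image value of the target qubit $q$ to the single value $p(\copyqb{\tfunc{pred}(v)})$ (the last copy of $q$ in $G'$), collapsing the sum to
\[
\copysem{G}_p = \bra{p(q)}\, U_v\, \ket{p(\copyqb{\tfunc{pred}(v)})} \cdot \copysem{G'}_{p'},
\]
where the controls of $U_v$ are fixed to the values assigned by $p$, and $p'$ agrees with $p$ except that $q$ carries the pre-image value and $\copyqb{v}$ is dropped.

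It then remains to evaluate this single matrix-element factor in the two well-valued cases. If $v = q_{s.0}$ is a first computation (necessarily $s \neq 0$, since value index $0$ is always the init vertex), I define $\gamma^{q_{s.0}}$ to be exactly this factor; removing $q_{s.0}$ — which, being maximal, leaves no later instance $q_{s.i'}$ on the same wire — keeps $G'$ valid and well-valued, the induction hypothesis supplies $\copysem{G'}_{p'}$, and the product grows by precisely $\gamma^{q_{s.0}}$. If instead $i > 0$, the vertex is a forward recomputation (\textbf{fwd}) or a backward uncomputation (\textbf{bwd}); here $q_{s.0}$ survives in $G'$, so the set of first-computation vertices and the initial projection are unchanged, and I must show the factor is $1$ whenever $\copysem{G}_p \neq 0$. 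For the \textbf{bwd} case this is immediate: the gate is qfree by definition, so its matrix element lies in $\{0,1\}$ and a nonzero $\copysem{G}_p$ forces it to be $1$.

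The main obstacle is the \textbf{fwd} case, where a priori $U_v$ is the same (possibly non-qfree) gate as $q_{s.0}$; by consistency of copies (\cref{lem:nullprojcoeffs}) its factor equals $\gamma^{q_{s.0}}$, so a non-qfree recomputation would double-count and break the formula. I expect to resolve this by showing that a forward computation can be repeated only when its gate is qfree: to realise a second forward step $s' \targEdge s$ the wire must first return to the predecessor value index $s'$ after visiting $s$, and analysing the well-valued transition structure along a single wire shows that the step leaving $s$ — and ultimately any return to $s'$ — can be taken only by inverting forward steps through qfree \textbf{bwd} gates, which forces $q_{s.0}.gate$ itself to be qfree. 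Consequently $\gamma^{q_{s.0}} \in \{0,1\}$ and the nonzero factor is again $1$, so no extra factor appears. Combined with \cref{lem:evolveVertexCorrect}, which guarantees we only ever deal with valid, well-valued graphs, this closes the induction; the delicate per-wire qfree argument and the bookkeeping that peeling a maximal vertex preserves well-valuedness are where the real work lies.
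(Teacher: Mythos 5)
Your proposal is correct and follows the same skeleton as the paper's proof: induction on the number of gates, peeling off a topologically last vertex $q_{s.i}$ together with its copy gadget so that $\extended{G} = \extended{G'} \cdot q_{s.i} \cdot \copyqb{q_{s.i}}_{0.1}$, collapsing the sum over the pre-image value of the target qubit via \cref{lem:nullprojcoeffs}, and then splitting on $i=0$ versus repeated instances. The divergence is in how the repeated-instance cases are closed. For \textbf{(bwd)}, the paper shows the new factor coincides with the factor $\gamma^{q_{s'.0}}$ already present in the product and uses that a qfree coefficient satisfies $\gamma^2=\gamma$; your observation that a nonzero $\copysem{G}_p$ directly forces the $\{0,1\}$-valued factor to be $1$ is a shorter route to the same conclusion, and it suffices because the lemma only constrains nonzero coefficients. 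For \textbf{(fwd)} with $i>0$, the paper declares the case ``simpler'' and omits it, whereas you correctly notice that it is the delicate one: if a forward recomputation could carry a non-qfree gate, the factor $\gamma^{q_{s.0}}$ would be duplicated, and since $\gamma^2 \neq \gamma$ for, say, Hadamard amplitudes, both the claimed formula and \cref{lem:nullprojcoeffs}~(i) would fail. Your proposed repair is sound: the value indices of a single wire form a tree rooted at $0$ under the parent map $s \mapsto \tfunc{valIdx}(\tfunc{pred}(q_{s.0}))$ (well-founded because the predecessor of $q_{s.0}$ occurs strictly earlier on the wire), every step of the wire traverses a tree edge either downward (\textbf{fwd}) or upward (\textbf{bwd}), and any walk returning from $s$ to its parent must cross the connecting edge upward, which by the \textbf{(bwd)} clause of \cref{def:well-valued} forces $q_{s.0}.gate$ to be qfree; hence the duplicated factor is again $0$ or $1$ and absorbs into the product. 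This per-wire argument is genuinely the missing ingredient relative to the paper's write-up and is worth stating as a standalone claim; with it, your induction closes exactly as the paper's does.
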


Here the $\alpha$
describe the initial state:
 $$\varphi = \sum_{k \in \{0, 1\}^m} \alpha_k \ket{k}.$$
 and the $\gamma$ are gate
 coefficients defined such that $\fme{g} \ket{c} \ket{t} = \sum_{t' =
 0}^1\gamma^{g}_{t, c \rightarrow t'} \ket{c} \ket{t'}$ for a gate $g$ and
 $t$ in $\{0, 1\}$ and $c \in \{0,1\}^m$. We have further shortened 
 $$\gamma^{q_{s.0}} = \gamma^{q_{s.0}.\tvar{gate}}_{p(\copyqb{pred(q_{s.0})}), p(\copyqb{ctrls(q_{s.0})}) \rightarrow p(\copyqb{q_{s.0}})}$$
\begin{proof}
	We prove \cref{lem:valprojcoeffs} in \cref{sec:lemmas-proofs}.
\end{proof}

\subsection{Main Proof}

Using \crefrange{lem:evolveVertexCorrect}{lem:valprojcoeffs}, we can prove the
correctness of Reqomp:

\begin{restatable}[Correctness]{mythm}{thmCorrectness}
	\label{thm:correctness}
	Have $G$ a circuit graph built from a circuit with $n$ qubits, of which $m$ are ancilla variables. Without loss of generality, we can assume
	that those ancilla variables $A= \left( a^{(1)}, \dots, a^{(m)} \right)$ are the
	first $m$ qubits of $\depgraph$. Let $\textsc{Reqomp}(G, A)=\graphu$. If
	
	\begin{alignat}{10}
		&\ket{0 \cdots 0}_A \otimes \varphi
		&\ \xmapsto{\fme{G}}
		&\sum_{k \in \{0,1\}^m} \lambda_{k}
		&&\ket{k}_A
		&&\otimes \phi_{k}, \text{ then} \label{eq:correctness-assumption}\\
		&\ket{0 \cdots 0}_A \otimes \varphi
		&\ \xmapsto{\fme{\graphu}}
		&\sum_{k \in \{0,1\}^m} \lambda_{k}
		&&\ket{0 \cdots 0}_A
		&&\otimes \phi_{k}. \label{eq:correctness-conclusion}
	\end{alignat}
\end{restatable}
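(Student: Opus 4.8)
The plan is to work entirely in the extended-graph/projected-coefficient formalism of \cref{def:valid,def:extended,def:projcoeff}, following the informal sketch of \cref{sec:correctness}, and to reduce the theorem to a bookkeeping identity between the projected coefficients of $G$ and $\graphu$. First I would fix the full input $\Phi = \ket{0\cdots0}_A \otimes \varphi$ and record that both graphs are valid and well-valued: $G$ is well-valued by its construction from a circuit (each gate gets a fresh value index, with repeated values covered by the (fwd)/(bwd) clauses of \cref{def:well-valued}), and $\graphu$ is obtained from a valid well-valued graph by calls to \tfunc{evolveVertex} together with benign initialization steps that only add init vertices $q_{0.i}$ and their target edges. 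Hence \cref{lem:evolveVertexCorrect} applies, giving that $\graphu$ is valid and well-valued, that $\{q_{s.0}\in\graphu\}\subseteq\{q_{s.0}\in G\}$, and that shared first-instance vertices carry identical gates and controls.

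Next I would invoke \cref{lem:graphsemascopysem} to write $\fme{G}\Phi = \sum_p \copysem{G}_p\,\ket{p(G.\tvar{qbs})}$ and likewise for $\graphu$, so the theorem reduces to comparing amplitudes of each basis state $\ket{k}_A\otimes\ket{j}_R$, where $R$ denotes the non-ancilla qubits. Applying \cref{lem:nullprojcoeffs} to both graphs restricts attention to nonzero-coefficient projections: such a projection is determined by a single value $v_{q,s}$ per distinct value index $(q,s)$ in the graph, every copy $\copyqb{q_{s.i}}$ carries $v_{q,s}$, and each qubit $q$ outputs $v_{q,s}$ at its final index $s=\tfunc{valIdx}(\tfunc{last}(q))$. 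The assertions at \cref{lin:reqomp-assertFullyEvolved} (resp.\ \cref{lin:unqomp-assert}) pin down these final indices: every ancilla ends at index $0$ in $\graphu$ and every non-ancilla ends at its $G$-value index. Since $v_{a,0}$ is the value of the init vertex $a_{0.0}$, which the all-zero ancilla input forces to $0$ for any nonzero coefficient, the ancilla outputs of $\graphu$ are identically $\ket{0\cdots0}_A$.

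The heart of the argument is the coefficient correspondence. By \cref{lem:valprojcoeffs}, every nonzero coefficient factorizes as $\copysem{G}_p = \alpha_{\mathrm{init}}\prod_{q_{s.0}\in G,\,s\neq 0}\gamma^{q_{s.0}}$, a product over \emph{first-instance} vertices only, so the recomputation vertices (higher instance indices) that $\graphu$ introduces do not affect coefficients at all. Combining the subset inclusion of \cref{lem:evolveVertexCorrect} with the fact that $\graphu$ reaches every value index of $G$ (the forcing of intermediate values in \tfunc{getPath} of \cref{alg:reqomp-convenience}, plus computing each ancilla to its maximal index before uncomputing it) shows that the first-instance vertices of $G$ and $\graphu$ coincide and carry the same gates and controls; the two factorizations then agree factor by factor, giving $\copysem{\graphu}_{p'}=\copysem{G}_p$ under the bijection identifying value-index assignments $\{v_{q,s}\}$. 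Reading off amplitudes and collapsing the ancilla sum closes the proof: for each non-ancilla output $j$,
\begin{align*}
\sum_{\substack{p'(A)=0\\ p'(R)=j}}\copysem{\graphu}_{p'}
&= \sum_{p(R)=j}\copysem{G}_p \\
&= \sum_k \sum_{\substack{p(A)=k\\ p(R)=j}}\copysem{G}_p
= \sum_k \lambda_k (\phi_k)_j,
\end{align*}
where the first equality drops the now-free ancilla constraint because $\graphu$ reads its ancilla output from index $0$ whereas $G$ reads it from the maximal index, and the last uses \cref{eq:correctness-assumption}. This is exactly the amplitude of $\ket{0\cdots0}_A\otimes\ket{j}_R$ in $\sum_k\lambda_k\ket{0\cdots0}_A\otimes\phi_k$, establishing \cref{eq:correctness-conclusion}.

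The main obstacle is making this coefficient correspondence rigorous, i.e.\ justifying that $G$ and $\graphu$ ``apply the same gates for the same value indices.'' Two points need care: that recomputation is harmless, which rests on \cref{lem:valprojcoeffs} confining the product to first instances; and that the two graphs reach exactly the same value indices, which needs both directions of the inclusion. The genuinely technical core, however, lives in the deferred proofs of \cref{lem:nullprojcoeffs} and \cref{lem:valprojcoeffs}, which certify that value indices faithfully track qubit values through the extended circuit; the argument above is essentially an assembly of those lemmas.
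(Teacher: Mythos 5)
Your overall assembly---extended graphs, projected coefficients, \cref{lem:graphsemascopysem}, \cref{lem:nullprojcoeffs} and \cref{lem:valprojcoeffs}, with part (i) obtained from the final assertion plus \cref{lem:nullprojcoeffs}---is exactly the paper's route. The gap is in your coefficient correspondence for part (ii): you assert that the first-instance vertex sets $V_0=\{q_{s.0}\in G\}$ and $\un{V_0}=\{q_{s.0}\in\graphu\}$ coincide, justified by the intermediate-value forcing in \tfunc{getPath} and by each ancilla being computed to its maximal value index. This is false in general, and the paper's proof devotes its final step to the case $V_0\neq\un{V_0}$. \cref{lem:evolveVertexCorrect} only gives the inclusion $\un{V_0}\subseteq V_0$; the reverse inclusion can fail, e.g.\ when $G$ contains $q_{0.0}\targEdge q_{1.0}\targEdge q_{0.1}$ (a value computed and immediately uncomputed) while $\graphu$ never needs to produce $q_{1.0}$ at all, and ancillae can likewise skip intermediate value indices because \tfunc{getPath} takes a shortest path in the value graph for them rather than forcing every index. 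When the sets differ, the two factorizations of \cref{lem:valprojcoeffs} have different sets of factors and different summation variables, so your ``factor by factor'' identification and the first equality in your final chain are unjustified.

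The missing idea is the paper's closing observation: any $q_{s.0}\in V_0\setminus\un{V_0}$ must carry a qfree gate (for ancillae because every operation on an ancilla is eventually uncomputed, and uncomputation steps are only admitted for qfree gates; for non-ancillae because such a value must itself have been uncomputed within $G$, since the final value indices of non-ancillae agree in both graphs). Since qfree gate coefficients $\gamma$ are $0$ or $1$, the extra factor $\gamma^{q_{s.0}}$ together with the extra summation variable $p_0(q_{s.0})$ collapses: for each assignment of the remaining values exactly one choice survives, with factor $1$, so the larger sum defining $\sum_k\lambda_{kk'}$ equals the sum defining $\un{\lambda_{0k'}}$. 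Without this step the argument only covers circuits where $\graphu$ happens to revisit every value index of $G$.
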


Note that this is an equivalent rewrite of \cref{def:correct}.

\begin{proof}
	We first make the values of the non-ancilla qubits explicit, and denote $R =
	G.\tvar{qbs} \backslash A$. This allows us to rewrite
	\cref{eq:correctness-assumption} as :
	\begin{align}\fme{G}\ket{0 \cdots 0}_A \otimes \varphi
		=
		&\sum_{\substack{k \in \{0,1\}^m \\ k' \in \{0, 1\}^{n-m}}} \lambda_{kk'}
		&&\ket{k}_A \ket{k'}_R \label{eq:rewrite-g-effect}
	\end{align}
	Similarly for $\graphu$ we can write:
	\begin{align}\fme{\un{G}}\ket{0 \cdots 0}_A \otimes \varphi
		=
		&\sum_{\substack{k \in \{0,1\}^m \\ k' \in \{0, 1\}^{n-m}}} \un{\lambda_{kk'}}
		&&\ket{k}_A \ket{k'}_R \label{eq:rewrite-ung-effect}
	\end{align}

	Note that here we use $\un{\lambda}$ to refer to a coefficient in $\un{G}$, and not to the complex conjugate of $\lambda$.

	To prove the theorem, it is hence enough to prove that for all $k'$, 
	\begin{align*}
		\un{\lambda_{kk'}} = \begin{cases}
			0 & \text{ if } k \neq 0 \qquad \text{(i)} \\
			\sum_k \lambda_{kk'} & \text{ if } k = 0 \qquad \text{(ii)}
		\end{cases}
	\end{align*}

	To do so, we first identify \cref{eq:rewrite-ung-effect} with
	\cref{lem:graphsemascopysem}. This gives us that:

	\begin{align}
		\un{\lambda_{kk'}} = \sum_{\substack{p : \extended{\un{G}}.qbs \rightarrow \{0, 1\}\\ p(\un{G}.qbs) = kk'}} \copysem{\un{G}_p} \label{eq:val-lambda-un}
	\end{align}

	The assertion at
	\cref{lin:reqomp-assertFullyEvolved} in the \reqomp algorithm (\cref{alg:apply-start}) and \cref{lem:nullprojcoeffs} then
	give that for any ancilla qubit $a^{(i)}$, if $p(a^{(i)}) \neq
	p(\copyqb{a^{(i)}_{0.0}})$, then $\copysem{\un{G}_p}$ is null. As
	$\copyqb{a^{(i)}_{0.0}}$ copies the initial state of the ancilla, we then
	get that if $k \neq 0$, then $\un{\lambda_{kk'}} = 0$, proving (i).

	To prove (ii), we first note that \cref{eq:val-lambda-un} holds analogously
	for $G$, allowing us to derive the following. Here, we denote $V_0 =
	\{q_{s.0} \in V\}$. We then have for any $k'$ in $\{0, 1\}^{n-m}$:

	\begin{align}
		\sum_{k \in \bools^m }{\lambda_{kk'}} &= \sum_{k \in \bools^m} \sum_{\substack{p : \extended{G}.qbs \rightarrow \{0, 1\}\\ p(G.qbs) = kk'}} \copysem{G_p} \\
		&= \sum_{\substack{p : \extended{G}.qbs \rightarrow \{0, 1\}\\ p(R) = k'}} \copysem{G_p} \\
		&= \sum_{p_0 : V_0 \rightarrow \bools}\sum_{\substack{p : \extended{G}.qbs \rightarrow \{0, 1\}\\ p(R) = k' \\ p_{|V_0} = p_0}} \copysem{G_p} \label{eq:processing-lambda}
	\end{align}

	Using \cref{lem:nullprojcoeffs}, we have that for any $p_0 : V_0 \rightarrow
	\bools$, there is a unique $p_0^+ : \extended{G}.qbs \rightarrow \bools$
	such that $p_{|V_0} = p_0$ and $\copysem{G}_p$ is not known to be null. We
	can hence further rewrite \cref{eq:processing-lambda}:

	\begin{align}
		\sum_{k \in \bools^m }{\lambda_{kk'}} = \sum_{\substack{p_0 : V_0 \rightarrow \bools \\ p_0(R_0) = k'}} \copysem{G}_{p_0^+}, \label{eq:processing-lambda2}
	\end{align}
	where we denoted $R_0 = \{\copyqb{q_{s.0}} \mid \tfunc{last}(q) = q_{s.i}, q \in R\}$. Similarly, we write the same equation for $\un{G}$ and $\un{\lambda_{kk'}}$:

	\begin{align}
		\sum_{k \in \bools^m }{\un{\lambda_{kk'}}} = \sum_{\substack{p_0 : \un{V_0} \rightarrow \bools \\ p_0(\un{R_0}) = k'}} \copysem{\un{G}}_{p_0^+} 
	\end{align}

	Using that $\un{\lambda_{kk'}}$ is null if $k \neq 0$, we finally get:

	\begin{align}
		\un{\lambda_{0k'}} = \sum_{\substack{p_0 : \un{V_0} \rightarrow \bools \\ p_0(\un{R_0}) = k'}} \copysem{\un{G}}_{p_0^+} \label{eq:processing-lambda2-un}
	\end{align}

	Now, if $V_0 = \un{V_0}$, as gates are the same for any $q_{s.0}$ in $G$ and $\un{G}$, 
	 \cref{eq:processing-lambda2} and
	\cref{eq:processing-lambda2-un} combined with \cref{lem:valprojcoeffs}  and \cref{lem:evolveVertexCorrect} give
	us (ii).
	
	If this is not the case, there must exist some $q_{s.0}$ in $V_0 \backslash
	\un{V_0}$. For instance this could happen if $G$ contains $q_{0.0} \targEdge
	q_{1.0} \targEdge q_{0.1}$, and $\un{G}$ only contains $q_{0.0}$ : it was
	not necessary to compute $q_{1.0}$ to reach the same final state as in $G$.
	The crucial observation is that this vertex $q_{s.0}$ gate is qfree.
	If $q$ is an ancilla, this is clear as Reqomp would have raised an error
	otherwise.
	Indeed, Reqomp computes all ancillae (in \crefrange{lin:reqomp-detailed-steps-beg}{lin:reqomp-detailed-steps-end}) and check that they are all later uncomputed (\cref{lin:reqomp-assertFullyEvolved}). All operations on ancillae are hence uncomputed, and therefore their gate must be qfree (this is checked in \cref{lin:evolve-vtx-assert-well-val}). If $q$ is not an
	ancilla, it means then $q_{s.0}$ must have been uncomputed in $G$ (as the
	final state of non ancilla qubits in both graphs is the same). As qfree
	gates coefficients $\gamma$ are either $0$ or $1$, having an extra qfree
	gates does not change the result of the sum in \cref{eq:processing-lambda2}, concluding this proof.
\end{proof}

\subsection{Proofs of Helper Lemmas} \label{sec:lemmas-proofs}
We now prove \cref{lem:nullprojcoeffs} and \cref{lem:valprojcoeffs} by induction
on the number of gates in $G$.

\begin{proof}
For a circuit graph $G$ with no gates, an immediate induction on the number of qubits gives both lemmas.

Now suppose both lemma holds for any $G$ with at most $l$ gates. Now have $G' =
(V', E')$ with $l+1$ gates. We can write $G'$ as $G' = G \cdot q_{s.i}$ where $G
= (V, E)$ has $l$ gates and $q_{s.i}$ can be applied last in $G'$. To simplify notations, we assume $q_{s.i}$ has only one control vertex
$c_{t.j}$. If it has 0 or more controls, the reasoning is analogous.

By definition, we know that:

$$\fme{\extended{G}} \varphi = \sum_{p : \extended{G}.qbs \rightarrow \{0,
1\}} \copysem{G}_p \ket{p(G.\tvar{qbs})}_{G.\tvar{qbs}} \ket{p(\copyqb{V})}_{\copyqb{V}}$$

If we now apply $q_{s.i}$ and $\copyqb{q_{s.i}}_{0.1}$ (the CX gate copying
$q_{s.i}$ to a new qubit) to one state
of the sum above, we get for any $p :
\extended{G}.qbs \rightarrow \{0, 1\}$:

\begin{align*}
	&\fme{q_{s.i} \cdot \copyqb{q_{s.i}}_{0.1}} \ket{p(G.\tvar{qbs})}_{G.qbs} \ket{p(\copyqb{V})}_{\copyqb{V}} = \\
	&\sum_{b \in \{0, 1\}} \gamma^{q_{s.i}.gate}_{p(q), p(c) \rightarrow b} \ket{p(G.qbs \backslash\{q\}), b}_{G.qbs} \ket{p(\copyqb{V}), b}_{\copyqb{V'}}
\end{align*}
Here $b$ appears first as the value on the qubit $q$, and second as the value on the copy qubit $\copyqb{q_{s.i}}$.

As $\extended{G'} = \extended{G} \cdot q_{s.i} \cdot \copyqb{q_{s.i}}_{0.1}$, we
can use the above to compute $\copysem{G'}_{p'}$ for any $p' : \extended{G'}.qbs
\rightarrow \bools$. We first notice that if $p'(q) \neq p'(\copyqb{q_{s.i}})$,
then $\copysem{G'}_{p'} = 0$, giving us in \cref{lem:nullprojcoeffs}~(ii) for
$q$. In the following, we hence only consider $p'$ such that $p'(q) =
p'(\copyqb{q_{s.i}})$. We then get:

\begin{align}
	&\copysem{G'}_{p'} =\sum_{b \in \{0, 1\}} \copysem{G}_{\substack{p'_{|\extended{G}.qbs} \\ [q \mapsto b]}} \gamma^{q_{s.i}.gate}_{b, p'(c) \rightarrow p'(q)} 
	\label{eq:valprojv1}
\end{align}

Using the recursion hypothesis, this immediately gives that if for any $q' \neq
q$ if $p'(q') \neq p'(\copyqb{last(q')})$, then $\copysem{G'}_{p'} = 0$, giving
us \cref{lem:nullprojcoeffs}~(ii) for $q' \neq q$. Together with the above, we
hence get \cref{lem:nullprojcoeffs}~(ii).

We now work on proving both \cref{lem:nullprojcoeffs}~(i) and
\cref{lem:valprojcoeffs}. The recursion hypothesis gives us that for any $(q',
s', i') \neq (q, s, i)$, if $p'(\copyqb{q'_{s'.i'}}) \neq p'(\copyqb{q'_{s'.0}})$, then again
$\copysem{G'}_{p'} = 0$. 
We hence only need to establish that if $p'(q_{s.i}) \neq p'(q_{s.0})$ then
$\copysem{G'}_{p'} = 0$, and the value of this coefficient when it is not null
(that is to say \cref{lem:valprojcoeffs}). To do so, we now consider $p'$
consistent with what we have already proven, i.e., $p'$ such that for any $q'$
in $G.qbs$, $p'(q') =
p'(\copyqb{last(q')})$ and for any $(q', s', i') \neq (q, s, i)$,
$p'(q'_{s'.i'}) = p'(q'_{s'.0})$.

We first notice that the recursion hypothesis gives that if $b \neq
p'(\copyqb{pred(q_{s.i})})$, then $\copysem{G}_{\substack{p'_{|\extended{G}.qbs}
\\ \oplus q \mapsto b}}=0$. Hence one of the summands in \cref{eq:valprojv1} is
null:
\begin{align*}
	&\copysem{G'}_{p'} =\copysem{G}_{\substack{p'_{|\extended{G}.qbs} \\ \oplus q \mapsto p'(\copyqb{pred(q_{s.i})})}} \gamma^{q_{s.i}.gate}_{p'(\copyqb{pred(q_{s.i})}), p'(c) \rightarrow p'(q)} 
\end{align*}
Using the constraints on $p'$, we can rewrite this to:
\begin{align*}
	&\copysem{G'}_{p'} =\copysem{G}_{\substack{p'_{|\extended{G}.qbs} \\ \oplus q \mapsto p'(\copyqb{pred(q_{s.i})})}} \gamma^{q_{s.i}.gate}_{p'(\copyqb{pred(q_{s.i})}), p'(\copyqb{c_{t.0}}) \rightarrow p'(\copyqb{q_{s.i}})} 
\end{align*}

Now we distinguish two cases. If this is the first occurence of $q_{s}$, that is
to say $i = 0$, we immediately get \cref{lem:nullprojcoeffs} (i), as $i =0$. For \cref{lem:valprojcoeffs}, by rewriting the equation above using that $i = 0$
\begin{align*}
	&\copysem{G'}_{p'} =\copysem{G}_{\substack{p'_{|\extended{G}.qbs} \\ \oplus q \mapsto p'(\copyqb{pred(q_{s.0})})}} \gamma^{q_{s.0}.gate}_{p'(\copyqb{pred(q_{s.0})}), p'(\copyqb{c_{t.0}}) \rightarrow p'(\copyqb{q_{s.0}})} 
\end{align*}
and using the induction hypothesis, we can conlude. 

On the other hand, if $i \neq 0$, we again need to
distinguish two possibilities: cases \textbf{fwd} and \textbf{bwd} in \cref{def:well-val-4} of \cref{def:well-valued}. 
We focus on the later case, as the first is simpler. We denote $q_{s'.i'} = \tfunc{pred}(q_{s.i})$. We
can hence rewrite:

\begin{align*}
	\gamma^{q_{s.i}.gate}_{p'(\copyqb{pred(q_{s.i})}), p'(\copyqb{c_{t.0}}) \rightarrow p'(\copyqb{q_{s.i}})} = \\ \gamma^{q_{s.i}.gate}_{p'(\copyqb{q_{s'.0}}), p'(\copyqb{c_{t.0}}) \rightarrow p'(\copyqb{q_{s.i}})}
\end{align*}

As we know that $G'$ is well-valued, we have that $q_{s.i}.gate = q_{s'.0}.gate
^\dagger$ and that both gates are qfree. Generally, the coefficient for the
reverse of a qfree gate $g$ is
\begin{align*}
	\gamma^{g^\dagger}_{t, c \rightarrow t'} = \gamma^{g}_{t', c \rightarrow t},
\end{align*}
as a qfree gate coefficient can only be $0$ or $1$.

We can hence again rewrite the above coefficient as:
\begin{align*}
	\gamma^{q_{s'.0}.gate}_{p'(\copyqb{q_{s.i}}), p'(\copyqb{c_{t.0}}) \rightarrow p'(\copyqb{q_{s'.0}})}.
\end{align*}
Here, we used that $\left(g^\dagger\right)^\dagger= g$.
Overall this gives us that:
\begin{align*}
	&\copysem{G'}_{p'} =\copysem{G}_{\substack{p'_{|\extended{G}.qbs} \\ \oplus q \mapsto p'(\copyqb{pred(q_{s.0})})}} \gamma^{q_{s'.0}.gate}_{p'(\copyqb{q_{s.i}}), p'(\copyqb{c_{t.0}}) \rightarrow p'(\copyqb{q_{s'.0}})}
	\label{eq:val-sem-g}
\end{align*}

Now if $p'(\copyqb{q_{s.i}}) \neq p'(q_{s.0})$, let us prove that $\copysem{G'}_{p'}$ is null.
If $\copysem{G}_{\substack{p'_{|\extended{G}.qbs} \\ \oplus q
\mapsto p'(\copyqb{pred(q_{s.i})})}}$ is null this is clear, otherwise using \cref{lem:valprojcoeffs} we get that $\copysem{G}_{\substack{p'_{|\extended{G}.qbs} \\ \oplus q
\mapsto p'(\copyqb{pred(q_{s.i})})}}$ contains $\gamma^{q_{s'.0}.gate}_{p'(\copyqb{q_{s.0}}),
p'(\copyqb{c_{t.0}}) \rightarrow p'(\copyqb{q_{s'.0}})}$.
We hence have that $\copysem{G'}_{p'}$ is a product
which includes the factors 
$\gamma^{q_{s'.0}.gate}_{p'(\copyqb{q_{s.i}}), p'(\copyqb{c_{t.0}}) \rightarrow
p'(\copyqb{q_{s'.0}})}$ and $\gamma^{q_{s'.0}.gate}_{p'(\copyqb{q_{s.0}}),
p'(\copyqb{c_{t.0}}) \rightarrow p'(\copyqb{q_{s'.0}})}$. As $q_{s'.0}.gate$
is qfree, one of those coefficients is null, and hence so is
$\copysem{G'}_{p'}$.

Finally, if $p'(\copyqb{q_{s.i}}) = p'(q_{s.0})$, we get
\cref{lem:nullprojcoeffs} (i) trivially.  If
$\copysem{G}_{\substack{p'_{|\extended{G}.qbs} \\ \oplus q \mapsto
p'(\copyqb{pred(q_{s.i})})}}$ is null, \cref{lem:valprojcoeffs} holds trivially.
Otherwise, we use as above that $\gamma^{q_{s'.0}.gate}_{p'(\copyqb{q_{s.0}}),
p'(\copyqb{c_{t.0}}) \rightarrow p'(\copyqb{q_{s'.0}})}$ is in $\copysem{G}_{\substack{p'_{|\extended{G}.qbs} \\ \oplus q
\mapsto p'(\copyqb{pred(q_{s.i})})}}$. As $\gamma^{q_{s'.0}.gate}_{p'(\copyqb{q_{s.0}}),
p'(\copyqb{c_{t.0}}) \rightarrow p'(\copyqb{q_{s'.0}})}$ is 0 or 1, it is equal
to its squared value, and the recursion hypothesis allows us to conclude. 
\end{proof}

	\section{Evaluation Values}
\label{app:eval}
We show in \cref{tab:trade-offsrange-app} the absolute numerical results on which the relative values in \cref{tab:trade-offsrange} are based. \cref{tab:params-eval} further shows the exact parameters of each circuit used in our evaluation.

\begin{table*}[!htb]
    \caption{%
        Parameters for all examples in \cref{tab:trade-offsrange} and \cref{tab:trade-offsrange-app}.}
    \label{tab:params-eval}

    \centering
    \footnotesize
    \setlength\tabcolsep{4pt}
    \begin{tabular}{p{0.15\textwidth}p{0.8\textwidth}}%
	\textbf{Algorithm}  &  \textbf{Parameters} \\%
    \hline
    \textbf{Small} &\\
    Adder & 12 qubits per operand\\
    Deutsch-Jozsa& 10 control qubits, with oracle MCX, returning true iff the value is 1111111111 \\
    Grover's algorithm & 5 control qubits,  with oracle MCX, returning true iff the value is 1111111111\\
    IntegerComparator & 12 control qubits, comparing to $i = 463$\\
    MCRY & 12 control qubits, with rotation angle $\theta = 4$\\
    MCX & 12 control qubits\\
    Multiplier & 5 qubits for each operand, and 5 for the result\\
    PiecewiseLinearR & 6 control qubits, function breakpoints are [10, 23, 42, 47, 51, 53, 63], slopes are [39, 32, 77, 27, 77, 4, 74] and offsets are [174, 40, 110, 163, 100, 185, 130]\\
    PolynomialPauliR & 5 control qubits, polynomial coefficients are [2, 2, 2, 2, 2]\\
    WeightedAdder & 10 controls qubits, values for sum are [0, 1, 1, 5, 2, 10, 4, 4, 9, 3]\\
    \textbf{Big} &\\
    Adder & 100 qubits per operand \\
    Deutsch-Jozsa & 100 control qubits, with oracle MCX, returning true iff the value is 1111111111 \\
    Grover's algorithm & 10 control qubits,  with oracle MCX, returning true iff the value is 1111111111 \\
    IntegerComparator & 100 control qubits, comparing to $i = 878234040205782925887743338143$ \\
    MCRY & 200 control qubits, with rotation angle $\theta = 4$ \\
    MCX & 200 control qubits \\
    Multiplier & 16 qubits for each operand, and 5 for the result \\
    PiecewiseLinearR & 40 control qubits, function breakpoints are [63870600266, 81180069351, 185076947411, 350818281077, 590566882159, 677977056232, 866030640015, 949186564661, 978976427282], offsets are [46, 59, 40, 48, 54, 67, 21, 71, 22] and coefficients are [60, 59, 6, 45, 83, 44, 34, 130, 130]\\
    PolynomialPauliR & 10 control qubits, polynomial coefficients are [2, 2, 2, 2, 2, 2, 2, 2, 2, 2] \\
    WeightedAdder & 20 controls qubits, values for sum are [9, 0, 9, 10, 2, 6, 10, 6, 8, 5, 8, 7, 8, 4, 0, 0, 5, 7, 5, 6]\\
\end{tabular}
\end{table*}

\newcommand{\tradeofftablecolheader}{\# Total Qubits & \# Ancillae Qubits & \# CX Gates & \# Gates & Depth}
\newcommand{\shorttradeofftablecolheader}{Q & A & CX & G & D}

\begin{table*}[!htb]
    \caption{%
        \reqomp results for the reductions presented in \cref{tab:trade-offsrange}. We also report Unqomp results.
        Columns \textbf{Max} and \textbf{Min} report the results for the most
        aggressive settings, respectively optimizing only for number of qubits
        and optimizing only for number of gates.
        Columns \textbf{-75\%}, \textbf{-50\%}, and \textbf{-25\%} report the
        gate counts when achieving the respective ancilla reductions.
        Entries "x" indicate that a given ancilla reduction was not achieved. 
        Q is total number of qubits, A is number of ancillae, CX is number of CX gates, G is total number of gates and D is circuit depth. }
    \label{tab:trade-offsrange-app}

    \centering
    \footnotesize
    \setlength\tabcolsep{4pt}
    \newcolumntype{U}{S[table-format=2.1,table-column-width=0.1cm, group-digits = false]}
    \begin{tabular}{l|rrrrr|rrrrr|rrrrr|}%
    &    \multicolumn{15}{c|}{\textbf{Ancilla Reduction}} \\
    &    \multicolumn{5}{c}{\textbf{Max}} & \multicolumn{5}{c}{\textbf{-75\%}} & \multicolumn{5}{c|}{\textbf{-50\%}}\\
	\textbf{Algorithm}  &  \shorttradeofftablecolheader
	&\shorttradeofftablecolheader& \shorttradeofftablecolheader \\%
	\hline
    \textbf{Small} & & & & & & & & & & & & & & & \\
        \input{figures/tradeoffssmallcleanapp.csv}
        \textbf{Big} & & & & & & & & & & & & & & & \\
        \input{figures/tradeoffsbigcleanapp.csv}
    & \multicolumn{10}{c|}{\phantom{space}} & & & & & \\\\
    & \multicolumn{10}{c|}{\phantom{space}} & & & & & \\\\
    &    \multicolumn{10}{c|}{\textbf{Ancilla Reduction}} & \multicolumn{5}{c|}{\textbf{Unqomp}}\\
    &   \multicolumn{5}{c}{\textbf{-25\%}}& \multicolumn{5}{c|}{\textbf{Min}} & & & & & \\
	\textbf{Algorithm}  &  \shorttradeofftablecolheader
	&\shorttradeofftablecolheader& \shorttradeofftablecolheader \\%
	\hline
    \textbf{Small} & & & & & & & & & & & & & & & \\
        \input{figures/tradeoffssmallcleanapp2.csv}
        \textbf{Big} & & & & & & & & & & & & & & & \\
        \input{figures/tradeoffsbigcleanapp2.csv}
        \end{tabular}
\end{table*}
\makeatletter\onecolumngrid@push\makeatother
\clearpage
\makeatletter\onecolumngrid@pop\makeatother

\fi

\end{document}